%% file: paper.tex
\documentclass[runningheads]{llncs}
\usepackage[normalem]{ulem} 
\AtBeginDocument{
  }
\usepackage{algorithm}            
\usepackage{algpseudocode}        
\usepackage[table]{xcolor}
\usepackage{pifont}
\usepackage{listings}             
\usepackage{tabularx}
\usepackage{booktabs}             
\usepackage{graphicx}
\usepackage{subcaption}
\usepackage{mdframed}
\usepackage{stmaryrd}
\usepackage{multirow}
\usepackage{bussproofs}
\usepackage[inline]{enumitem}
\usepackage[nounderscore]{syntax} 
\usepackage{amsmath}
\usepackage{amssymb}
\usepackage{hyperref}
\usepackage{tikz}
\usetikzlibrary{shapes,positioning,calc,shadows}
\usetikzlibrary{decorations.pathmorphing,decorations.pathreplacing,tikzmark}
\usetikzlibrary{trees,arrows.meta}
\newcolumntype{L}{>{\raggedright\arraybackslash}X}
\newcolumntype{R}{>{\raggedleft\arraybackslash}X}
\newcolumntype{C}{>{\centering\arraybackslash}X}
\usepackage{custom}
\usepackage{mathpartir}
\usepackage{pgfplots}
\pgfplotsset{compat=1.18}
\usepackage{pgfplotstable}

\title{Inferring Empirical Sound Resource Bounds via Symbolic Execution and Linear Programming (Extended Version)\thanks{This is an extended version of the paper accepted at SAS 2026.}}
\titlerunning{Inferring Empirical Sound Resource Bounds (Extended Version)}

\author{Samuel Frontull\inst{1}\orcidID{0009-0004-1230-4666} \and
  Manuel Meitinger\inst{1}\orcidID{0009-0000-6555-0060} \and
  Georg Moser\inst{1}\orcidID{0000-0001-9240-6128}}
\authorrunning{S.~Frontull et al.}

\institute{Department of Computer Science, University of Innsbruck, Austria
  \email{\{samuel.frontull,georg.moser\}@uibk.ac.at},
  \email{manuel.meitinger@student.uibk.ac.at}}

\begin{document}

\maketitle

\begin{abstract}
	Existing approaches to resource analysis of programs can be classified into two main paradigms:
	static analysis and dynamic analysis methods.
	The former allow for formal guarantees but are inherently incomplete;
	the latter are widely applicable
	but may miss rare but characteristic (worst-case) scenarios and thus lack soundness.
	Hybrid approaches attempt to combine the strengths of both paradigms,
	thereby enabling the analysis of programs that are either too complex for purely static techniques
	or where dynamic approaches suffer from combinatorial explosion.
	In this paper, we present a novel hybrid approach that systematically derives
	upper bounds for the worst-case resource consumption of functional programs.
	Our method combines dynamic symbolic execution to exhaustively explore all possible computation paths 
	within a constrained input space with mixed-integer linear programming to derive empirically sound upper bounds.
	We have implemented the methodology in a prototype tool, dubbed~\compas{}, 
	which we made available on Zenodo\footnote{\url{https://doi.org/10.5281/zenodo.21323697}}.
\end{abstract}

\keywords{Resource Usage Analysis, Symbolic Execution, Linear Programming, Hybrid Analysis, Functional Programming, Worst-Case Resource Bounds}

\begin{center}
	\includegraphics[scale=0.16]{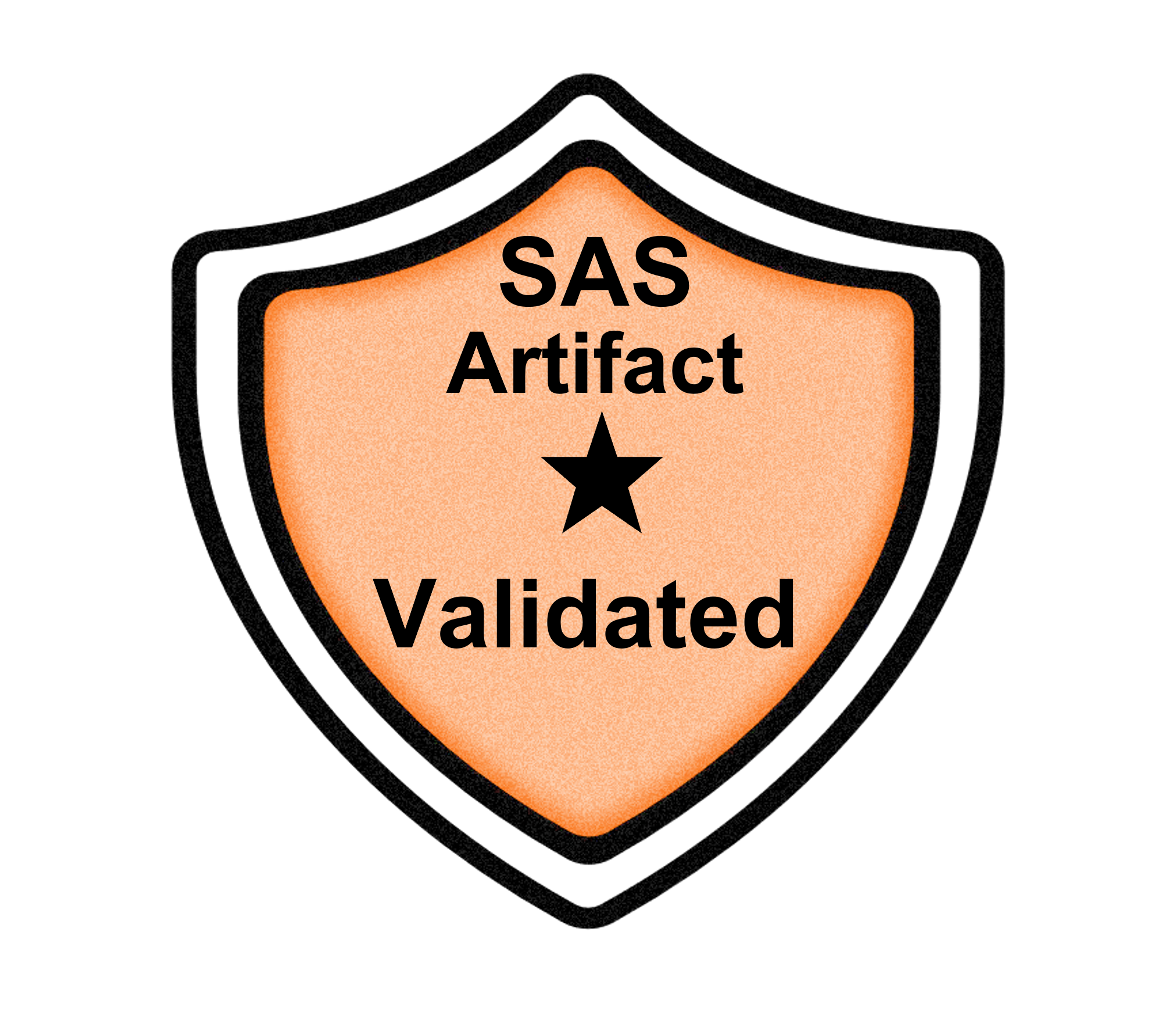} \includegraphics[scale=0.16]{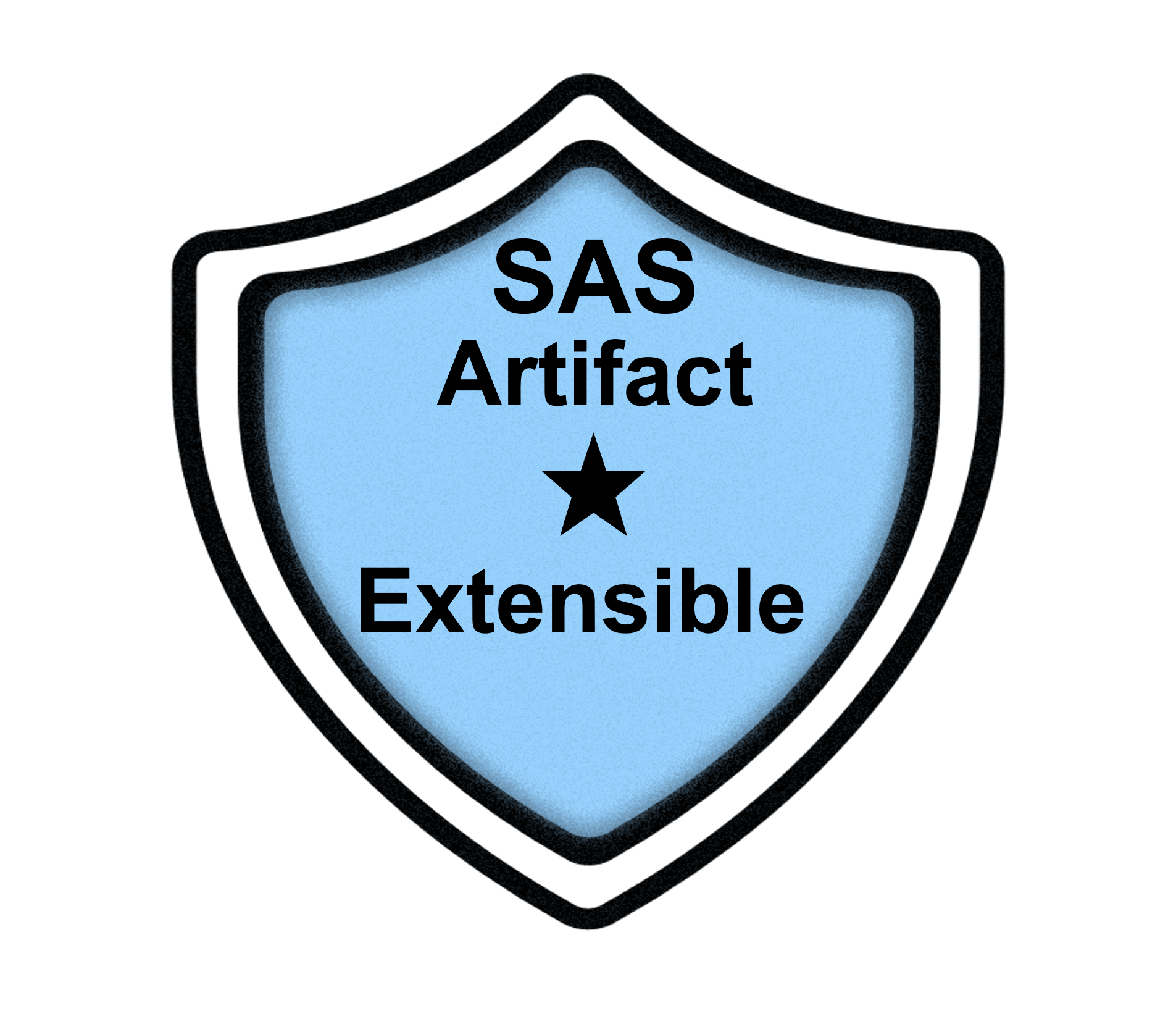} \includegraphics[scale=0.16]{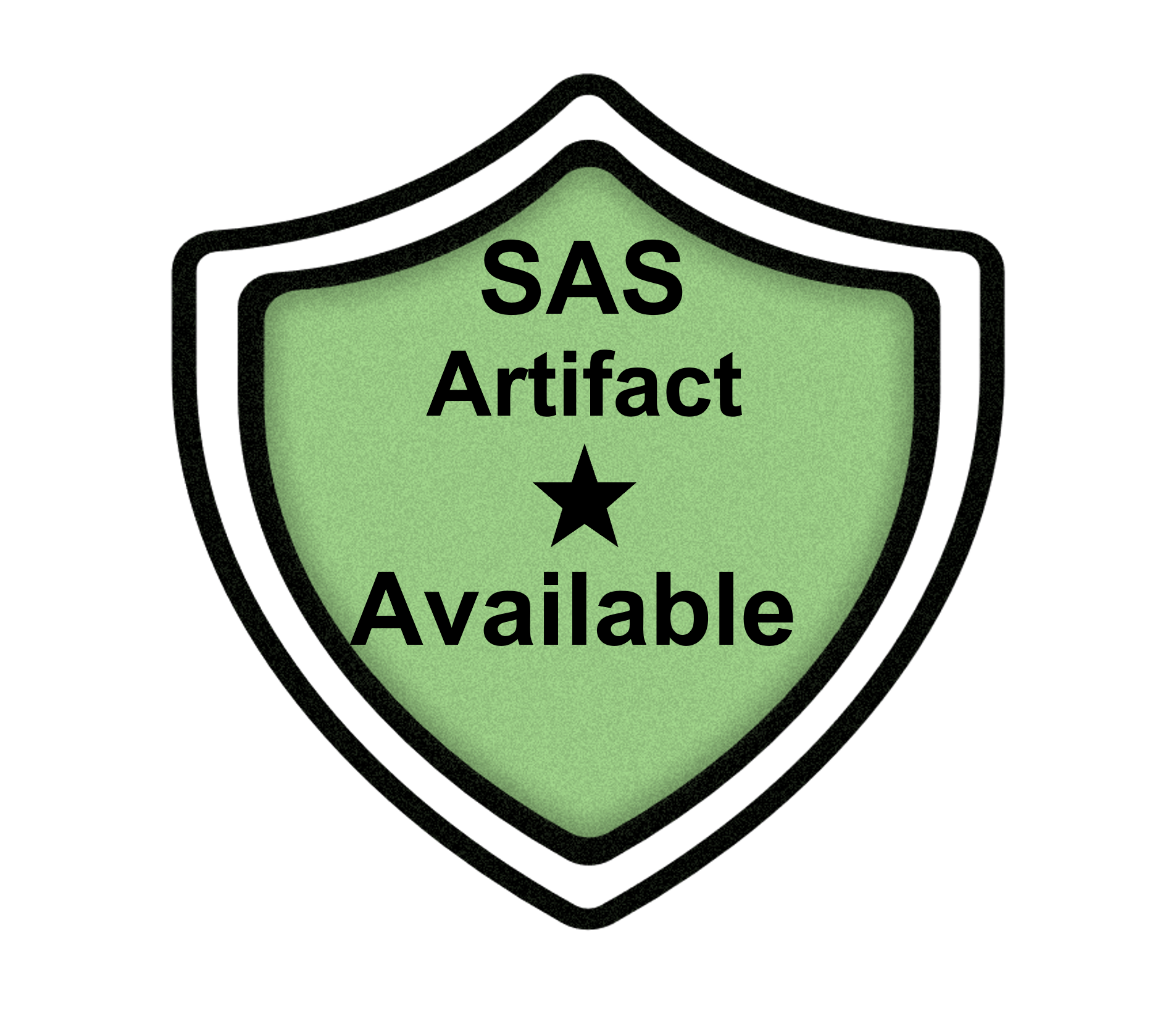}
\end{center}

\section{Introduction}\label{section:introduction}
\begin{table}[t]
	\centering
	\scriptsize
	\caption{Overview of soundness results of our method for the $10$ programs where \texttt{RaML} reported infeasibility (\infeasible) or timeout (\timeout{})
		with asymptotic bounds (where available) and timings of the hybrid approach as reported in~\cite{Pham:etal:2025}.}
	\label{tab:experiments1}
	\rowcolors{1}{gray!10}{white}
	\begin{tabularx}{\textwidth}{l c  @{\hspace{4pt}}|@{\hspace{4pt}} l r  @{\hspace{4pt}}|@{\hspace{4pt}} >{\raggedright\arraybackslash}X r c}
		\toprule\rowcolor{white}
		\textbf{\textsf{Program}} & \textbf{\textsf{RaML}}~\cite{Hoffmann:etal:2017} & \textbf{\textsf{Hybrid}}~\cite{Pham:etal:2025} & \textbf{\textsf{Time [s]}} & \textbf{\textsf{Our Method}}      & \textbf{\textsf{Time [s]}} & \textbf{\textsf{Sound}} \\
		\midrule
		\textsf{AVLTree}          & \infeasible                                      & $O((M + L) \log(M))$                           & 12.9                       & $3 + 0.64M^3 + 1.36M^2 + ML + 2L$ & 136.28                     & \sound{}                \\
		\textsf{BubbleSort2}      & \infeasible                                      & $O(M^2)$                                       & 10.1                       & $2 + 0.57M + 1.14M^2$             & 122.52                     & \sound{}                \\
		\textsf{DivBySub}         & \infeasible                                      & -                                              &                            & $1 + 2M$                          & 41.58                      & \sound{}                \\
		\textsf{FibMemo}          & \infeasible                                      & -                                              &                            & $2 + 11.95M + 1.35M^2$            & 2.26                       & \sound{}                \\
		\textsf{FoldSum}          & \infeasible                                      & -                                              &                            & $3 + 4M$                          & 4.63                       & \sound{}                \\
		\textsf{Id}               & \infeasible                                      & -                                              &                            & $2 + 2M$                          & 4.51                       & \sound{}                \\
		\textsf{MergeSortDC}      & \infeasible                                      & -                                              &                            & $2 + 13.57M \log(M)$              & 44.84                      & \sound{}                \\
		\textsf{RedBlackTree}     & \timeout{60s}                                    & $O((M + L) \log(M))$                           & 1030.9                     & $3 + M^2 + ML + 2M + 2L$          & 144.11                     & \sound{}                \\
		\textsf{RevDL}            & \infeasible                                      & -                                              &                            & $1 + M$                           & 4.46                       & \sound{}                \\
		\textsf{Round}            & \infeasible                                      & -                                              &                            & $3 + 3.25M + 0.75M^2$             & 18.51                      & \sound{}                \\
		\bottomrule
	\end{tabularx}
\end{table}

The analysis of programs with respect to their resource consumption is a fundamental problem in computer science.
Understanding these properties is crucial for selecting appropriate algorithms and data structures, optimising software, and ensuring the efficiency and scalability of applications.
Beyond this, they are also relevant for security-critical aspects, such as preventing timing attacks~\cite{Kocher:1996},
bug finding~\cite{Cadar:etal:2008,Godefroid:etal:2008} or test generation~\cite{Cadar:etal:2008,Tillmann:deHalleux:2008}.
For these reasons, resource analysis has a long-standing tradition~\cite{Wegbreit:1975}.
A central aspect in this context is the \emph{worst-case} runtime complexity of a program,
which characterizes the worst-case execution cost as a function of relevant input properties.

Approaches to automatically determining such worst-case runtime bounds
can be classified into two categories: \emph{static} and \emph{dynamic} methods.
Static methods derive cost bounds by analysing the source of a program,
dynamic methods by observing program behaviour.
Both paradigms have given rise to a variety of techniques.
Without aiming for completeness, we mention static techniques such as abstract interpretation~\cite{Cousot:Cousot:1977}, predicate transformers~\cite{AMS20}, program logics~\cite{Gulwani:2009}, recurrence solving~\cite{Wegbreit:1975} or type-based resource analysis~\cite{Hoffmann:etal:2017}.
Common approaches for dynamic analysis include profiling-based analysis~\cite{Graham:etal:1982},
empirical performance testing~\cite{Raj:1991,Kalibera:Jones:2013},
and genetic algorithm-based input generation~\cite{Bouchachia:2007,Rodrigues:etal:2018}.
While these techniques differ in their mechanisms, they share common trade-offs.
Static methods can provide formal correctness guarantees.
However, they are inherently incomplete and may fail (for sufficiently complex programs).
In contrast, dynamic methods are more flexible and can handle a broader range of programs, but face a different challenge: 
exhaustively exploring all possible executions is generally infeasible due to path explosion, 
preventing formal correctness guarantees for derived bounds.

To address these trade-offs, hybrid approaches~\cite{Wei:2018,Pham:2024,Pham:etal:2025} try to combine the strengths of static and dynamic analysis:
they aim to analyse programs that are beyond the practical reach of purely static methods,
while avoiding the scalability issues of purely dynamic techniques.
In this work, we propose a hybrid method for general resource analysis.
We model resource consumption by means of ticking, i.e. operations of interest in the source code are explicitly marked by \emph{tick} statements.
This yields a flexible, non-monotonic cost model that can represent a wide range of resources, such as runtime operations, memory allocations, and similar quantities.
By applying symbolic execution over an input specification, whose template constraints bound generated structures and whose global assertions restrict admissible inputs, we systematically explore all feasible executions of a program within the constrained input space.
Based on the ticks observed for these executions,
we employ basic statistical methods to determine an empirically sound upper bound to resource consumption, 
i.e. a bound that is guaranteed to never underestimate the costs for the input space covered by the empirical observations.

Our approach offers several advantages.
First, symbolic execution allows to explore complex, data-dependent control flow that is challenging for purely static analysis.
Second, by restricting the search space to a finite domain,
we address the problem of path explosion.
Third, by exploiting the program structure and reasoning symbolically about inputs, 
we avoid redundant exploration and thus improve analysis efficiency.
The bound synthesis based on mixed-integer linear programming (MILP)
that encodes upper bounding constraints
provides a methodological approach that enables the derivation of empirically sound upper bounds.
We focus on (higher-order) \emph{functional programs}, as they allow pure reasoning about resource consumption
without the complications of side-effects or pointers typical in imperative languages.
Nevertheless, our method is language-agnostic in principle, relying on an intermediate abstract syntax tree representation
that captures the essential program structure.

To evaluate our approach, we use benchmarks from two established sources:
the Resource Aware ML (\texttt{RaML}) benchmark suite~\cite{Hoffmann:etal:2017}, comprising $39$ programs,
and the pure OCaml programs from the recent hybrid resource-decomposition approach by Pham et al.~\cite{Pham:etal:2025}.
These benchmarks also include structurally non-trivial, data-dependent control flow for which existing static analyses in some cases fail or infer imprecise bounds.
They therefore serve as stress tests for the underlying reasoning principle and whether it can improve practical worst-case resource analysis.
Table~\ref{tab:experiments1} focuses on the $10$ \texttt{RaML} benchmarks for which \texttt{RaML} reports infeasibility or timeout.
Our method derives empirically sound concrete upper bounds for all $10$ programs.
For $3$ of these (\program{AVLTree}, \program{BubbleSort2}, \program{RedBlackTree}), the hybrid approach by Pham et al.~\cite{Pham:etal:2025} infers correct asymptotic bounds, but without concrete coefficients.

In summary, we make the following contributions:
\begin{itemize}
	\item a symbolic execution-based approach that exhaustively explores a finite input search space;
	\item a formulation of the upper bound problem as a linear programming (LP) task;
	\item an implementation of our approach in the prototype tool \compas{};
	\item an evaluation on a suite of $46$ benchmark programs taken from the literature.
\end{itemize}

We present the core methodology, formal foundations, and experimental evaluation.
For a complete presentation including the symbolic execution rules, detailed proofs, implementation details, and additional experiments, see the extended version~\cite{Frontull:etal:2026:extended}.
The prototype implementation and artifact are archived on Zenodo~\cite{frontull_2026_21323697}.

\paragraph{Outline.}
The paper is structured as follows.
Section~\ref{section:overview} provides an overview of our approach on a motivating example.
Section~\ref{section:symbolic-execution} describes our symbolic execution approach.
Section~\ref{section:linear-programming} details how we formulate the upper bound inference problem as an LP problem.
Section~\ref{section:soundness} establishes the empirical soundness of our approach by combining the correctness of the simulation with the soundness of the MILP formulation.
Section~\ref{section:experimental-evaluation} presents our experimental evaluation on the suite of benchmark programs
and briefly sketches our prototype implementation~\compas{}.
Section~\ref{section:related-work} discusses related work, and Section~\ref{section:conclusion} concludes the paper and outlines future work.

\section{Overview}
\label{section:overview}

\begin{figure}[t]
	\begin{minipage}{0.52\textwidth}
		\begin{lstlisting}[language=ml2]
  (*@$\costbox{dyade}{n,m}{n + n \cdot m}$@*)
  let rec dyade l1 l2 =
      match l1 with
      | [] -> []
      | x :: xs -> tick(1); 
          mult x l2 :: dyade xs l2
    \end{lstlisting}
	\end{minipage}
	\hfill
	\begin{minipage}{0.46\textwidth}
		\begin{lstlisting}[language=ml2]
    (*@$\costbox{mult}{n}{n}$@*)
    let rec mult n l =
        match l with
        | [] -> []
        | x :: xs -> tick(1); 
            n * x :: mult n xs
    \end{lstlisting}
	\end{minipage}
	\caption{Motivating program \program{Dyade} with upper bounds annotations.}\label{fig:dyade-code}
\end{figure}

Consider the example program \program{Dyade} shown in Figure~\ref{fig:dyade-code} taken from the
benchmark suite of \texttt{RaML}.%
\footnote{See~\url{https://www.raml.co/}.}
The main function
\raml{dyade} takes two lists, $l_1$ and $l_2$, consisting of integers and produces a list of lists of integers as output.
For each element $x \in l_1$, the function creates a copy of $l_2$ in which every element is multiplied by $x$.
To illustrate this, consider the following input lists:
\begin{center}
	$l_1 = [1; 2; 3]$ with length $|l_1| = 3$ \qquad
	$l_2 = [4; 5; 6; 7]$ with length $|l_2| = 4$
\end{center}

For these arguments, the execution of \raml{dyade l1 l2} performs three iterations of the outer pattern match.
In the first iteration, the pattern match on $l_1$ triggers \raml{tick(1)} and calls \raml{mult 1 [4; 5; 6; 7]}.
Inside \raml{mult}, the recursion processes each element of the list, resulting in four recursive calls,
each triggering \raml{tick(1)} and performing one multiplication.
This produces $[4; 5; 6; 7]$ at a cost of $1 + 4 = 5$ ticks.
The remaining iterations proceed analogously: the calls \raml{mult 2 [4; 5; 6; 7]} and \raml{mult 3 [4; 5; 6; 7]}
produce the lists $[8; 10; 12; 14]$ and $[12; 15; 18; 21]$, respectively, each incurring the same cost of $5$ ticks.
These intermediate results are combined into the final output, which is:
\[ [[4; 5; 6; 7]; [8; 10; 12; 14]; [12; 15; 18; 21]] \]
For this example, \raml{dyade l1  l2} evaluates with a total cost of $15$ ticks.
\begin{figure}[t]
	\begin{minipage}{0.45\textwidth}
		\centering
		\scriptsize
		\begin{tabularx}{0.7\linewidth}{@{}Xr@{}}
			\toprule
			\textbf{\textsf{Symbolic Execution}}        & \textbf{\textsf{Ticks}} \\
			\midrule
			$M = 0$ and $L = 0$ & $0$                     \\
			$M = 1$ and $L = 0$ & $1$                     \\
			$M = 1$ and $L = 1$ & $2$                     \\
			$M = 1$ and $L = 2$ & $3$                     \\
			$M = 2$ and $L = 1$ & $4$                     \\
			$M = 2$ and $L = 2$ & $6$                     \\
			$M = 2$ and $L = 3$ & $8$                     \\
			$M = 3$ and $L = 2$ & $9$                     \\
			$M = 3$ and $L = 3$ & $12$                    \\
			$M = 3$ and $L = 4$ & $15$                    \\
			$M = 4$ and $L = 3$ & $16$                    \\
			$M = 4$ and $L = 4$ & $20$                    \\
			                                            &                         \\
		\end{tabularx}

		Symbolic execution of $\IRfunName{dyade}(\IRvar{l_1}, \IRvar{l_2})$\\
		with $0 \leq M \leq 4$ and $0 \leq L \leq 4$.
	\end{minipage}
	\begin{minipage}{0.54\textwidth}
		\begin{tikzpicture}[scale=0.7]
			\begin{axis}[
					view={20}{22},
					xlabel={$M$},
					ylabel={$L$},
					zlabel={cost},
					grid=both,
					width=10cm,
					height=8cm,
					legend pos=north west,
					legend cell align=left
				]
				\input{include/dyade/data.tex}
				\addlegendentry{Costs}
				\addplot3[
					surf,
					domain=0:15,
					domain y=0:15,
					opacity=0.5,
					colormap={gray}{color(0cm)=(white); color(1cm)=(black)},
				]
				{x + x*y};
				\addlegendentry{Complexity Bound $M + ML$}
			\end{axis}
		\end{tikzpicture}
	\end{minipage}
	\caption{Left: Exemplary symbolic executions of \raml{dyade} with the observed worst-case cost. Right: Visualisation of costs for different input sizes. Each point $(M, L, c)$ represents the worst-case cost $c$ observed for inputs of size $M = |\IRvar{l_1}|$ and $L = |\IRvar{l_2}|$. The surface represents the inferred upper bound $M + ML$.}\label{fig:dyade-example}
\end{figure}
This concrete execution, however, is tied to one particular input pair and therefore yields only a single concrete observation:
for the lists $l_1$ and $l_2$ above, with lengths $3$ and $4$, respectively, the cost is $15$.

Repeating this experiment on such randomly sampled inputs would provide additional observations,
but unless all possible behaviours are covered, the worst-case input may remain unobserved.
Such observations cannot support general claims about worst-case behaviour
without understanding the internal mechanisms of the program.

To infer a resource bound, we have to consider all possible executions
and relate input properties with the observed costs.
This requires a systematic exploration of the input space, which we achieve by abstracting concrete inputs to relevant input properties.
For \raml{dyade}, the natural input properties are the lengths of the two lists
given as argument to the \raml{dyade} function.
We therefore seek a bound as a function of
$M = |l_1|$ and $L = |l_2|$.
This identification of the relevant input properties both bounds the search space 
and declares that the collected cost data should be indexed by the pair $(M,L)$.
In the illustrative run of Figure~\ref{fig:dyade-example}, we have constrained the input list lengths to the range
$0 \leq M \leq 4$ and $0 \leq L \leq 4$
and executed symbolically.

For the symbolic analysis of \program{Dyade}, the concrete inputs are replaced by \emph{list descriptors}.
These list descriptors (initially) represent all lists whose length lie within the specified bounds and whose elements satisfy the given element template.
In our example, the descriptor for $l_1$ therefore represents all integer lists of length between $0$ and $4$, and similarly for $l_2$.
During symbolic execution, these descriptors are refined only when the program branches on properties of these inputs.
For instance, when \raml{dyade} pattern matches on $l_1$, the analysis splits into the feasible cases induced by this test:
one path assumes that $l_1$ is empty, and another assumes that $l_1$ is non-empty.
On the non-empty path, the head is represented by a fresh symbolic integer, the tail is represented by a new list descriptor with the corresponding length information, and execution continues with the recursive call.
The calls to \raml{mult} refine the descriptor for $l_2$ in the same way.
At each branch point, SMT solving is used to check whether the path constraints remain satisfiable.%
\footnote{Satisfiability is checked via SMT, where
	we make use of Z3 (see~\url{https://www.microsoft.com/en-us/research/project/z3-3/}).}
When a symbolic path terminates, the executor serializes the accumulated cost together with the structural information stored in the descriptors.

This approach offers two fundamental advantages over purely data-driven black-box approaches:
\begin{enumerate*}[label=($\roman*$)]
	\item an entire equivalence class of concrete executions with identical behaviour is captured by a single symbolic execution;
	\item different executions are explored incrementally (without restarting from the initial state), avoiding redundant computations.
	\item only feasible paths are explored, and infeasible paths are discarded early.
\end{enumerate*}

Figure~\ref{fig:dyade-example} shows these grouped symbolic-execution results; no additional concrete executions are performed.%
\footnote{For identical input properties, only the maximum cost is retained.}
Since all feasible outcomes within the specified bounds are explored, every possible computation path in the bounded input space is represented by some terminating symbolic path.%
\footnote{The extended version~\cite{Frontull:etal:2026:extended} derives the first two observations step by step.}
A sufficiently large collection of systematic symbolic executions can reveal the worst-case complexity, which can be expressed as a function of the input properties.
To accomplish this, we formulate the problem as a linear programming (LP) problem, 
which considers multiple candidate templates (e.g., linear combinations of $1$, $M$, $L$, $ML$) and uses LP to choose coefficients upper-bounding all observations.
For \program{Dyade}, this recovers the exact bound $M + ML = |l_1| + |l_1| \cdot |l_2|$.

\section{Resource Analysis via Symbolic Execution}
\label{section:symbolic-execution}

Symbolic execution executes a program with symbolic values that abstract concrete ones by capturing selected properties.
This allows producing observations for resource-bound inference that are guaranteed to cover all executions within the input domain described by the templates and global assertions.
Since bounds are functions of selected properties, symbolic execution tracks these properties throughout.
Starting from an input specification, the executor replaces input arguments with descriptors generated from the template constraints and seeds the initial state with the global assertions.
Program operations refine descriptors: pattern matching splits states, conditionals add path constraints, and \raml{tick} commands update cost.
When a path terminates, the executor extracts input measures and cost.

We have implemented this for the core language in Figure~\ref{fig:IRsyntax}.
The core language serves as an intermediate representation for functional programs:
it is close to the RaML-style fragment used in our benchmarks, which simplifies comparison with static resource-analysis tools and the validation of inferred bounds, but it is not necessarily tied to it.
Other front ends can in principle translate source programs into the same representation.

Operating on this representation gives the executor a single source-level semantics for descriptors, path constraints, and explicit cost annotations.
The \raml{tick} construct is part of the analysed program and can express the cost model chosen by the user, such as recursive calls, comparisons, allocations, or memory accesses.
While established symbolic execution systems such as KLEE~\cite{Cadar:etal:2008},
SAGE~\cite{Godefroid:etal:2008}, Pex~\cite{Tillmann:deHalleux:2008},
Symbolic PathFinder~\cite{Noller:etal:2018,Luckow:etal:2020}, or binary-analysis frameworks such as angr~\cite{shoshitaishvili:2016} could in principle be adapted to collect similar measurements,%
\footnote{These systems are designed for test generation, bug finding, and coverage-oriented exploration, with interfaces geared toward test cases, path constraints, or low-level states. Recovering quantities such as list length, tree size, or nested structural summaries after compilation would require additional encodings and tool-specific processing. Our executor is therefore less general, but directly produces the data required by the subsequent bound inference.}
our approach requires a custom executor to maintain this source-level focus.

\begin{figure}[t]
	\small
	\begin{langsyntax}[\scriptsize]{Core Language Syntax}
		\begin{minipage}{0.4\textwidth}
			\begin{displaymath}
				\begin{array}{rcl}
					\tau & ::=  & x \mid n \mid b \mid \IRlist{}                                           \\
					     & \mid & \oplus_1(\tau_1) \mid \tau_1 \oplus_2 \tau_2                             \\
					     & \mid & \IRtrue \mid \IRfalse                                                    \\
					     & \mid & \IRifelse{\tau_{\text{cond}}}{\tau_{\text{if}}}{\tau_{\text{else}}}      \\
					     & \mid & \IRcons{\tau_{\text{head}}}{\tau_{\text{tail}}}                          \\
					     & \mid & \IRsplit{\tau_{\text{val}}}{\tau_{\text{nil}}}{h}{t}{\tau_{\text{cons}}} \\
					     & \mid & \IRpack{C}{[\vec{\tau}]}                                                 \\
					     & \mid & \IRunpack{\tau_{\text{val}}}{[(C_1,\vec{x}_1,\tau_1),                    \\
					     &      & \qquad\qquad\qquad\dots,(C_n,\vec{x}_n,\tau_n)]}                         \\
					     & \mid & \IRlet{x}{\tau_1}{\tau_{\text{ret}}}                                     \\
					     & \mid & \IRfun{[\vec{x}]}{\tau_{\text{body}}}                                    \\
					     & \mid & \IRcall{\tau_{f}}{\vec{\tau_{\text{args}}}}                              \\
					     & \mid & \IRtick{c}{\tau_{\text{ret}}}                                            \\
					     & \mid & \IRraise{m}                                                              \\
				\end{array}
			\end{displaymath}
		\end{minipage}
		\hfill
		\begin{minipage}{0.59\textwidth}
			\begin{displaymath}
				\begin{array}{rcll}
					\multicolumn{4}{l}{\text{Metavariables (syntax):}}                                           \\
					x, h, t        & \in & \mathcal{V}                          & \text{(Variables)}             \\
					n              & \in & \mathbb{Z}                           & \text{(Integers)}              \\
					c              & \in & \mathbb{R}                           & \text{(Costs)}                 \\
					m              & \in & \text{String}                        & \text{(Messages)}              \\
					C              & \in & \text{Constructors}                  & \text{(Tags)}                  \\
					\oplus_1       & \in & \{-, \sqrt{}, \sqrt[3]{} \}          & \text{(Unary Operations)}      \\
					\oplus_2       & \in & \{+, -, *, /, \% \} \cup             & \text{(Binary Operations)}     \\
					               & \in & \{<, \leq, =, \neq, \geq, >, \&, |\} & \text{(Boolean Operations)}    \\[1.5ex]
					\multicolumn{4}{l}{\text{Runtime values (execution):}}                                       \\
					\symbolic{v}   & \in & \text{Constants, Z3.Expr}            & \text{(Symbolic)}              \\
					\descriptor{D} & \in & \text{Descriptor}                    & \text{(Structural/Functional)}
				\end{array}
			\end{displaymath}
		\end{minipage}
	\end{langsyntax}
	\caption{Core language syntax and metavariables for both syntactic constructs and runtime values.}
	\label{fig:IRsyntax}
\end{figure}

The core language is operationally untyped:
terms do not carry type annotations, and symbolic execution does not perform type checking.
Types enter the analysis through the input specification.
They are meta-level information used to describe the input arguments
and to determine which kind of symbolic value or descriptor is constructed for each argument.
Symbolic execution operates on two distinct levels of abstraction.
At the \emph{specification level}, the user provides \emph{templates} that describe the shape and constraints of the input space to be explored.
At the \emph{runtime level}, the symbolic executor manipulates \emph{values} derived from these templates.

A \emph{template} $T = (\theta, \templateAssertionSet{})$ specifies the input domain and is defined as a pair consisting of a type $\theta$ and template constraints $\templateAssertionSet{}$.
A type $\theta$ is defined by the grammar
\[
	\theta ::= \DTInt \mid \DTBool \mid \DTList(\theta)
	\mid \DTTuple(C,\theta_1,\ldots,\theta_k)
	\mid \DTTree(L,N,\theta_1,\ldots,\theta_k).
\]
Here, $C$ denotes a tuple or algebraic-data constructor, and $L$ and $N$ denote the leaf and node constructors of a tree type.
The template constraints $\templateAssertionSet{}$ guide input generation: they may specify structural bounds (e.g., minimum and maximum list lengths or node counts), element templates, and optional local restrictions on generated values.
We view $\templateAssertionSet{}$ as a type-specific partial record: $\templateAssertionSet{}[\alpha]$ denotes the value of attribute $\alpha$ when it is present; omitted optional attributes, such as length bounds, are interpreted as unconstrained, whereas attributes required by the chosen template form must be present for the input specification to be well formed.
For instance, in the \program{Dyade} example, the template for the first input list can be written as
\[
	T_{l_1} = \bigl(\DTList(\DTInt),
	\{\mathit{minLen}=0,\mathit{maxLen}=4,\mathit{itemTpl}=(\DTInt,\varnothing)\}\bigr),
\]
where the item template $(\DTInt,\varnothing)$ allows creating unconstrained integer elements.

The type system is intentionally first-order: function types are excluded from input specifications.
Symbolic execution is therefore performed over first-order entry-function inputs: primitives and bounded algebraic data.
Higher-order functions are still supported inside the analysed program through closures and partial applications, but an unknown function is not currently treated as a symbolic input.\footnote{For instance, an entry point such as \raml{map f xs} can be analysed when the function argument is fixed by the analysed wrapper or is a program-defined closure; it is outside the current input model when that function argument itself should range over arbitrary functions.}
Comparable higher-order complexity analyses also make higher-order structure explicit before applying first-order complexity reasoning~\cite{Avanzini:etal:2015}.

During symbolic execution, templates are translated into \emph{runtime values}.
We distinguish two types of runtime values:
\begin{itemize}
	\item \emph{Symbolic Z3 expressions} (e.g., $\symbolic{x}$ representing an unknown integer or boolean)
	\item \emph{Structural descriptors} (list, tuple, or tree descriptors)
\end{itemize}
Primitive input types ($\DTInt$, $\DTBool$) generate symbolic Z3 values, whereas composite input types ($\DTList$, $\DTTuple$, $\DTTree$) generate structural descriptors.

\emph{Symbolic Z3 expressions} can represent fully known, partially known or unknown values.
For example, when a program branches on the first element of a list being greater than 5, the symbolic executor introduces a fresh symbolic variable $\symbolic{x}$ representing the unknown first element and records the path constraint $\symbolic{x} > 5$ in the path assertions for the branch where the condition holds.

\emph{Structural descriptors} are runtime values that track and constrain structural properties of input arguments
and control the exploration of the input space during symbolic execution.
Each descriptor maintains information about the constraints specified in its template (e.g., length bounds) 
and tracks via \emph{input pointers} which parts of the structure have been accessed during the current execution path.
An \emph{input pointer} is a canonical dot-separated identifier rooted at an input argument, for example a particular list element, tuple field, or tree node.

Our implementation supports three kinds of structural descriptors for symbolic inputs: \emph{list descriptors}, \emph{tuple descriptors}, and \emph{tree descriptors}.
In addition, it uses \emph{functional descriptors} at runtime to represent closures and partial applications.

\emph{List descriptors} represent lists with possibly unknown length, bounded by lower and upper length constraints derived from the template. 
A list descriptor is written as $\descriptiveList{o}{p}{m}{M}{T}$, where \(o\) denotes the (current) offset tracking how many elements have been accessed, \(p\) is the input pointer uniquely identifying this list, \(m\) and \(M\) denote the minimum and maximum length bounds from the template, and \(T\) is the element template specifying the type and constraints for list elements.
Thus, the \program{Dyade} template above is initially represented at runtime as $\descriptiveList{0}{\texttt{"l1"}}{0}{4}{T_{\mathit{item}}}$, where $T_{\mathit{item}}=(\DTInt,\varnothing)$.

\emph{Tuple descriptors} represent fixed-structure tuples where each position has a predetermined item type specified by the template. 
A tuple descriptor is written as $\descriptiveTupleS(p, n, C, [T_1,\dots,T_n])$, where $p$ is the input pointer, $n$ is the length (number of items), $C$ specifies the tuple's type name (algebraic data type constructor), and $[T_1,\dots,T_n]$ is the array of element templates. 
This allows tuples with heterogeneous element types (e.g., $[\DTInt, \DTBool, \DTList]$).

\emph{Tree descriptors} represent tree-structured data with recursive references, constrained by node bounds from the template.
A tree descriptor is written as $\descriptiveTreeS(s, r, m, M, L, N, [T_1,\dots,T_n])$, where $r$ is the root pointer, $s$ is the sub-pointer under $r$, $m$ and $M$ are the minimum and maximum node bounds from the template, $L$ and $N$ the leaf and node type names, and $[T_1,\dots,T_n]$ are the element templates.
A separate \texttt{TreeCache} tracks generated \texttt{nodeCount} and \texttt{leafCount} during execution.

\emph{Functional descriptors} encapsulate a function body, the captured environment at the moment the function was defined, and any already provided arguments if the function value was used in partial \textsc{Call}.
This allows for higher-order functions and currying.

\subsection{Dynamic Symbolic Execution}
\label{section:symbolic-execution:semantics}

Our implementation adopts \emph{dynamic symbolic execution}---aka \emph{concolic execution}---which allows the analysis to follow concrete execution paths while simultaneously maintaining symbolic constraints.
The \emph{symbolic execution} of a program $\mathcal{P}$
is initiated with an input specification
$\mathcal{I} = (\mathcal{P}, \IRfunName{main}, \argspecificationList{}, \initialAssertionSet{})$,
where $\mathcal{P}$ is the functional program, $\IRfunName{main}$ is the entry function to analyse,
$\argspecificationList{} = [\iota_1, \ldots, \iota_k]$ the specification of the input arguments, 
and $\initialAssertionSet{}$ is the set of initial global assertions.

Each argument specification $\iota_i = \argspecification{x_i}{\theta_i}{\templateAssertionSet{}_i}$ binds the parameter name $x_i$ to a template with argument type $\theta_i$ and template constraints $\templateAssertionSet{}_i$.
During initialisation, each specification is translated into a symbolic value or structural descriptor bound to its corresponding parameter.
We use the notation $\trcore{\mathcal{P}}$ to denote the representation of a functional program $\mathcal{P}$ in the core language.
The symbolic execution of $\trcore{\mathcal{P}}$
proceeds via a continuation-passing style interpreter that processes frames $\exframe{\tau}{ \sigma}{\kappa}$, where $\tau$ is a core language term, $\sigma$ is the current state, and $\kappa$ is the continuation that either schedules the next frame or forwards the result to the parent frame.
Each execution is characterised by a \emph{state} $\sigma = \exstate{\rho}{\assertionSet{}}{\mathcal{S}}{\mathcal{C}}$ consisting of:

\begin{itemize}
	\item an \emph{environment} $\rho: \text{Var} \to \mathcal{V}$ mapping variables to symbolic values,
	\item an \emph{assertion set} $\assertionSet{}$ containing the current path constraints as Z3 boolean expressions,\footnote{If execution starts with $\initialAssertionSet{}$, then every reached state has assertions $\assertionSet{} = \initialAssertionSet{} \cup \Delta$ for the branch constraints $\Delta$ accumulated along that path. Local value constraints generated from an argument template, such as integer minimum/maximum bounds or optional SMT assertions on a generated element, are stored with that value and merged into $\assertionSet{}$ when the value is brought into the execution state.}
	\item a \emph{structural cache} $\mathcal{S} = (\mathcal{S}_{\text{list}},\mathcal{S}_{\text{tuple}},\mathcal{S}_{\text{tree}})$ storing already generated list elements, tuple fields, and tree nodes by input pointer,
	\item a \emph{cost accumulator} $\mathcal{C} \in \mathbb{R}$ tracking resource consumption.
\end{itemize}

The structural cache makes descriptor-based execution consistent.
While a descriptor describes which parts of an input may still be generated, the cache records which parts have already been generated along the current execution path.
For lists, the cache component $\mathcal{S}_{\text{list}}$ maps input pointers to the symbolic elements already generated for the corresponding list and records whether the end of the list has been fixed.
This ensures that repeated accesses to the same list element reuse the same symbolic value.
The tuple and tree components of the cache play the analogous role for tuple fields and tree nodes.
The analysis starts with
the frame $\initialframe$,
with the starting state $\sigma_0 = \exstate{\rho_{\text{init}}}{\initialAssertionSet{}}{([],[],[])}{0}$
and the initial continuation $\idcontinuation{}$ is the identity $\lambda xyz.xyz$.
The initial environment $\rho_{\text{init}}$ is initialised with closures for all globally defined functions.
For this,
we need an auxiliary function $\mathit{extractFuns}$ that operates on a program $\mathcal{P}$ and
extracts all top-level function bindings, translating each to its core language representation:
\begin{align*}
	\mathit{extractFuns}(\raml{let}~f_1~\vec{x_1}~\raml{=}~e_1\raml{;;}~\dots~\raml{;;}~\raml{let}~f_n~\vec{x_n}~\raml{=}~e_n\raml{;;})
	 & = \bigcup_{i = 1}^n \{(f_i, \vec{x_i}, \trcore{e_i})\}
\end{align*}
$\rho_{\text{init}}$ is then constructed in two steps.
First, the global environment $\rho_{\text{glob}}$ is constructed%
\footnote{The self-reference to $\rho_{\text{glob}}$ in each closure enables recursive and mutually recursive function definitions, as all functions can reference each other through the shared global environment.}
by mapping each top-level function in $\mathcal{P}$ to a closure:
\[
	\rho_{\text{glob}} = \{ f \mapsto \closure{\vec{x}}{\tau_b}{\rho_{\text{glob}}} \mid (f, \vec{x}, \tau_b) \in \mathit{extractFuns}(\mathcal{P}) \}
\]
Second, the \emph{symbolic input descriptors} are constructed from the argument specifications.
We define the auxiliary function $\mathit{buildValue}$ that maps an argument specification $\iota = \argspecification{x}{\theta}{\templateAssertionSet{}}$ to the corresponding symbolic value or descriptor by pattern matching on the type $\theta$ and interpreting the template constraints $\templateAssertionSet{}$:
{\small
\[
	\buildValue{\iota} :=
	\begin{cases}
		\mathsf{Z3.IntConst}(x)                                                                                                                              & \text{if } \theta = \DTInt   \\
		\mathsf{Z3.BoolConst}(x)                                                                                                                             & \text{if } \theta = \DTBool  \\
		\descriptiveList{0}{x}{\templateAssertionSet{}[\mathit{minLen}]}{\templateAssertionSet{}[\mathit{maxLen}]}{\templateAssertionSet{}[\mathit{itemTpl}]} & \text{if } \theta = \DTList  \\
		\descriptiveTuple{x}{\templateAssertionSet{}[\mathit{len}]}{\templateAssertionSet{}[\mathit{type}]}{\templateAssertionSet{}[\mathit{items}]}         & \text{if } \theta = \DTTuple \\[1ex]
		\begin{aligned}
			\descriptiveTree{\mathit{subPointer}(x,\templateAssertionSet{}[\mathit{root}])}{\templateAssertionSet{}[\mathit{root}]}{\templateAssertionSet{}[\mathit{minNodes}]}
			{ \\\templateAssertionSet{}[\mathit{maxNodes}]}{\templateAssertionSet{}[\mathit{leaf}]}{\templateAssertionSet{}[\mathit{node}]}{\templateAssertionSet{}[\mathit{items}]}
		\end{aligned}
		                                                                                                                                                     & \text{if } \theta = \DTTree
	\end{cases}
\]
}
For primitive types (\DTInt, \DTBool), $\mathit{buildValue}$ creates symbolic Z3 expressions and attaches any local value constraints, such as integer minimum/maximum bounds or optional SMT assertions specified on that argument template. For composite types (\DTList, \DTTuple, \DTTree), it constructs structural descriptors with the appropriate parameters, such as length or node bounds.
These descriptor bounds are enforced lazily when execution inspects the descriptor.
For lists, $\templateAssertionSet{}[\mathit{itemTpl}]$ is the template passed recursively to $\mathit{buildValue}$ when a fresh element is forced, which also covers nested lists.
Generated values are inserted into the corresponding component of the structural cache, so later accesses to the same input pointer reuse the same symbolic value.
Similar mechanisms apply to tuple unpacking and tree traversal, where descriptors guide which structural patterns are explored and how deeply recursion proceeds. This descriptor-based approach enables our method to systematically enumerate all execution paths corresponding to inputs of varying shapes and sizes, up to the specified bounds.
The input argument environment is constructed by applying $\mathit{buildValue}$ to each of the $k$ argument specifications $\argspecificationList{} = [\iota_1, \ldots, \iota_k]$:
\[
	\rho_{\text{input}} = \bigcup_{i=1}^{k} \{ x_i \mapsto \buildValue{\iota_i} \}
	\quad \text{where } \iota_i = \argspecification{x_i}{\theta_i}{\templateAssertionSet{}_i} \in \argspecificationList{}
\]
Global assertions are separate from the template constraints. They are constructed by $\mathit{buildAssertions}$ on the top-level input specification, which allows the user to specify SMT-LIB conditions referencing arguments and their subcomponents through input pointers rooted at the argument names. The referenced pointers are declared as Z3 constants according to the template type found at that pointer and used in \texttt{Z3.ast\_from\_string}, turning the SMT-LIB text into internal Z3 expressions. Thus, a global assertion may constrain a component whose symbolic value has not yet been materialised: when symbolic execution later forces the corresponding descriptor, the generated value is created at the same input pointer. \initialAssertionSet{} is initialised as the set of these top-level conditions.%
\footnote{The implementation also supports SMT assertions attached to nested argument templates. In the presentation above, these are treated as local value constraints contributed by the corresponding template constraints $\templateAssertionSet{}$.}
The final initial environment $\rho_{\text{init}} = \rho_{\text{glob}} \cup \rho_{\text{input}}$ combines the global function environment with the input descriptors
where the union operator $\cup$ merges the two environments, giving precedence to $\rho_{\text{input}}$ in case of name conflicts (though in well-formed programs, function names and argument names are distinct).
Starting with the initial frame, the symbolic execution engine explores all feasible execution paths
$\Pi = \initialframe \leadsto^* \tau(\symbolic{v},\assertionSet{}, \mathcal{S}, \mathcal{C})$
by repeatedly applying the execution rules $\leadsto$ defined for each core language construct.
The symbolic executor systematically explores this state space, ensuring all reachable execution paths satisfying the template constraints and initial global assertions are explored.
Whenever there is a choice of continuation, we choose the identity $\idcontinuation{}$ for simplicity.
For the sake of brevity, we provide the full set of execution rules in the Appendix~\ref{appendix:execution-rules}.

\subsection{Extracting Relevant Information From Symbolic Executions}
\label{section:symbolic-execution:grouping}

Every successfully completed execution path $\pi \in \Pi$ terminates with a final frame
$(\symbolic{v},\assertionSet{}, \mathcal{S}, \mathcal{C})$
where $\symbolic{v}$ is the symbolic result value, $\assertionSet{}$ are the accumulated path constraints,
$\mathcal{S}$ is the structural state tracking input properties, and $\mathcal{C} \in \mathbb{N}$ is the accumulated cost.
We associate each path $\pi$ with a \emph{dimension vector}\
$x_\pi = (d_1,\ldots,d_m)$, which captures the relevant input size descriptors\
for the explored execution. The components of $x_\pi$ are extracted from the\
structural state $\mathcal{S}$ according to the input specification $\argspecificationList{}$. Specifically,\
for each input argument, we select the properties relevant for cost estimation,\
such as the number of accessed elements for list arguments\
($S_{\mathrm{list}}[l.\mathrm{length}]$), the number of visited nodes for tree\
arguments ($S_{\mathrm{tree}}[t.\mathrm{nodes}]$), and the concrete values of\
integer arguments using the respective input pointers. Thus, $x_\pi$ represents an abstraction of the input instance\
containing only the dimensions relevant to the inferred cost bound.
For arguments with nested structures (e.g., lists of lists or trees with lists), $\mathcal{S}$ contains separate input pointers for each sub-structure accessed during execution (e.g., $\texttt{a.1.length}$, $\texttt{a.2.length}$, etc.).
To obtain a fixed-length dimension vector, we aggregate these nested dimensions by computing the \emph{maximum} over all sub-structures of the same type.
This extraction process produces one primary component per top-level input argument plus additional secondary components for the maximum sizes of nested structures, when present.
Each feasible path $\pi$ yields an observation $D_\pi = (\vec{x}_\pi, \mathcal{C})$ pairing its dimension vector with the observed cost.
The complete set of observations is
\[
	\mathcal{D} = \{ (\vec{x}_\pi, \mathcal{C}) \mid \pi \in \Pi \}.
\]
Multiple paths may produce the same dimension vector $\vec{x}$ but differ in cost (e.g., due to different branches taken or data-dependent control flow).
For resource analysis, we require the \emph{worst-case cost} for each distinct input shape.
We therefore group observations by dimension vector and retain only the maximum cost for each unique $\vec{x}$:
\[
	\mathcal{D}_{\text{max}} = \bigl\{(\vec{x}, c_{\text{max}}) \mid \vec{x} \text{ occurs in } \mathcal{D} \text{ with } c_{\text{max}} = \max\{c \mid (\vec{x}, c) \in \mathcal{D}\}\bigr\}.
\]
This grouping ensures that $\mathcal{D}_{\text{max}}$ contains exactly one observation per unique input shape, associating each shape with its worst-case observed cost.
The set $\mathcal{D}_{\text{max}}$ is then used for bound inference.
\section{Bound Estimation via Mixed Integer Linear Programming}
\label{section:linear-programming}
As described in Section~\ref{section:symbolic-execution:grouping}, symbolic execution produces a set of worst-case executions $\mathcal{D}_{\text{max}}$ where each pair $(\vec{x}, c) \in \mathcal{D}_{\text{max}}$ associates a unique input shape $\vec{x}$ (dimension vector) with the maximum observed cost $c$ for that shape.
We introduce a sorted list $l_{\mathcal{D}_{\text{max}}}$ containing these observations in lexicographic order by input shape, which enables constraints on residuals (described below).
Let $N = |\mathcal{D}_{\text{max}}|$ denote the number of observations.
In the following, we formulate the bound inference problem as a mixed integer linear program.

\subsection{Problem Formulation}

We formulate the problem of inferring an upper bound from $\maxCosts{\mathcal{D}}$ as a constrained optimization problem $\mathit{InferBound}(\maxCosts{\mathcal{D}})$ that finds coefficients for a candidate bound \(f(\vec{x})\) such that it upper-bounds all observed costs.
Candidate bounds are represented as linear combinations of basis functions:
\[
	f(\mathbf{x}) = \sum_{i=0}^{n} a_i \phi_i(\mathbf{x}),
\]
where $\phi_i(\mathbf{x})$ are predefined basis functions (e.g., $1$, $x$, $x^2$, $x\log x$, $xy$) and $a_i$ are the coefficients to be inferred.
We define constraints for:
\begin{enumerate*}[label=($\roman*$)]
	\item non-negative coefficients to ensure the bound is non-decreasing,
	\item upper bounding the costs, and
	\item residuals with respect to $l_{\maxCosts{\mathcal{D}}}$.
\end{enumerate*}
\paragraph{Non-negative Coefficients}
To ensure non-negativity of the bound coefficients, we require
$
	a_i \geq 0 \text{ for all } i \in \{0, 1, \ldots, n\}
$.
This ensures that the bound is monotonically non-decreasing as input size increases.
While restrictive for some programs, this assumption is reasonable for many algorithms when considering worst-case scenarios.

\paragraph{Upper Bounding Constraints}
For each observation \((\vec{x}_i, c_i)\), we require:
\[f(\vec{x}_i) \geq c_i \quad \text{for all } i \in \{0, 1, \ldots, N\}.\]
For the minimization formulation (standard form for linear programming), this is equivalently written as $-f(\vec{x}_i) \leq -c_i$.
\begin{figure}[t]
	\centering
	\begin{subfigure}[b]{0.46\textwidth}
		\centering
		\input{include/residuals/1.tex}
		\subcaption{Upper Bound without Constraints on Residuals.}\label{fig:residuals-1}
	\end{subfigure}
	\hfill
	\begin{subfigure}[b]{0.46\textwidth}
		\centering
		\input{include/residuals/2.tex}
		\subcaption{Upper Bound with Constraints on Residuals.}\label{fig:residuals-2}
	\end{subfigure}
	\caption{Effect of Constraints on Residuals. The red line represents the upper bound, while the black dots indicate the observed costs. In (a), without constraints on residuals, the bound deviates from the data before aligning with the final observation (residuals at some point decrease with increasing input size). In (b), with constraints on residuals, the bound is forced to stay closer to the costs, resulting in a tighter fit and improved generalisation.}
	\label{fig:residuals}
\end{figure}

\paragraph{Constraints on Residuals}
To encourage the bound to diverge from the costs as the input size increases, we introduce constraints on the residuals.
This serves to promote better generalisation.
A motivation for this is shown in Figure~\ref{fig:residuals}, where two valid upper bounds are depicted: one without constraints on the residuals (Figure~\ref{fig:residuals-1}) and one with such constraints (Figure~\ref{fig:residuals-2}).
The \emph{residual} for observation $(\vec{x}_i, c_i)$ is defined as $r_i = f(\vec{x}_i) - c_i$, representing how much the bound exceeds the cost. We require that residuals do not decrease as input size increases (encouraging the bound to diverge from observations), unless the bound exactly matches an observation. This can be expressed as:
\[f(\vec{x}_{j}) - f(\vec{x}_{i}) \geq c_{j} - c_i \quad \text{or} \quad f(\vec{x}_{j}) = c_{j} \quad \text{for all } i < j\]
This requires the introduction of binary variables to distinguish between the two cases, leading to a MILP.
We use the big-M formulation to encode this condition:

\begin{align}
	 & (f(\vec{x}_{j}) - c_{j}) - (f(\vec{x}_{i}) - c_i) \geq -M \, b_i,                                                                   \\
	 & f(\vec{x}_{j}) - c_{j} \leq M \, (1 - b_i),                                                                                         \\
	 & \text{for all } (\vec{x}_i, c_i), (\vec{x}_j, c_j) \in l_{\maxCosts{\mathcal{D}}} \text{ with } i < j \text{ and } b_i \in \{0,1\},
\end{align}

Here, each $b_i$ is a binary variable and $M$ is a ``sufficiently large constant'' which we set to the maximum observed cost.%
\footnote{The choice of \(M\) is crucial; it must be large enough to not restrict the feasible region but not so large as to cause numerical instability in the solver.}
When $b_i = 0$, the second constraint is inactive, and the first enforces non-decreasing residuals.
When $b_i = 1$, the first constraint is inactive, and the second enforces $f(\vec{x}_{j}) \leq c_{j}$, which combined with the upper-bounding constraint $f(\vec{x}_{j}) \geq c_{j}$ yields exact matching.
This formulation ensures that the bound either maintains or increases its deviation from the observed costs as the input size increases, unless it exactly matches an observed cost.
The downside is that these constraints transform the linear program into a mixed-integer linear program, which is NP-hard in general~\cite{Karp:1972}. Since we add \(N-1\) binary variables, the search space grows exponentially with the number of observations.
We did not encounter performance issues in our experiments due to the relatively small number of executions, but this could become a bottleneck for larger datasets.

\subsection{Candidate Bounds}

There may be multiple bounds that satisfy the constraints outlined above.
Depending on the problem, a linear, quadratic, or even cubic bound may be required to meet the observations under the given restrictions. However, since only linear combinations of decision variables are allowed, we construct the bound as a combination of predefined basis functions (constant, identity, square, cube, log, etc. and combinations of these). These basis functions serve as building blocks that the solver can combine to form bounds of arbitrary degree.
To enable this, the corresponding basis function values are precomputed and provided to the solver in the form of a matrix. Each column of the matrix represents a basis function, and each row corresponds to an observation. The solver then determines the optimal coefficients for each basis function (some of which may be zero if the corresponding component is not relevant).
For example, to test whether the costs can be described by a bound of the form
$
	f(x, y) = c + x + x y + y^2,
$
the matrix $ [[1, x_1, x_1 y_1, y_1^2], [1, x_2, x_2 y_2, y_2^2], \dots, [1, x_n, x_n y_n, y_n^2]]$ can be passed to the solver:
The user can specify a list of potential candidate bounds to be tested, which can include bounds of various degrees and numbers of variables.

\subsection{Solving the Mixed Integer Linear Program}

We search for the best bound by iterating over potential candidate bounds for each degree \(d\) applicable to the number of variables in the input specification.
The algorithm selects the solution with minimal residual sum (difference between bound and costs).
To avoid overfitting, we prioritise bounds with fewer non-zero coefficients and minimise $k + \sum_{i=0}^{k} a_i$.
We use the CBC solver~\cite{Forrest:etal:2024} to solve the generated MILPs.
If a feasible solution is found, the solver returns the coefficients $a_0, a_1, \ldots, a_k$, which define the upper bound.

\section{Empirical Soundness of the Approach}
\label{section:soundness}

The empirical soundness of our approach rests on two foundations:

\begin{enumerate}
	\item \emph{Correctness of the simulation}: Every symbolic execution path corresponds to a concrete execution, and the symbolic execution systematically explores all execution paths up to a bounded input size.
	\item \emph{Soundness of MILP}: The bound derived via the MILP formulation provably upper-bounds all costs obtained from symbolic execution.
\end{enumerate}

Due to space constraints, we present only the statements of the following theorems here; all proofs are deferred to Appendix~\ref{appendix:proofs}.

\begin{definition}[Concretisation]
	\label{def:concretisation}
	A \emph{model} $M$ is a mapping from symbolic variables to concrete values produced by Z3 when path constraints $\assertionSet{}$ are satisfiable. We write $M \models \assertionSet{}$ when $M$ satisfies all constraints in $\assertionSet{}$.
	The \emph{concretisation} function $\gamma_M$ maps symbolic values to concrete values under model $M$.
\end{definition}

\begin{lemma}[Completeness: Symbolic $\implies$ Concrete]
	\label{lem:completeness}
	Given an input specification $\mathcal{I} = (\mathcal{P}, \IRfunName{main}, \argspecificationList{}, \initialAssertionSet{})$.
	For any path
	\[
		\exframe{\IRcall{\IRfunName{main}}{\argspecificationList{}}}{\exstate{\rho_{\text{init}}}{\initialAssertionSet{}}{([],[],[])}{0}}{\idcontinuation{}} \leadsto^* \kappa(v_s, \assertionSet{}, \mathcal{S}, \mathcal{C})
	\]
	where the path constraints are satisfiable with model $M \models \assertionSet{}$, there exists a concrete execution on the concretised environment $\rho_c = \{\, x \mapsto \gamma_M(\rho(x)) \mid x \in \text{dom}(\rho_{\text{init}}) \,\}$ that produces the concrete value $v_c = \gamma_M(v_s)$ with cost exactly $\mathcal{C}$.
\end{lemma}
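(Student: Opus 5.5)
We prove a strengthened, compositional form of the statement by induction on the length of the symbolic derivation $\leadsto^*$. The strengthening is needed because intermediate frames carry arbitrary environments, partially forced descriptors, and nontrivial continuations.

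\textbf{Invariant.} Fix the final model $M \models \assertionSet{}$. Since assertion sets only grow along a path (every reached state has $\assertionSet{} = \initialAssertionSet{} \cup \Delta$), $M$ also satisfies the assertion set of every intermediate state on the path. For each intermediate frame $\exframe{\tau}{\exstate{\rho}{\assertionSet{}'}{\mathcal{S}}{\mathcal{C}'}}{\kappa}$ we maintain the following.
\begin{enumerate*}[label=($\roman*$)]
	\item $\gamma_M(\rho)$ is a concrete environment agreeing with the concrete input on every input pointer recorded in $\mathcal{S}$.
	\item The concrete evaluation of $\tau$ under $\gamma_M(\rho)$, started with accumulated cost $\mathcal{C}'$, proceeds in lockstep with the symbolic one.
\end{enumerate*}

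\textbf{Concrete input.} We first define $\gamma_M$ on descriptors using the \emph{final} structural cache. A list descriptor with pointer $p$ is concretised to the list of $\gamma_M$-images of the elements cached under $p$. If the end of the list was fixed, this list is closed; otherwise we close it at the cached length. The lower bound $m$ has been enforced lazily, so we must check that this length lies in $[m,M]$; if it does not, we pad with arbitrary elements satisfying the item template. Tuples and trees are treated analogously, using $\mathcal{S}_{\text{tuple}}$ and $\mathcal{S}_{\text{tree}}$. Because the cache guarantees that every access to an input pointer reuses the same symbolic value, this concretisation is well defined. It is the same concrete value $\gamma_M(\rho_{\text{init}}(x))$ for every argument $x$ and is independent of the step at which it is observed. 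The global assertions in $\initialAssertionSet{}$ and the local template constraints hold for this concrete input because they are contained in $\assertionSet{}$, and $M$ satisfies $\assertionSet{}$.

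\textbf{Case analysis on the rules.} We then do the induction step by cases on the execution rules of Appendix~\ref{appendix:execution-rules}.
\begin{itemize}
	\item Variables and constants, and unary and binary operators: these commute with $\gamma_M$ because Z3 term evaluation under $M$ is homomorphic.
	\item $\IRtick{c}{\tau}$: this adds $c$ both symbolically and concretely, which is why the final costs coincide exactly.
	\item Let, function abstraction, and call: closures and partial applications in functional descriptors are concretised pointwise, using $\rho_{\text{glob}}$ for recursion, and the continuation case follows directly from the induction hypothesis.
	\item Branching rules (if-then-else, split, unpack on descriptors): the symbolic step took the branch whose guard $g$ was added to the assertion set. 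Since $M \models g$, the concrete evaluation of the condition, or of the emptiness or constructor test on the concretised descriptor, selects the same branch. When a split forces a fresh head and tail, the fresh symbol is exactly the cached element that the concretisation places at that position, so the concrete head and tail are $\gamma_M$ of the symbolic ones.
	\item $\IRraise{m}$: this does not reach a final value and is therefore excluded.
\end{itemize}

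\textbf{Main obstacle.} We expect the hardest step to be the descriptor cases, specifically showing that the \emph{lazily} forced structure admits a single coherent concrete input. Two points need care. First, the offset $o$ and the cached ``end fixed'' flag must stay consistent with the final concretisation, so that a nil branch taken early is matched by a concrete list of exactly that length. Second, nested descriptors created via $\mathit{buildValue}$ on item templates must concretise under the same pointer scheme. We would handle both with an auxiliary lemma: the cache only grows monotonically, and each descriptor reached at step $k$ concretises, under the final cache, to the corresponding suffix or substructure of the final concrete input. With that lemma in place, the invariant at the final frame $\kappa(v_s,\assertionSet{},\mathcal{S},\mathcal{C})$ yields the concrete result $v_c = \gamma_M(v_s)$ with cost exactly $\mathcal{C}$.
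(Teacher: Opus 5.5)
Your proposal is correct in outline and follows the paper's strategy. That strategy is induction over the symbolic derivation, a rule-by-rule check that concretisation under the final model $M$ commutes with each step, the observation that every guard added along the path is satisfied by $M$ (assertion sets only grow), and the fact that only \textsc{Tick} changes the cost.

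The genuine difference is the descriptor case. The paper's appendix argument claims that the symbolic split rules add constraints $\mathrm{len}(v_l)=0$ or $\mathrm{len}(v_l)>0$ to the assertion set, so that $M$ alone decides each input's shape. But the rules in Figure~\ref{fig:execution-rules-lists} add nothing to the assertion set; they record shape only in the structural cache, as element entries and the end-of-list marker. You instead define $\gamma_M$ on descriptors from the \emph{final} cache: close at the end marker or at the cached length, and pad up to the lower bound. You pair this with the auxiliary lemma that the cache grows monotonically and that a descriptor at offset $o$ concretises to the suffix from $o$. That is the argument these rules actually support. It also gives a meaning to $\gamma_M(\rho_{\text{init}}(x))$ in the statement, which the paper leaves undefined. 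The price is that $\gamma_M$ depends on the path's final cache and not on $M$ alone. This is harmless, since both are fixed once the path is fixed.

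Three corrections.

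\textbf{Descriptor branching.} Your branching bullet says that for split and unpack on descriptors a guard was added to the assertion set. None is. Branch agreement there comes from the cache, exactly as in your ``main obstacle'' paragraph, so argue it that way.

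\textbf{Spec membership of the padded input.} Your claim that the padded concrete input satisfies the global assertions and local template constraints does not follow, and it can fail. Padded positions may be mentioned by global assertions or by item assertions of forced neighbours. For example, take integer items with range $[0,0]$, an item assertion requiring each element to differ from the next, lower length bound $2$, and a path that forces only the head. The path constraints are satisfiable, but no list in the specification extends them. The lemma only asks for a concrete run on $\rho_c$, so drop this claim or restrict it to the forced part.

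\textbf{Form of the induction.} Rule premises are themselves complete sub-evaluations, so induct on the derivation tree rather than on the length of $\leadsto^*$. The invariant should be: whenever a premise evaluates $\tau$ in environment $\rho$ from cost $\mathcal{C}'$ to a value $v$ at cost $\mathcal{C}''$, the concrete evaluation of $\tau$ under $\gamma_M(\rho)$ returns $\gamma_M(v)$ at cost $\mathcal{C}''-\mathcal{C}'$.
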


\begin{definition}[Template from Concrete Values]
	\label{def:exact-template}
	For a concrete value $v$ of type $\theta$, we define $T_\theta(v)$ as the template whose template constraints $\templateAssertionSet{}$ fix the structural properties of $v$ exactly.
\end{definition}

\begin{lemma}[Soundness: Concrete $\implies$ Symbolic]
	\label{lem:soundness}
	Given an input specification $(\mathcal{P}, \IRfunName{main}, \argspecificationList{}, \initialAssertionSet{})$.
	For any concrete evaluation $\mathcal{P}(x_1, \ldots, x_k) \Downarrow^c v$, there exists a symbolic execution
	\[
		\exframe{\IRcall{\IRfunName{main}}{\argspecificationList{}}}{\exstate{\rho_{\text{init}}}{\initialAssertionSet{}}{([],[],[])}{0}}{\idcontinuation{}}\leadsto^* \kappa(v_s, \assertionSet{}, \mathcal{S}, \mathcal{C})
	\]
	where $\rho_{\text{input}} = \{x_i \mapsto \buildValue{\iota_i} \mid i = 1, \ldots, k\}$, such that $v = \gamma_M(v_s)$ where $M \models \assertionSet{}$, and the symbolic cost $\mathcal{C}$ equals the concrete evaluation cost $c$.
\end{lemma}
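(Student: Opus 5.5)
We would prove Lemma~\ref{lem:soundness} by induction on the derivation of the concrete big-step evaluation $\mathcal{P}(x_1,\ldots,x_k) \Downarrow^c v$, i.e.\ on the evaluation of $\trcore{\mathcal{P}}$ starting from the body of $\IRfunName{main}$. The concrete inputs are assumed to lie in the input space described by $\argspecificationList{}$ and $\initialAssertionSet{}$: each $x_i$ respects the structural bounds of its template, and the inputs satisfy the global assertions.

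The first step is to fix a candidate model $M$. It assigns to every input pointer the concrete component of the input found at that pointer: list elements at $\texttt{l.i}$, tuple fields, tree node labels, and integer or boolean arguments. Under $M$ every descriptor produced by $\mathit{buildValue}$ is a sound abstraction of the corresponding concrete input, and $M \models \initialAssertionSet{}$ because the global assertions refer exactly to these pointers.

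We then strengthen the statement to an invariant over arbitrary frames, so that the induction goes through. Suppose a concrete subterm evaluates $\tau, \rho_c \Downarrow^{c'} v'$. Suppose further that a symbolic frame $\exframe{\tau}{\exstate{\rho}{\assertionSet{}}{\mathcal{S}}{\mathcal{C}_0}}{\kappa}$ satisfies three conditions:
\begin{enumerate*}[label=(\roman*)]
\item $\gamma_M(\rho(x)) = \rho_c(x)$ for all variables, with descriptors concretising under $M$ and the cache $\mathcal{S}$ to the unexplored remainder of the real input;
\item $M \models \assertionSet{}$;
\item the cache $\mathcal{S}$ agrees with $M$ on every generated pointer.
\end{enumerate*}
Then there is a symbolic run reaching $\kappa(v_s', \assertionSet{}', \mathcal{S}', \mathcal{C}_0 + c')$ in which (i)--(iii) still hold and $\gamma_M(v_s') = v'$. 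The lemma follows by instantiating this invariant with $\sigma_0$, $\idcontinuation{}$ and the call to $\IRfunName{main}$.

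The case analysis follows the execution rules of Appendix~\ref{appendix:execution-rules}.
\begin{itemize}
\item Variables, constants, arithmetic, cons, pack, let, fun, and call (including closures and partial applications through functional descriptors) follow from the induction hypothesis and the compositionality of $\gamma_M$.
\item For \textsc{Tick}, the cost increases by $c$ on both sides, so cost equality is immediate.
\item For $\mathsf{if}$ and for comparisons that yield symbolic booleans, the executor forks on the condition. We choose the branch whose condition $\gamma_M$ makes true. Since $M$ satisfies the added constraint, the extended assertion set remains satisfiable, the SMT check does not prune this branch, and it matches the concrete branch.
\end{itemize}

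The main obstacle is the descriptor cases, namely \textsc{Split}, \textsc{Unpack} on tuple and tree descriptors, and lazy forcing. Here the executor branches on structural facts (empty versus non-empty at offset $o$, leaf versus node) that depend on the bounds $m, M$ and the cache. We must show that the branch consistent with the concrete input is feasible, and that it is actually generated by the lazy enforcement of bounds.
\begin{itemize}
\item For lists, the concrete list has length $n$ with $m \le n \le M$. If $o < n$, the cons branch is permitted because $o < M$. It creates (or reuses from $\mathcal{S}_{\text{list}}$) a fresh element at pointer $\texttt{p.o}$, which $M$ already interprets correctly, together with the tail descriptor with offset $o+1$. If $o = n$, the nil branch is permitted because $o \ge m$, and it records the end of the list in the cache.
\item Trees follow the same argument with \texttt{nodeCount} against the node bounds.
\end{itemize}
Reuse through the structural cache ensures that repeated accesses stay consistent with $M$. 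Local template constraints merged on forcing are satisfied because the concrete value respects its template. Finally, we note that raising an exception or nontermination produce no concrete evaluation, so they need no case.
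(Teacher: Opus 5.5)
Your proof is correct at the paper's level of rigour. It is essentially the main-text sketch made precise. It departs from the paper's detailed appendix argument in two substantive ways. (That argument is the proof labelled for Lemma~\ref{lem:completeness}; the appendix proof carrying this lemma's label actually argues the converse direction.)

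First, the paper replaces each input template by the exact template $T_{\theta_i}(I_i)$ of Definition~\ref{def:exact-template}, so that the descriptor bounds leave only the branch matching the concrete shape. You instead keep the descriptors built from the given $\iota_i$ and use $m \le n \le M$ to show that the matching branch is enabled. Your version therefore yields a run from exactly the initial environment named in the statement. An exact-template run would still have to be transported back to the wider descriptors before it can feed Corollary~\ref{cor:overall-soundness}.

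Second, the paper argues that descriptor splitting partitions the bounded input space, so the concrete input satisfies exactly one path's constraints. It then takes cost equality from the converse (concretisation) argument, which tacitly needs determinism of the concrete semantics to identify the run produced there with the given one. You construct a single path directly through a frame-level simulation invariant, by induction on the concrete derivation. Value and cost agreement then come straight out of the \textsc{Tick} and conditional cases.

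Your route is more self-contained. It also follows the actual offset and end-of-list cache rules, rather than the ``$\mathrm{len}(v)=\ell$'' constraints that the appendix mentions but no rule adds. The paper's route is shorter and reuses the other direction.

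One step needs qualification, in your proof and the paper's alike: trees do not follow the list argument. For lists, the offset at which the empty branch is taken is exactly the final length, so $n \ge m$ discharges that premise. The fresh-leaf tree rule in Figure~\ref{fig:execution-rules-tree-data-structures}, however, compares $m$ with the number of nodes generated \emph{so far}. Suppose $m>0$ and the program reaches a leaf before $m$ nodes exist, for example a left-first traversal of $\mathrm{Node}(\mathrm{Leaf},x,\mathrm{Node}(\mathrm{Leaf},y,\mathrm{Leaf}))$ with $m=2$. Then the branch consistent with the concrete input is blocked, and no symbolic path represents the run.

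As written, your tree case therefore holds only for $m=0$, or under a leaf rule that defers the minimum check until the structure is complete. The paper's exact-template detour makes this worse, since there $m$ is the full node count.
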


Together, Lemma~\ref{lem:soundness} and Lemma~\ref{lem:completeness}
establish that symbolic execution with bounded descriptors provides an \emph{empirically sound and complete} characterization
of program behaviour within those bounds.
This means the maximum cost across all symbolic executions is precisely the worst-case cost for all inputs within the specified size bounds.

\begin{theorem}[Soundness of Upper Bound]
	\label{thm:bound-soundness}
	Let $\mathcal{M} = \{(\vec{x}_i, c_i) \mid i = 1, \ldots, N\}$ be a finite set of costs, where $\vec{x}_i \in \mathbb{N}^k$ represents an input shape (a vector of input dimension sizes) and $c_i \in \mathbb{R}^+$ is the cost for that shape.
	If $\mathit{InferBound}(\mathcal{M})$ admits a feasible solution with a bound $f : \mathbb{N}^N \to \mathbb{R}$, then $f$ is an upper bound on all costs:
	\[
		f(\vec{x}_i) \geq c_i \quad \text{for all } (\vec{x}_i, c_i) \in \mathcal{M}
	\]
\end{theorem}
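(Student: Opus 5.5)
The argument is a direct unfolding of what it means for $\mathit{InferBound}(\mathcal{M})$ to have a feasible solution. By construction in Section~\ref{section:linear-programming}, the MILP has decision variables $a_0,\ldots,a_n$ (the coefficients of $f(\vec{x}) = \sum_i a_i \phi_i(\vec{x})$) and binary variables $b_i$. Its constraint set is the union of three families: the non-negativity constraints $a_i \geq 0$, the upper-bounding constraints $-f(\vec{x}_i) \leq -c_i$ for every observation, and the big-M residual constraints. A feasible solution is, by definition, an assignment satisfying every constraint in this set. The proof therefore reduces to checking that the upper-bounding family is present, unconditionally and for every element of $\mathcal{M}$, and that it is linear in the decision variables.

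\textbf{Step 1: linearity of the constraints in the decision variables.} Each basis function value $\phi_j(\vec{x}_i)$ is precomputed and passed to the solver as an entry of the observation matrix. Hence, for fixed $i$, $f(\vec{x}_i) = \sum_j a_j \phi_j(\vec{x}_i)$ is a linear expression in the $a_j$ with constant coefficients. The constraint $-\sum_j \phi_j(\vec{x}_i)\, a_j \leq -c_i$ is therefore a genuine linear constraint of the program, one for each $(\vec{x}_i,c_i) \in \mathcal{M}$. We will note that the indexing in the paper's display, ``$i \in \{0,\ldots,N\}$'', should be read as ranging over all $N$ elements of $\mathcal{M}$.

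\textbf{Step 2: extracting the inequality from feasibility.} Let $(a^*, b^*)$ be the feasible solution returned, and let $f$ be the bound built from $a^*$. Feasibility gives $-f(\vec{x}_i) \leq -c_i$ for every $i$. Multiplying by $-1$ yields $f(\vec{x}_i) \geq c_i$, which is the claim.

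\textbf{Step 3: the remaining constraints do not interfere.} The non-negativity and residual constraints, including both branches $b_i \in \{0,1\}$ of the big-M encoding, only add further restrictions. They never relax or replace the upper-bounding rows. In particular, the $b_i = 1$ branch imposes $f(\vec{x}_j) \leq c_j$, and this is consistent with the claim precisely because the upper-bounding row forces equality there.

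The only point needing care, and the place I would be explicit, is the solver side. The statement concerns an exact feasible solution of $\mathit{InferBound}(\mathcal{M})$, not a CBC output subject to numerical tolerances. I would remark that floating-point tolerance in CBC could in principle produce violations of order $\varepsilon$. These can be repaired by a post-hoc check on the finitely many observations, for instance by increasing $a_0$ by the maximal violation, which preserves $a_0 \geq 0$. The mathematical statement itself is then immediate from Steps 1 and 2.
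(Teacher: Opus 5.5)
Your proof is correct and follows the paper's argument: feasibility means every constraint of $\mathit{InferBound}(\mathcal{M})$ holds, in particular the upper-bounding rows $-f(\vec{x}_i)\leq -c_i$. The non-negativity and big-M residual constraints only restrict the solution further and never weaken these rows. Your extra remarks on linearity in the coefficients and on CBC's numerical tolerance go beyond the paper's sketch, but they do not change the core argument.
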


To obtain the empirical soundness of our approach,
we combine the results of Lemma~\ref{lem:soundness} and Theorem~\ref{thm:bound-soundness}
to show that if $\mathit{InferBound}(\mathcal{M})$ admits a feasible solution,
then the derived bound is sound for all concrete inputs of size up to $n$.

\begin{corollary}[Empirical Soundness]\label{cor:overall-soundness}
	Let $\mathcal{P}$ be a program and $\mathcal{I} = (\mathcal{P},\allowbreak \IRfunName{main},\allowbreak \argspecificationList{},\allowbreak \initialAssertionSet{})$
	an input specification.
	Moreover, let $\Pi$ be
	all feasible symbolic execution paths for $\IRfunName{main}$ in $\trcore{\mathcal{P}}$
	and $\mathcal{D}_{\text{max}}$
	the set of worst-case costs grouped by input shape as defined in Section~\ref{section:symbolic-execution:grouping}.
	If $\mathit{InferBound}(\mathcal{D}_{\text{max}})$ admits a feasible solution with complexity bound $f(\vec{x})$,
	then $f(\vec{x})$ is a \emph{sound upper bound} for the worst-case execution cost of $\mathcal{P}$
	for all concrete inputs of size $k \in \{1, \ldots, N\}$.
\end{corollary}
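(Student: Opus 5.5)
The plan is to chain Lemma~\ref{lem:soundness} with Theorem~\ref{thm:bound-soundness}, using the grouping step of Section~\ref{section:symbolic-execution:grouping} to pass between individual paths and $\mathcal{D}_{\text{max}}$. Fix an arbitrary concrete input $(x_1,\ldots,x_k)$ admitted by the input specification $\mathcal{I}$. Admitted means two things: its structural sizes lie within the template bounds of $\argspecificationList{}$, and it satisfies the global assertions $\initialAssertionSet{}$. Let $\mathcal{P}(x_1,\ldots,x_k)\Downarrow^c v$ be its concrete evaluation. I read ``inputs of size $k \in \{1,\ldots,N\}$'' as exactly this bounded input space, i.e.\ the sizes covered by the descriptors. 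The claim to prove is then $f(\vec{x}) \geq c$, where $\vec{x}$ is the dimension vector of the input.

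\textbf{Step 1.} By Lemma~\ref{lem:soundness} there is a symbolic path $\pi \in \Pi$ ending in $\kappa(v_s,\assertionSet{},\mathcal{S},\mathcal{C})$ with a model $M \models \assertionSet{}$ such that $\gamma_M(v_s) = v$ and $\mathcal{C} = c$. Since $\pi$ is feasible, it yields an observation $(\vec{x}_\pi, c) \in \mathcal{D}$.

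\textbf{Step 2.} By definition of $\mathcal{D}_{\text{max}}$, there is a pair $(\vec{x}_\pi, c_{\max}) \in \mathcal{D}_{\text{max}}$ with $c_{\max} \geq c$. By Theorem~\ref{thm:bound-soundness}, applied to $\mathcal{M} = \mathcal{D}_{\text{max}}$ (all costs are nonnegative, since ticks accumulate from $0$), feasibility of $\mathit{InferBound}(\mathcal{D}_{\text{max}})$ gives $f(\vec{x}_\pi) \geq c_{\max} \geq c$.

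\textbf{Step 3.} It remains to relate $\vec{x}_\pi$ to the size measure used to state soundness for the concrete input. This is the main obstacle. The dimension vector is extracted from the structural cache $\mathcal{S}$: it records \emph{accessed} lengths and node counts, and the maximum over nested substructures. It is not the full size of the concrete input, since a program need not traverse its whole input.

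My first attempt is to define ``input size'' in the corollary as exactly the extracted dimension vector of the corresponding path. With that definition the bound applies pointwise and Step 2 already finishes the proof.

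If the corollary is read with the true sizes $\vec{s}$ of the concrete input, I would instead argue that $\vec{x}_\pi \leq \vec{s}$ componentwise: an accessed prefix cannot exceed the actual structure. I would then use monotonicity of $f$, which follows from the non-negative coefficient constraint $a_i \geq 0$ together with the basis functions being non-decreasing on $\mathbb{N}$ (the case for $1$, $x$, $x^2$, $xy$, and $x\log x$ with the convention $0\log 0=0$). This gives $f(\vec{s}) \geq f(\vec{x}_\pi) \geq c$.

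Finally, since the concrete input was arbitrary within the bounded space, $f$ upper-bounds the worst-case cost over all such inputs. Lemma~\ref{lem:completeness} is not needed for the upper-bound direction; it only shows that the observed costs are actually attained.
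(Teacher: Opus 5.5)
Your proposal is correct and follows the paper's own route: Lemma~\ref{lem:soundness} places every admitted concrete run on a feasible path with the same cost, max-grouping into $\mathcal{D}_{\text{max}}$ dominates that cost, and Theorem~\ref{thm:bound-soundness} closes the argument, while your Step~3 spells out the identification of ``input size'' with the extracted dimension vector, which the paper leaves implicit. Two minor caveats: ticks may carry negative costs in this model, so your non-negativity remark is unjustified, though harmless since feasibility yields $f(\vec{x}_i)\geq c_i$ irrespective of sign; and your monotonicity fallback covers only the structural components, not integer-argument components, whose extracted values come from a Z3 model and need not be bounded by the concrete input's values.
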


\paragraph{Proof Sketches.}
Lemma~\ref{lem:completeness} is shown by induction on the symbolic execution derivation.
Each operational rule either performs the same computation as the concrete semantics under concretisation, or adds a path condition that is satisfied by the chosen model \(M\).
For symbolic conditionals, satisfiability of the selected branch condition ensures that the corresponding concrete execution takes the same branch.
For structural descriptors, the cache records generated list, tuple, and tree components, so repeated accesses concretise consistently.
Ticks are accumulated only by the \textsc{Tick} rule and are therefore preserved step by step.

Lemma~\ref{lem:soundness} uses the converse construction.
Given a concrete input within the input specification, choose the template that fixes its shape and instantiate the fresh symbolic values with the concrete components.
Whenever execution branches on a descriptor, the bounded split rules include the case matching the concrete structure; whenever execution branches on a symbolic condition, the concrete valuation satisfies the corresponding path constraint.
Thus the concrete run is represented by some feasible symbolic path with the same accumulated tick cost.

Theorem~\ref{thm:bound-soundness} follows directly from the MILP constraints: feasibility entails \(f(\vec{x}_i)\geq c_i\) for every observation.
Combining with Lemmas~\ref{lem:soundness} and~\ref{lem:completeness}, \(\mathcal{D}_{\text{max}}\) contains the true worst-case cost for every input shape explored.
Therefore any feasible inferred bound upper-bounds all concrete executions in the bounded domain, yielding Corollary~\ref{cor:overall-soundness}.

\section{Experimental Evaluation}
\label{section:experimental-evaluation}

\begin{table}[!t]
	\centering
	\caption{Experimental results for all $46$ benchmark programs. Bounds from \texttt{RaML}, the hybrid approach by Pham et al.~\cite{Pham:etal:2025} (where available), and our method.}
	\label{tab:experiments2}
	\begingroup
	\scriptsize
	\setlength{\tabcolsep}{2pt}
	\renewcommand{\arraystretch}{0.68}
	\begin{tabularx}{\textwidth}{@{}l >{\raggedright\arraybackslash}X >{\raggedright\arraybackslash}X >{\raggedright\arraybackslash}X r c@{}}
		\toprule
		\textbf{\textsf{Program}}           & \textbf{\textsf{RaML}}~\cite{Hoffmann:etal:2017} & \textbf{\textsf{Hybrid}~\cite{Pham:etal:2025}} & \textbf{\textsf{Our Method}}                       & \textbf{\textsf{Time [s]}} & \textbf{\textsf{Sound}} \\
		\midrule
		\rowcolor{gray!10}
		\texttt{\textsf{AVLTree}}           & \infeasible                                      & $O((M + L) \log(M))$                           & $3 + 0.64M^3 + 1.36M^2 + ML + 2L$                  & 136.28                     & \sound{visual}          \\
		\texttt{\textsf{BFS}}               & $4M$                                             & -                                              & $3M$                                               & 71.00                      & \sound{manual}          \\
		\rowcolor{gray!10}
		\texttt{\textsf{BinarySearchTree}}  & $6 + 2L + LM - 0.5M + 3.5M^2$                    & $O((M + L)M)$                                  & $6 + 2L + LM + 2.5M^2$                             & 103.36                     & \sound{visual}          \\
		\texttt{\textsf{BitVectors}}        & $1 + M$                                          & -                                              & $1 + M$                                            & 25.72                      & \sound{eq}              \\
		\rowcolor{gray!10}
		\texttt{\textsf{BubbleSort}}        & \infeasible                                      & -                                              & $1 + M^2$                                          & 146.82                     & \sound{manual}          \\
		\texttt{\textsf{BubbleSort2}}       & \infeasible                                      & $O(M^2)$                                       & $2 + 0.78M^2 + 0.11M^3$                            & 161.41                     & \sound{manual}          \\
		\rowcolor{gray!10}
		\texttt{\textsf{Concat}}            & $M + ML$                                         & -                                              & $0.5M + ML + 0.5M^2$                               & 61.62                      & \sound{visual}          \\
		\texttt{\textsf{DivBySub}}          & \infeasible                                      & -                                              & $1 + 2M$                                           & 41.58                      & \sound{manual}          \\
		\rowcolor{gray!10}
		\texttt{\textsf{Duplicates}}        & $-0.5M_1 M + 0.5M_1 M^2 + 0.5M + 0.5M^2$         & -                                              & $M + 12M_1$                                        & 21.94                      & \unsound{manual}        \\
		\texttt{\textsf{Dyade}}             & $M + ML$                                         & -                                              & $M + ML$                                           & 58.77                      & \sound{eq}              \\
		\rowcolor{gray!10}
		\texttt{\textsf{Eratosthenes}}      & $0.5M + 0.5M^2$                                  & -                                              & $0.5M + 0.5M^2$                                    & 242.50                     & \sound{eq}              \\
		\texttt{\textsf{EvenOdd}}           & $1 + 0.5M$                                       & -                                              & $1 + M$                                            & 6.26                       & \sound{coeff}           \\
		\rowcolor{gray!10}
		\texttt{\textsf{EvenOddTail}}       & $3 + 7.75M + 0.75M^2$                            & -                                              & $1 + 2.36M + 1.64M^2$                              & 7.28                       & \sound{coeff}           \\
		\texttt{\textsf{FibMemo}}           & \infeasible                                      & -                                              & $2 + 11.95M + 1.35M^2$                             & 2.26                       & \sound{manual}          \\
		\rowcolor{gray!10}
		\texttt{\textsf{Flatten}}           & $0.5M^2 M_1 + 0.5M^2 M_1^2 + M_1 M + M$          & -                                              & $0.94M + 0.06M^2 + 0.33M M_1^2 + 2M^2 M_1$         & 269.61                     & \sound{manual}          \\
		\texttt{\textsf{FoldSum}}           & \infeasible                                      & -                                              & $3 + 4M$                                           & 4.63                       & \sound{manual}          \\
		\rowcolor{gray!10}
		\texttt{\textsf{Id}}                & \infeasible                                      & -                                              & $2 + 2M$                                           & 4.51                       & \sound{manual}          \\
		\texttt{\textsf{InsertionSort2}}    & $M + M^2$                                        & -                                              & $1.5M + 0.5M^2$                                    & 135.61                     & \sound{manual}          \\
		\rowcolor{gray!10}
		\texttt{\textsf{ISort}}             & $0.5M + 0.5M^2$                                  & -                                              & $0.5M + 0.5M^2$                                    & 19.75                      & \sound{eq}              \\
		\texttt{\textsf{ISortFold}}         & $0.5M + 0.5M^2$                                  & -                                              & $0.5M + 0.5M^2$                                    & 19.83                      & \sound{eq}              \\
		\rowcolor{gray!10}
		\texttt{\textsf{LCS}}               & $1 + L + 3LM + M$                                & -                                              & $0.5 + 1.5L + 2.25LM + 0.25M + 0.25M^2 + 0.25ML^2$ & 205.89                     & \sound{visual}          \\
		\texttt{\textsf{ListSort}}          & $-0.5M_1 M + 0.5M_1 M^2 + 0.5M + 0.5M^2$         & -                                              & $0.77M + 0.6M^2 M_1 + 0.23M^3$                     & 358.68                     & \sound{manual}          \\
		\rowcolor{gray!10}
		\texttt{\textsf{MapAppend}}         & $M_1 M + M$                                      & -                                              & $5M$                                               & 14.72                      & \sound{manual}          \\
		\texttt{\textsf{MapPlus}}           & $1 + 2M$                                         & -                                              & $1 + 2M$                                           & 15.61                      & \sound{eq}              \\
		\rowcolor{gray!10}
		\texttt{\textsf{MergeSort}}         & $-1.5M + 1.5M^2$                                 & -                                              & $1.54M \log(M)$                                    & 41.31                      & \sound{manual}          \\
		\texttt{\textsf{MergeSort2}}        & $1 + 3.96M + 0.29M^2$                            & $O(M\log(M))$                                  & $1 + 0.78M + 0.61M^2$                              & 12.87                      & \sound{visual}          \\
		\rowcolor{gray!10}
		\texttt{\textsf{MergeSortDC}}       & \infeasible                                      & -                                              & $2 + 13.57M \log(M)$                               & 44.84                      & \sound{manual}          \\
		\texttt{\textsf{MinSort}}           & $1.5M + 0.5M^2$                                  & -                                              & $0.5M + 0.5M^2$                                    & 167.15                     & \sound{visual}          \\
		\rowcolor{gray!10}
		\texttt{\textsf{MSS}}               & $5 + 4M$                                         & -                                              & $5 + 3M$                                           & 35.41                      & \sound{visual}          \\
		\texttt{\textsf{Queue}}             & $2.5M + 3.5M^2$                                  & -                                              & $3M + 3M^2$                                        & 10.20                      & \sound{visual}          \\
		\rowcolor{gray!10}
		\texttt{\textsf{QuickSelect}}       & $1 + 2M + M^2$                                   & -                                              & $1.5M^2$                                           & 140.09                     & \sound{visual}          \\
		\texttt{\textsf{QuickSort}}         & $1 + 3M + M^2$                                   & -                                              & $1 + 3M + M^2$                                     & 285.87                     & \sound{eq}              \\
		\rowcolor{gray!10}
		\texttt{\textsf{QuickSort2}}        & $1 + 3M + M^2$                                   & $O(M^2)$                                       & $1 + 2.27M + 1.24M^2$                              & 298.96                     & \sound{coeff}           \\
		\texttt{\textsf{RationalPotential}} & $M$                                              & -                                              & $M$                                                & 16.04                      & \sound{eq}              \\
		\rowcolor{gray!10}
		\texttt{\textsf{RedBlackTree}}      & \timeout{60s}                                    & $O((M + L) \log(M))$                           & $3 + M^2 + ML + 2M + 2L$                           & 144.11                     & \sound{visual}          \\
		\texttt{\textsf{Rev}}               & $1 + 1.5M + 0.5M^2$                              & -                                              & $1 + 1.5M + 0.5M^2$                                & 6.06                       & \sound{eq}              \\
		\rowcolor{gray!10}
		\texttt{\textsf{RevDL}}             & \infeasible                                      & -                                              & $1 + M$                                            & 4.46                       & \sound{manual}          \\
		\texttt{\textsf{RevFletf}}          & $2 + M$                                          & -                                              & $3 + M$                                            & 5.14                       & \sound{coeff}           \\
		\rowcolor{gray!10}
		\texttt{\textsf{RevFoldl}}          & $2 + 2M$                                         & -                                              & $2 + 2M$                                           & 4.82                       & \sound{eq}              \\
		\texttt{\textsf{Round}}             & \infeasible                                      & -                                              & $3 + 3.25M + 0.75M^2$                              & 18.51                      & \sound{manual}          \\
		\rowcolor{gray!10}
		\texttt{\textsf{SplayTree}}         & $3 + 3L + LM + 3.5M + 0.5M^2$                    & $O((M + L)M)$                                  & $3.29 + 1.71M^2 + 2.57L + 0.14L^2 + 0.76ML$        & 232.64                     & \sound{visual}          \\
		\texttt{\textsf{SplitAndSort}}      & $1.5M + 1.5M^2$                                  & -                                              & $2M + 2M^2$                                        & 170.48                     & \sound{coeff}           \\
		\rowcolor{gray!10}
		\texttt{\textsf{Subtrees}}          & $0.5M + 0.5M^2$                                  & -                                              & $0.5M + 0.5M^2$                                    & 30.13                      & \sound{eq}              \\
		\texttt{\textsf{Sum}}               & $2 + 2M$                                         & -                                              & $2 + 2M$                                           & 5.44                       & \sound{eq}              \\
		\rowcolor{gray!10}
		\texttt{\textsf{SumSqs3}}           & $9 + 7.33M + 2M^2 + 0.67M^3$                     & -                                              & $3 + 15.96M^2 + 0.04M^3$                           & 2.22                       & \sound{visual}          \\
		\texttt{\textsf{Tuples}}            & $-0.67M + 2.75M^2 - 1.33M^3 + 0.25M^4$           & -                                              & $0.81M + 0.19M^4$                                  & 16.11                      & \sound{visual}          \\
		\bottomrule
	\end{tabularx}
	\endgroup
\end{table}

We evaluate our approach on $46$ benchmark programs from two established sources:
the \texttt{RaML}~\cite{Hoffmann:etal:2017} benchmark suite ($39$ programs),
and all $7$ pure OCaml programs
from the hybrid approach by Pham et al.~\cite{Pham:etal:2025}
(\textsf{AVLTree}, \textsf{BinarySearchTree}, \textsf{BubbleSort2}, \textsf{MergeSort2}, \textsf{QuickSort2}, \textsf{RedBlackTree}, and \textsf{SplayTree}).%
\footnote{The remaining examples from~\cite{Pham:etal:2025} (\textsf{Dijkstra}, \textsf{Prim}, \textsf{BellmanFord}, \textsf{HeapSort} and \textsf{HuffmanCode})
	use RaML's specialised array library (\texttt{Rarray}). We leave the extension of our translator to support these constructs for future work.}
For each benchmark, we configure the symbolic execution with input specifications that bound the input sizes.
These bounds were determined through automatic incremental testing, balancing two competing objectives:
maintaining analysis time within reasonable limits while ensuring sufficient measurement points are generated to capture the program's resource behaviour.

All results are presented in Table~\ref{tab:experiments2}
where we report the inferred bounds from \texttt{RaML}, the hybrid approach~\cite{Pham:etal:2025} (where available\footnote{We report the $7$ overlapping pure OCaml benchmarks available in the evaluation of~\cite{Pham:etal:2025}. Their methodology requires manual annotation of programs with resource guards, a process that demands human expertise to identify appropriate decomposition points and instrument the code accordingly.}),
and our method, along with the total analysis time (symbolic execution + MILP solving).
We quantify resource consumption as the number of \emph{ticks}, consistent with \texttt{RaML}'s cost model,
where each tick represents a unit of computational work explicitly marked in the source code.
For timing measurements, we report for \texttt{RaML} the time taken in Module Mode with degree 3,
and for our approach the sum of the time for symbolic execution and MILP-based bound synthesis.
In the reported bounds, primary components are denoted by $M,L,\ldots$ in argument order.
Secondary components inherit the symbol of their top-level argument with an index: $M_1,M_2,\ldots$ are secondary dimensions of the first argument, $L_1,L_2,\ldots$ of the second, and so on.
For example, for an input list of lists, $M$ denotes the length of the outer list and $M_1$ denotes the maximum length of an inner list. The asymptotic results taken from~\cite{Pham:etal:2025} are rewritten to this notation for readability.

\paragraph{Comparison and Validation.}
The \emph{Sound} column records the validation method of the reported bound.
The marker \sound{eq} means that our bound coincides with the reference bound, \sound{coeff} means that it is validated by coefficient-wise domination, \sound{visual} means that plots support the comparison and the remaining argument is by inspection of the displayed bound, and \sound{manual} means that a program-specific semantic argument is used.
The marker \unsound{manual} denotes a case where this validation fails.
We validate soundness by comparing our bounds against \texttt{RaML}'s statically-derived bounds where available,
and against the known ground truth bounds reported in~\cite{Pham:etal:2025}.
\texttt{RaML} successfully derived worst-case upper bounds for $34$ programs.
Among these, our predictions match those obtained by \texttt{RaML} in $20$ instances; $12$ of them exactly (\sound{eq}), and $8$ with identical structure but differing coefficients (\sound{coeff}).
For $14$ programs, our predictions deviated structurally from those generated by \texttt{RaML}.
For the remaining $10$ programs, \texttt{RaML} was not able to derive a bound or timed out.
\begin{figure}[t]
	\centering
	\begin{subfigure}[b]{0.33\textwidth}
		\centering
		\includegraphics[width=.9\textwidth]{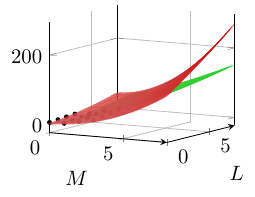}
		\subcaption{\program{BinarySearchTree}.}\label{fig:bst-example}
	\end{subfigure}
	\hfill
	\begin{subfigure}[b]{0.3\textwidth}
		\centering
		\includegraphics[width=\textwidth]{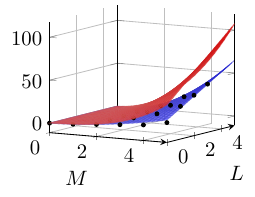}
		\subcaption{\program{ListSort}.}\label{fig:listsort-example}
	\end{subfigure}
	\hfill
	\begin{subfigure}[b]{0.3\textwidth}
		\centering
		\includegraphics[width=\textwidth]{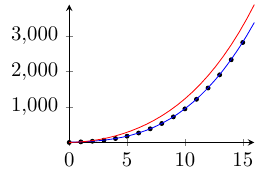}
		\subcaption{\program{SumSqs3}.}\label{fig:sum_sqs3-example}
	\end{subfigure}
	\caption{Three representative examples from Table~\ref{tab:experiments2} for which the plots support the soundness validation. The black dots represent the observed costs, the blue surface the static analysis bound, the red surface our prediction, and the green surface the ground truth bound taken from the artifact of \cite{Pham:etal:2025}.}
	\label{fig:plots-examples-main}
\end{figure}
For the programs \program{Concat}, \program{ListSort} and
\program{SumSqs3} (in Figure~\ref{fig:plots-examples-main}),
as well as the additional plotted examples included in the artifact and the extended version~\cite{Frontull:etal:2026:extended}, the plots provide visual evidence that the inferred upper bounds either closely approximate or conservatively exceed the reference bounds determined by \program{RaML}.
For the remaining examples, particularly those where static analysis could not infer a bound (e.g. \program{BubbleSort}), plots alone are insufficient to validate soundness, as the bounds follow the trend of the data points but manual reasoning (\sound{manual}) is required to confirm they correctly capture worst-case behaviour.
In the following, we discuss a selection of representative examples that we validated manually through detailed analysis of the program semantics. Further benchmark plots and case studies are included in the artifact and the extended version~\cite{Frontull:etal:2026:extended}.

\paragraph{BFS and InsertionSort2.}
For \program{BFS}, ticks occur only in the auxiliary list reversal used to move the accumulated next-level queue into the current queue.
The worst case is a skew tree in which each level reversal processes two queued children and the empty-list case, giving $3$ ticks per node and hence the bound $3M$.
For \program{InsertionSort2}, the first insertion sort has worst-case cost $\sum_{i=1}^{M} i = 0.5M^2+0.5M$; the second pass sorts an already sorted list, so each insertion stops after one comparison, adding $M$ ticks.
This yields $0.5M^2+1.5M$, matching the inferred bound.

\paragraph{bubblesort.} The \program{BubbleSort} implementation uses an auxiliary
function \raml{scan\allowbreak\_and\_swap} that traverses the list,
swapping adjacent elements if they are out of order, and returns a boolean indicating whether any swaps were made. The main function \raml{bubble\_sort xs} repeatedly calls this helper function until no swaps are needed, indicating that the list is sorted.
In the worst case, the swap operation needs to be performed for every pair of adjacent elements in the list.
This would be the case for a list sorted in descending order which would require \(n\) passes through the list, with each pass involving \(n\) swaps. This results in a total of \(n^2\) ticks, except for $n = 0$, where the cost is $1$. $1 + n^2$, as predicted by our approach, is thus a valid upper bound on the worst-case execution time of \raml{bubble\_sort}.
\texttt{RaML}, on the other hand, successfully infers the $1 + n$ bound for the \raml{scan\_and\_swap} function, but fails to combine this with the outer recursive calls in \raml{bubble\_sort}.

\paragraph{duplicates.}
The \program{Duplicates} program removes duplicate elements from a list of lists using the \raml{nub l} function. For each element $x$ in the input list, \raml{nub} recursively removes all occurrences of $x$ from the remaining elements by calling \raml{remove x xs}, which in turn uses \raml{eq} to compare $x$ with each remaining element. The \raml{eq} function performs structural equality on lists, executing 1 tick per element comparison. For a list of $M$ sublists each with maximum length $M_1$, the worst-case occurs when all elements are distinct, requiring $\sum_{i=1}^{M-1} (M-i) \cdot M_1 = \Theta(M_1 M^2)$ ticks for pairwise comparisons. \texttt{RaML} correctly infers the quadratic bound $-0.5M_1 M + 0.5M_1 M^2 + 0.5M + 0.5M^2$. Our approach, however, infers the linear bound $M + 12M_1$,
which is not sound against this reference bound.
This example illustrates a limitation of our approach: the selected input dimensions, bounded observations, and bound template did not expose the interaction between the number of sublists and pairwise equality checks strongly enough to synthesise the required $M_1M^2$ term.

\paragraph{mergesort2.}
The \program{MergeSort2} benchmark, taken from~\cite{Pham:etal:2025}, is a variant of the merge sort algorithm with theoretical complexity \(O(M \log(M))\).
For this example, our approach infers a quadratic bound $1 + 0.78M + 0.61M^2$ (which is sound) rather than the logarithmic bound. This occurs because within our observation range, the quadratic model achieves a better fit (lower sum of residuals) and is therefore preferred by the MILP objective.
The hybrid approach in~\cite{Pham:etal:2025} successfully identifies the correct logarithmic asymptotic bound for this benchmark. However, as noted in their work, their data-driven analysis achieves $59.5\%$ soundness proportion for the sampled bounds, which differs from the exhaustive bounded guarantee targeted here.

\paragraph{Tool.}
We have implemented our approach in a prototype tool called \compas{}.
The tool is a command-line application that integrates an OCaml-to-core-language translator, a symbolic execution engine with Z3-based constraint solving, and a Python-based MILP solver for bound inference.
The source code and experimental artifact are released under the Creative Commons Attribution 4.0 International license (CC BY 4.0)\footnote{\url{https://creativecommons.org/licenses/by/4.0/}} and archived on Zenodo~\cite{frontull_2026_21323697}; implementation details are documented in the extended version~\cite{Frontull:etal:2026:extended}.

\section{Related Work}
\label{section:related-work}

Complexity analysis techniques span a spectrum from purely static to purely dynamic approaches, with recent hybrid methods attempting to combine their complementary strengths.
In this section, we discuss related work.

\paragraph{Static Analysis.}
Static methods derive complexity bounds from program structure without execution.
A prominent example is Automatic Amortized Resource Analysis (AARA), pioneered by Hoffmann et al.~\cite{Hoffmann:2012:ARA,Hoffmann:2012:RaML,Hoffmann:etal:2017}, which uses type-based analysis and linear programming to infer resource bounds for higher-order functional programs, building on Tarjan's amortised analysis~\cite{Tarjan:1985}. See~\cite{leutgeb2021atlas,leutgeb2022automated,WMSZ26} for related approaches.
Other static techniques include size-change abstraction for imperative programs~\cite{Zuleger:etal:2011}, transformations from higher-order to first-order systems~\cite{Avanzini:etal:2015}, and predicate transformers~\cite{AMS20,Avanzini:2023}.
Static analysis methods have also been developed in the context of term rewriting, a versatile abstract computational
model, cf.~\cite{Avanzini:etal:2016}.
\paragraph{Symbolic Execution for Complexity Analysis.}
Symbolic execution~\cite{King:1976} is a well-known technique for program analysis.
It has been successfully applied to various domains.
One of the most prominent examples of symbolic execution is the application to error finding in C/C++ code within Microsoft~\cite{Bush:etal:2000}.
Other applications include software testing~\cite{Cadar:etal:2011}, security and code analysis~\cite{Baldoni:etal:2018}.
KLEE~\cite{Cadar:etal:2008} is a popular dynamic symbolic execution engine initially designed at Stanford University and now primarily maintained by the Software Reliability Group at Imperial College London, with a large community spanning both academia and industry.
KLEE has been used and extended in diverse areas such as high-coverage test generation, automated debugging, exploit generation, and wireless sensor networks, among others.
SAGE~\cite{Godefroid:etal:2008} presents a whitebox fuzz testing approach that combines symbolic execution with constraint solving to generate new inputs that exercise different control paths, successfully discovering vulnerabilities in large Windows applications including the MS07-017 ANI vulnerability.
Pex~\cite{Tillmann:deHalleux:2008} automatically produces test suites with high code coverage for .NET programs using dynamic symbolic execution similar to path-bounded model-checking, learning program behaviour by monitoring execution traces and using constraint solvers to generate new test inputs.
Similarly, \emph{fuzzing}~\cite{Boehme:etal:2021} is a related technique as well. Fuzzing is used to uncover software bugs, by changing the input over and over again in order to reach all possible control flows.
Luckow et al.~\cite{Luckow:etal:2020} use guided symbolic execution with learned path policies to identify worst-case behaviours, exploring exhaustively at small input sizes and generalizing to larger ones.
Pasareanu et al.~\cite{Pasareanu:etal:2016} are close in spirit as they also enumerate symbolic paths and associate them with accumulated costs (in the setting of side-channel analysis).
In our setting, the accumulated costs serve a different purpose:
they are related to source-level input measures which become the variables of the inferred resource bounds.
While such measures could be encoded separately in a side-channel analysis, they are not the quantities retained and optimized by Pasareanu et al.'s formulation, which focuses on distinguishable observations.
Badger~\cite{Noller:etal:2018} combines fuzzing with symbolic execution, while SlowFuzz~\cite{Petsios:etal:2017} uses fuzzing to detect algorithmic complexity vulnerabilities.
Unlike complexity-testing approaches that search for worst cases through heuristics, we exhaustively enumerate all executions within bounded input sizes using SMT solving, guaranteeing discovery of true worst cases in the analysed range.

\paragraph{Dynamic Analysis.}
Purely dynamic methods infer bounds from observed executions.
They can be categorised into symbolic execution~\cite{Noller:etal:2018} and actual/dynamic execution~\cite{Nguyen:2020}, where symbolic tools like WISE~\cite{Burnim:2009} strive to abstract away concrete input data values.
Singularity~\cite{Wei:2018} is closely related, but follows a different route:
it models worst-case input generation as an optimal program-synthesis problem~\cite{Bornholt:etal:2016} and uses genetic programming to synthesise input patterns represented as recurrent computation graphs.
The authors show that this approach can effectively discover worst-case complexity for a range of algorithms and scales to larger real-world applications, where it can expose performance bugs and availability vulnerabilities.
However, the search remains heuristic and does not provide soundness guarantees for the inferred worst-case behaviour.
At the same time, this suggests a complementary direction for our approach:
the symbolic executions and concrete models produced by our bounded analysis, in particular those attaining maximal costs for small input sizes, could be used as structured examples from which to learn worst-case input patterns.
Such learned patterns could guide exploration at larger sizes, while the exhaustive symbolic phase would still provide validation within selected finite ranges.
Dynaplex~\cite{Ishimwe:etal:2021} learns recurrence relations from execution traces and solves them to obtain closed-form complexity bounds, handling even non-polynomial cases like $O(n^{1.58})$.
Similar data-driven approaches for term rewriting systems~\cite{Frontull:etal:2025} are limited to univariate problems and provide no soundness guarantees.
Machine learning has been applied to data-driven program analysis using disjunctive models~\cite{Jeon:etal:2019}.
While dynamic methods can handle complex programs that resist static analysis, they typically lack guarantees about bound validity.
Our approach differs by using symbolic execution to systematically cover all paths up to bounded sizes, then applying MILP-based bound synthesis with explicit upper-bounding constraints, yielding provably valid upper bounds within the analysed range.

\paragraph{Hybrid Approaches.}
Recent work has combined static and dynamic techniques to leverage their complementary strengths.
Pham et al.~\cite{Pham:2024} introduced Hybrid AARA, which integrates AARA's type-based analysis with Bayesian inference over runtime measurements to infer statistically sound probability distributions over resource bounds.
Their resource decomposition framework~\cite{Pham:etal:2025} extends this idea by enabling heterogeneous analysis techniques to be applied to different program components while providing provable composition guarantees.
The framework requires manually instrumenting the source program with resource guards.
AARA/\texttt{RaML} then analyses the instrumented program to derive an overall cost bound parameterised by these guards, while data-driven techniques infer bounds for the corresponding guarded resource components.
Although~\cite{Pham:etal:2025} successfully infers the correct asymptotic complexity classes for challenging programs, its coefficient-level guarantees are expressed as posterior soundness probabilities rather than as exhaustive bounds over the input specifications considered here.
Like these hybrid approaches, we combine systematic exploration (via symbolic execution) with empirical bound synthesis (via linear programming).
However, we target a different trade-off: our guarantee is exhaustive over explicitly bounded input specifications, rather than probabilistic over sampled or resource-decomposed measurements, at the cost of being limited to analysable input sizes.
Our method derives bounds with precise coefficients in a single automated step,
without requiring manual resource annotations and independent of any other static analysis.
Among the $7$ overlapping pure OCaml benchmarks, our method agrees asymptotically with the function-call tick-guard bounds reported in~\cite{Pham:etal:2025} for \program{BinarySearchTree} and \program{QuickSort2}.

\section{Conclusion}
\label{section:conclusion}
In this work, we have presented a novel approach to the resource analysis of functional programs that combines symbolic execution with linear programming-based inference of upper bounds. Our method collects symbolic cost data and translates bounding and monotonicity requirements into linear constraints on bound templates. By solving the resulting linear programs, we obtain sound and verifiable upper bounds on program resource consumption.
Beyond the core methodology, we have demonstrated how this framework can be applied to representative functional programs, showing that the inferred bounds are both precise and computationally tractable.

We see our work as a foundation for several promising future directions.
Detecting recurring patterns in program structures could allow for more efficient exploration and analysis, potentially reducing the computational cost. Such pattern detection could be guided by the concrete examples of worst-case inputs that our method generates.
Our method also opens avenues for deriving closed-form expressions that describe algorithmic behaviour by analysing the collected symbolic cost data.
Given the effectiveness of our method, it could also serve as a diagnostic tool to identify code segments that may lead to suboptimal performance in IDEs.
Extending the approach to explicitly handle multiple resource types simultaneously would enable comprehensive resource profiling, analysing trade-offs between different resources (e.g., time vs. space).

While our presentation and examples focus primarily on worst-case runtime complexity, the tick-based cost model is fully general and supports analysis of arbitrary resources.
The support for negative costs naturally extends to modelling resource lifecycles, such as tracking heap memory usage patterns including peak allocation points.
Another promising approach is to integrate large language models (LLMs) into the symbolic execution workflow. Thanks to their strong capabilities in pattern recognition and code comprehension, LLMs could be used to guide the exploration strategy by prioritising execution paths where the worst-case behaviour is more likely, thereby narrowing the search space and improving the efficiency of the analysis. Beyond this, LLMs could also help to derive generalisations from the limited set of worst-case executions.

\subsubsection{Acknowledgements} We would like to thank the anonymous reviewers for their work
and invaluable suggestions, which greatly improved our presentation.
This work was partly supported by the FWF project 
``Automated Sublinear Amortised Resource Analysis of Data Structures'' (AUTOSARD), No. P3662.

\appendix
\renewcommand{\theHsection}{appendix.\Alph{section}}
\section{Dynamic Symbolic Execution Rules}
\label{appendix:execution-rules}
\begin{figure}[t]
	\begin{simulationrules}
		\inferrule*[lab={\IRLInt}]
		{ }
		{\exframe{\IRint{n}}{\exstate{\rho}{\assertionSet{}}{\mathcal{S}}{\mathcal{C}}}{\kappa} \leadsto \kappa(\text{Z3.Int.val}(\IRint{n}), \assertionSet{}, \mathcal{S}, \mathcal{C})}
		\inferrule*[lab={\IRLBool}]
		{ }
		{\exframe{\IRbool{b}}{\exstate{\rho}{\assertionSet{}}{\mathcal{S}}{\mathcal{C}}}{\kappa} \leadsto \kappa(\text{Z3.Bool.val}(\IRbool{b}), \assertionSet{}, \mathcal{S}, \mathcal{C})}
		
    \inferrule*[lab={\IRLUnaryOp}]
		{\phantom{\top} \\\\ \exframe{\tau}{\sigma}{\idcontinuation{}} \leadsto (\symbolic{v}, \assertionSet{}', \mathcal{S}', \mathcal{C}') \\\\
			\diamond \in \{\,\lnot,\; -_1,\; \sqrt{\cdot},\; \sqrt[3]{\cdot}\,\}}
		{\exframe{\diamond \tau}{\sigma}{\kappa} \leadsto
			\kappa(\unopmap{\diamond}{\symbolic{v}},\; \assertionSet{}',\, \mathcal{S}',\, \mathcal{C}')}
		\inferrule*[lab={\IRLOp}]
		{\exframe{\tau_1}{\sigma}{\idcontinuation{}} \leadsto (\symbolic{v_1}, \assertionSet{}_1, \mathcal{S}_1, \mathcal{C}_1) \\\\
			\exframe{\tau_2}{\exstate{\rho}{\assertionSet{}_1}{\mathcal{S}_1}{\mathcal{C}_1}}{\idcontinuation{}} \leadsto (\symbolic{v_2}, \assertionSet{}', \mathcal{S}', \mathcal{C}') \\\\
			\bullet \in \{+,\,-_2,\,*,\,/,\,\mathbin{\%},\,<,\,{\leq},\,{=},\,{\neq},\,{\geq},\,>,\,{\land},\,{\lor}\}}
		{\exframe{\tau_1 \bullet\, \tau_2}{\sigma}{\kappa} \leadsto
			\kappa(\biopmap{\bullet}{\symbolic{v_1}}{\symbolic{v_2}},\; \assertionSet{}',\, \mathcal{S}',\, \mathcal{C}')}
		
    \inferrule*[lab={\IRLVar}]
		{\IRvar{x} \in \text{dom}(\rho)}
		{\exframe{\IRvar{x}}{\exstate{\rho}{\assertionSet{}}{\mathcal{S}}{\mathcal{C}}}{\kappa} \leadsto \kappa(\exstate{\rho(\IRvar{x})}{\assertionSet{}}{\mathcal{S}}{\mathcal{C}})}
		\inferrule*[lab={\IRLTick}]
		{c \in \mathbb{R}}
		{\exframe{\IRtick{c}{\tau}}{\exstate{\rho}{\assertionSet{}}{\mathcal{S}}{\mathcal{C}}}{\kappa} \leadsto \exframe{\tau}{\exstate{\rho}{\assertionSet{}}{\mathcal{S}}{\mathcal{C} + c}}{\kappa}}
		
    \inferrule*[lab={\IRLLet}]
		{ \exframe{\tau_1}{\sigma}{\idcontinuation{}} \leadsto (\symbolic{v}, \assertionSet{}', \mathcal{S}', \mathcal{C}')}
		{\exframe{\IRlet{\IRvar{x}}{\tau_1}{\tau_2}}{\sigma}{\kappa} \leadsto
			\exframe{\tau_2}{\exstate{\rho[\IRvar{x} \mapsto \symbolic{v}]}{\assertionSet{}'}{\mathcal{S}'}{\mathcal{C}'}}{\kappa}}
	\end{simulationrules}
	\caption{Execution rules for primitive operations, variable binding, and tick operations.}
	\label{fig:execution-rules-1}
\end{figure}
\paragraph{Primitive operations} such as integer and boolean constants, variable lookups, and unary/binary operations are handled by straightforward rules that evaluate the term and update the state accordingly.
Integer and boolean constants are lifted to Z3 values and variables are looked up in the environment $\rho$.
The \IRLTick{} rule performs a pure cost accumulation operation. It takes a cost value $c \in \mathbb{R}$ (which may be a concrete number or a symbolic expression) and adds it to the current accumulated cost $\mathcal{C}$, producing a new accumulated cost $\mathcal{C} + c$. Critically, this rule has no premises: executing a tick operation always succeeds immediately without performing any computation or evaluation. The environment $\rho$ and assertion set $\assertionSet{}$ as well as structure cache $\mathcal{S}$ remain unchanged, as ticking affects only the cost dimension of the execution state.
All source-language operators are lifted to Z3 by the operator mapping $\opmap{\cdot}$ defined below.
Unary and binary operations are then handled uniformly by two rules:
The lifting $\opmap{\cdot}$ maps each source operator to its Z3 counterpart, as illustrated in Figure~\ref{fig:operator-mapping}.
\begin{figure}[b]
	\scriptsize
	\[
		\begin{array}{rcl@{\qquad}rcl@{\qquad}rcl@{\qquad}rcl}
			\opmap{\lnot}           & = & \mathrm{Z3.Not}                     &
			\opmap{-_1}             & = & \mathrm{Z3.Neg}                     &
			\opmap{\sqrt{\cdot}}    & = & \mathrm{Z3.Sqrt}                    &
			\opmap{\sqrt[3]{\cdot}} & = & \mathrm{Z3.Cbrt}                      \\
			\opmap{+}               & = & \mathrm{Z3.Add}                     &
			\opmap{-_2}             & = & \mathrm{Z3.Sub}                     &
			\opmap{*}               & = & \mathrm{Z3.Mul}                     &
			\opmap{/}               & = & \mathrm{Z3.Div}                       \\
			\opmap{\mathbin{\%}}    & = & \mathrm{Z3.Mod}                     &
			\opmap{<}               & = & \mathrm{Z3.LT}                      &
			\opmap{\leq}            & = & \mathrm{Z3.LE}                      &
			\opmap{>}               & = & \mathrm{Z3.GT}                        \\
			\opmap{\geq}            & = & \mathrm{Z3.GE}                      &
			\opmap{=}               & = & \mathrm{Z3.Eq}                      &
			\opmap{\neq}            & = & \mathrm{Z3.Not}\circ \mathrm{Z3.Eq} &
			\opmap{\land}           & = & \mathrm{Z3.And}                       \\
			\opmap{\lor}            & = & \mathrm{Z3.Or}
		\end{array}
	\]
	\caption{The operator mapping $\opmap{\cdot}$ defines how source language operators are lifted to their Z3 equivalents.}
	\label{fig:operator-mapping}
\end{figure}
The rule for variable binding (\texttt{let}) captures the semantics of introducing a new variable in the current scope:
The premise states that the bound expression $\tau_1$ is evaluated in the current state $\sigma = \exstate{\rho}{\assertionSet{}}{\mathcal{S}}{\mathcal{C}}$. This evaluation produces a value $v$ and potentially modifies the assertion set, structure cache and accumulated cost, yielding updated values $\assertionSet{}'$, $\mathcal{S}'$ and $\mathcal{C}'$. Importantly, these primed values ($\assertionSet{}'$, $\mathcal{S}'$ and $\mathcal{C}'$) are \emph{outputs} of evaluating $\tau_1$, not inputs; they represent the state after $\tau_1$ has been fully evaluated.
Once $\tau_1$ has been evaluated, the conclusion of the rule extends the environment $\rho$ with the new binding \IRvar{x}, and continues by evaluating the body $\tau_2$ in this extended environment $\rho[\IRvar{x} \mapsto \symbolic{v}]$, with the updated assertions $\assertionSet{}'$, structure cache $\mathcal{S}'$, and cost $\mathcal{C}'$ carried forward. This models lexical scoping and allows for variable shadowing, as a new binding for \IRvar{x} can temporarily override an existing one.
\begin{figure}[t]
	\begin{simulationrules}
		\inferrule*[lab={\IRLFun}]
		{ }
		{\exframe{\IRfun{[\vec{x}]}{\tau}}{\sigma}{\kappa} \leadsto \kappa(\closure{\vec{x}}{\tau}{\rho}, \assertionSet{}, \mathcal{S}, \mathcal{C})}
		
    \inferrule*[lab={\IRLCallInit}]
		{\exframe{\tau_f}{\sigma}{\idcontinuation{}} \leadsto (\closure{\IRvar{\vec{x}}}{\tau_b}{\rho_c}, \assertionSet{}', \mathcal{S}', \mathcal{C}')}
		{\exframe{\IRcall{\tau_f}{\vec{\tau_a}}}{\sigma}{\kappa} \leadsto
			\exframe{\IRcall{\closure{\IRvar{\vec{x}}}{\tau_b}{\rho_c}}{\vec{\tau_a}}}{\exstate{\rho}{\assertionSet{}'}{\mathcal{S}'}{\mathcal{C}'}}{\kappa}}
		
    \inferrule*[lab={\IRLCallStep}]
		{i < n \\
			\exframe{\tau_i}{\sigma}{\idcontinuation{}} \leadsto (\symbolic{v_i}, \assertionSet{}_i, \mathcal{S}_i, \mathcal{C}_i)}
		{\exframe{\IRcall{\closure{\IRvar{\vec{x}}}{\tau_b}{\rho_c}}{[\symbolic{v_0}, \dots, \symbolic{v_{i-1}}, \tau_i, \dots, \tau_{n-1}]}}{\sigma}{\kappa} \leadsto \\\\
			\exframe{\IRcall{\closure{\IRvar{\vec{x}}}{\tau_b}{\rho_c}}{[\symbolic{v_0}, \dots, \symbolic{v_i}, \tau_{i+1}, \dots, \tau_{n-1}]}}{\exstate{\rho}{\assertionSet{}_i}{\mathcal{S}_i}{\mathcal{C}_i}}{\kappa}}
		
    \inferrule*[lab={\IRLCallEval}]
		{cl = \closure{\IRvar{[\IRvar{x_1},\dots,\IRvar{x_n}]}}{\tau_b}{\rho_c} }
		{\exframe{\IRcall{cl}{[\symbolic{v_1}, \dots, \symbolic{v_n}]}}{\sigma}{\kappa} \leadsto \\
			\exframe{\tau_b}{\exstate{\rho_c[\IRvar{x_1} \mapsto \symbolic{v_1}, \dots, \IRvar{x_n} \mapsto \symbolic{v_n}]}{\assertionSet{}}{\mathcal{S}}{\mathcal{C}}}{\kappa}}
		
      \inferrule*[lab={\IRLCallPartial}]
		{cl = \closure{\IRvar{[\IRvar{x_1},\dots,\IRvar{x_n}]}}{\tau_b}{\rho_c} \\
			m < n}
		{\exframe{\IRcall{cl}{[\symbolic{v_1}, \dots, \symbolic{v_m}]}}{\sigma}{\kappa} \leadsto
			\kappa(\closure{\IRvar{[\IRvar{x_{m+1}},\dots,\IRvar{x_n}]}}{\tau_b}{\rho_c[\IRvar{x_1} \mapsto \symbolic{v_1}, \dots, \IRvar{x_m} \mapsto \symbolic{v_m}]},\assertionSet{}, \mathcal{S}, \mathcal{C})}
	\end{simulationrules}
	\caption{Execution rules for function calls and closures.}
	\label{fig:execution-rules-2}
\end{figure}
\paragraph{Function Operations.}
Function operations involve creating closures and invoking them through a sequence of rules:
{\IRLFun{}.} Creates a closure $\closure{\vec{x}}{\tau}{\rho}$ that captures the formal parameters $\vec{x}$, function body $\tau$, and current environment $\rho$. No computation is performed.
	{\IRLCallInit{}.} Evaluates the function expression $\tau_f$ to obtain a closure value. This allows first-class functions (functions stored in variables or returned from other functions).
	{\IRLCallStep{}.} Evaluates arguments left-to-right. When argument $i$ is not yet a value, it is evaluated with continuation $\idcontinuation{}$, producing value $v_i$ and updated state. The call expression progresses with $\tau_i$ replaced by $v_i$.
	{\IRLCallEval{}.} Once all arguments are values, the closure is unpacked: the function body $\tau_b$ is evaluated in the closure's captured environment $\rho_c$ extended with parameter bindings $\IRvar{x_i} \mapsto \symbolic{v_i}$. Using $\rho_c$ (not the current environment $\rho$) implements lexical scoping: variables are resolved according to where the function was defined, not where it is called.
	{\IRLCallPartial{}.} When fewer arguments ($m$) are provided than the closure expects ($n$), a new closure is returned with the first $m$ parameters bound in its environment, awaiting the remaining $n - m$ arguments. This enables currying.
\paragraph{Control Flow.}
Conditional branching handles both concrete and symbolic conditions through four rules:
{\IRLIfElseTrue{} / \IRLIfElseFalse{}.} When the condition $\tau_c$ evaluates to $\symbolic{v_c}$ that simplifies to $\top$ (respectively $\bot$), only the then-branch (respectively else-branch) is explored. No branching occurs.
	{\IRLIfElseThen{} / \IRLIfElseElse{}.} When $\symbolic{v_c}$ is symbolic and does not simplify to a concrete boolean, the SMT solver checks satisfiability: if $\text{Z3.sat}(\assertionSet{}' \cup \{\symbolic{v_c}\})$ holds, the then-branch is explored with assertions $\assertionSet{}' \cup \{\symbolic{v_c}\}$; if $\text{Z3.sat}(\assertionSet{}' \cup \{\neg \symbolic{v_c}\})$ holds, the else-branch is explored with assertions $\assertionSet{}' \cup \{\neg \symbolic{v_c}\}$. When both are satisfiable, symbolic execution forks into two independent paths, each carrying its respective path constraint. This enables exhaustive exploration of all feasible control-flow paths.
The \IRLRaise{} rule handles exceptions by immediately terminating the current execution path with an exception message $m$. No continuation is invoked, and no further computation occurs on this path. This models exceptional control flow, which is important for handling error conditions in the analysed programs.
\begin{figure}[t]
	\begin{simulationrules}
		\inferrule*[lab={\IRLIfElseTrue}]
		{\exframe{\tau_c}{\sigma}{\idcontinuation{}} \leadsto (\symbolic{v_c}, \assertionSet{}', \mathcal{S}', \mathcal{C}') \\\\
			\text{Z3.simplify}(\symbolic{v_c}) = \top}
		{\exframe{\IRifelse{\tau_c}{\tau_1}{\tau_2}}{\sigma}{\kappa} \leadsto
			\exframe{\tau_1}{\exstate{\rho}{\assertionSet{}'}{\mathcal{S}'}{\mathcal{C}'}}{\kappa}}
		\inferrule*[lab={\IRLIfElseFalse}]
		{\exframe{\tau_c}{\sigma}{\idcontinuation{}} \leadsto (\symbolic{v_c}, \assertionSet{}', \mathcal{S}', \mathcal{C}') \\\\
			\text{Z3.simplify}(\symbolic{v_c}) = \bot}
		{\exframe{\IRifelse{\tau_c}{\tau_1}{\tau_2}}{\sigma}{\kappa} \leadsto
			\exframe{\tau_2}{\exstate{\rho}{\assertionSet{}'}{\mathcal{S}'}{\mathcal{C}'}}{\kappa}}
		
    \inferrule*[lab={\IRLIfElseThen}]
		{\exframe{\tau_c}{\sigma}{\idcontinuation{}} \leadsto (\symbolic{v_c}, \assertionSet{}', \mathcal{S}', \mathcal{C}') \\\\
			\text{Z3.simplify}(\symbolic{v_c}) \notin \{\top, \bot\} \\\\
			\text{Z3.sat}(\assertionSet{}' \cup \{\symbolic{v_c}\})}
		{\exframe{\IRifelse{\tau_c}{\tau_1}{\tau_2}}{\sigma}{\kappa} \leadsto \\\\
			\exframe{\tau_1}{\exstate{\rho}{\assertionSet{}' \cup \{\symbolic{v_c}\}}{\mathcal{S}'}{\mathcal{C}'}}{\kappa}}
		\inferrule*[lab={\IRLIfElseElse}]
		{\exframe{\tau_c}{\sigma}{\idcontinuation{}} \leadsto (\symbolic{v_c}, \assertionSet{}', \mathcal{S}', \mathcal{C}') \\\\
			\text{Z3.simplify}(\symbolic{v_c}) \notin \{\top, \bot\} \\\\
			\text{Z3.sat}(\assertionSet{}' \cup \{\neg \symbolic{v_c}\})}
		{\exframe{\IRifelse{\tau_c}{\tau_1}{\tau_2}}{\sigma}{\kappa} \leadsto \\\\
			\exframe{\tau_2}{\exstate{\rho}{\assertionSet{}' \cup \{\neg \symbolic{v_c}\}}{\mathcal{S}'}{\mathcal{C}'}}{\kappa}}
	\end{simulationrules}
	\caption{Execution rules for control flow constructs.}
	\label{fig:execution-rules-control-flow}
\end{figure}
\paragraph{List Operations.}
The empty list \IRnil{} maps to a distinguished value $\widehat{\text{nil}}$.
List operations support both construction and pattern matching on concrete lists and symbolic list descriptors:
{\IRLCons{}.} Constructs a list by evaluating head $\tau_1$ to $\symbolic{v_h}$ and tail $\tau_2$ to $\symbolic{v_t}$, producing $\symbolic{v_h} :: \symbolic{v_t}$.
	{\IRLSplitNil{} / \IRLSplitCons{}.} Pattern matching on concrete lists: when $\tau_v$ evaluates to $\symbolicNil$, the empty branch $\tau_e$ is executed; when $\tau_v$ evaluates to $\symbolic{v_h} :: \symbolic{v_t}$, variables $\IRvar{h}$ and $\IRvar{t}$ are bound and the non-empty branch $\tau_n$ is executed.
	{\IRLSplitVarEmpty{}.} When $\tau_v$ evaluates to a list descriptor $\descriptiveList{i}{p}{m}{M}{T}$ with offset $i \geq m$, the empty branch can be explored. An end-of-list marker $\EOL{}$ is cached at position $i$ in $\mathcal{S}_{\text{list}}$, terminating this list at length $i$.
	{\IRLSplitVarNonemptyFresh{} / \IRLSplitVarNonemptyCached{}.} When offset $i < M$, the non-empty branch can be explored. If position $i$ is not cached, a fresh symbolic value $\symbolic{v_h}$ is created (via $\buildValue{\argspecification{f}{\theta}{\phi}}$) and cached; if already cached (and not $\EOL{}$), the cached value is reused. The head is bound to $\IRvar{h}$, the tail becomes descriptor $\descriptiveList{i+1}{p}{m}{M}{T}$ with incremented offset, and the non-empty branch $\tau_n$ is executed. Together, these rules enable exploration of all list lengths in $[m, M]$, with the cache ensuring consistent symbolic values across multiple accesses to the same position.
\begin{figure}[t]
	\begin{simulationrules}
		\inferrule*[lab={\IRLNil}]
		{ }
		{\exframe{\IRlist{}}{\exstate{\rho}{\assertionSet{}}{\mathcal{S}}{\mathcal{C}}}{\kappa} \leadsto \kappa(\symbolicNil, \assertionSet{}, \mathcal{S}, \mathcal{C})}
		\inferrule*[lab={\IRLSplitNil}]
		{\exframe{\tau_v}{\sigma}{\idcontinuation{}} \leadsto (\symbolicNil, \assertionSet{}', \mathcal{S}', \mathcal{C}')}
		{\exframe{\IRsplit{\tau_v}{\tau_e}{\IRvar{h}}{\IRvar{t}}{\tau_n}}{\sigma}{\kappa} \leadsto
			\exframe{\tau_e}{\exstate{\rho}{\assertionSet{}'}{\mathcal{S}'}{\mathcal{C}'}}{\kappa}}
		
      \inferrule*[lab={\IRLCons}]
		{\exframe{\tau_1}{\sigma}{\idcontinuation} \leadsto (\symbolic{v_h}, \assertionSet{}_1, \mathcal{S}_1, \mathcal{C}_1) \\\\
			\exframe{\tau_2}{\exstate{\rho}{\assertionSet{}_1}{\mathcal{S}_1}{\mathcal{C}_1}}{\idcontinuation} \leadsto (\symbolic{v_t}, \assertionSet{}', \mathcal{S}', \mathcal{C}')}
		{\exframe{\IRcons{\tau_1}{\tau_2}}{\sigma}{\kappa} \leadsto
			\kappa(\symbolic{v_h} :: \symbolic{v_t}, \assertionSet{}', \mathcal{S}', \mathcal{C}')}
		\inferrule*[lab={\IRLSplitCons}]
		{\exframe{\tau_v}{\sigma}{\idcontinuation{}} \leadsto (\symbolic{v_h} :: \symbolic{v_t}, \assertionSet{}', \mathcal{S}', \mathcal{C}')}
		{\exframe{\IRsplit{\tau_v}{\tau_e}{\IRvar{h}}{\IRvar{t}}{\tau_n}}{\sigma}{\kappa} \leadsto \\\\
			\exframe{\tau_n}{\exstate{\rho[\IRvar{h} \mapsto \symbolic{v_h}, \IRvar{t} \mapsto \symbolic{v_t}]}{\assertionSet{}'}{\mathcal{S}'}{\mathcal{C}'}}{\kappa}}
	\end{simulationrules}
	\begin{simulationrules}
		\inferrule*[lab={\IRLSplitVarEmpty}]
		{\exframe{\tau_v}{\sigma}{\idcontinuation{}} \leadsto (\descriptor{d_l}, \assertionSet{}', (\mathcal{S_{\text{list}}}',\mathcal{S_{\text{tuple}}}',\mathcal{S_{\text{tree}}}'), \mathcal{C}') \\
		\descriptor{d_l} = \descriptiveList{i}{p}{m}{M}{T} \\
		m \leq i \\
		f = \text{concat}(p, i) \\
		f \notin dom(\mathcal{S_{\text{list}}}') \lor \mathcal{S_{\text{list}}}'(f) = \EOL{}}
		{\exframe{\IRsplit{\tau_v}{\tau_e}{\IRvar{h}}{\IRvar{t}}{\tau_n}}{\sigma}{\kappa} \leadsto
		\exframe{\tau_e}{\exstate{\rho}{\assertionSet{}'}{(\mathcal{S_{\text{list}}}'[f \mapsto \EOL{}],\mathcal{S_{\text{tuple}}}',\mathcal{S_{\text{tree}}}')}{\mathcal{C}'}}{\kappa}}
		
    \inferrule*[lab={\IRLSplitVarNonemptyFresh}]
		{\exframe{\tau_v}{\sigma}{\idcontinuation{}} \leadsto (\descriptor{d_l}, \assertionSet{}', (\mathcal{S_{\text{list}}}',\mathcal{S_{\text{tuple}}}',\mathcal{S_{\text{tree}}}'), \mathcal{C}') \\
		\descriptor{d_l} = \descriptiveList{i}{p}{m}{M}{T} \\
		i < M \\
		f=\text{concat}(p, i) \\
		f \notin dom(\mathcal{S_{\text{list}}}') \\
		\symbolic{v_h}=\buildValue{\argspecification{f}{\theta}{\phi}} \\
		\descriptor{d_l'} = \descriptiveList{i+1}{p}{m}{M}{T}}
		{\exframe{\IRsplit{\tau_v}{\tau_e}{\IRvar{h}}{\IRvar{t}}{\tau_n}}{\sigma}{\kappa} \leadsto
		\exframe{\tau_n}{\exstate{\rho[\IRvar{h} \mapsto \symbolic{v_h}, \IRvar{t} \mapsto \descriptor{d_l'}]}{\assertionSet{}'}{(\mathcal{S_{\text{list}}}'[f \mapsto \symbolic{v_h}],\mathcal{S_{\text{tuple}}}',\mathcal{S_{\text{tree}}}')}{\mathcal{C}'}}{\kappa}}
		
    \inferrule*[lab={\IRLSplitVarNonemptyCached}]
		{\exframe{\tau_v}{\sigma}{\idcontinuation{}} \leadsto (\descriptor{d_l}, \assertionSet{}', (\mathcal{S_{\text{list}}}',\mathcal{S_{\text{tuple}}}',\mathcal{S_{\text{tree}}}'), \mathcal{C}') \\
			\descriptor{d_l} = \descriptiveList{i}{p}{m}{M}{T} \\
			i < M \\
			\mathcal{S_{\text{list}}}'(\text{concat}(p,i))=\symbolic{v_h} \\
			\symbolic{v_h} \ne \EOL{} \\
			\descriptor{d_l'} = \descriptiveList{i+1}{p}{m}{M}{T}}
		{\exframe{\IRsplit{\tau_v}{\tau_e}{\IRvar{h}}{\IRvar{t}}{\tau_n}}{\sigma}{\kappa} \leadsto
			\exframe{\tau_n}{\exstate{\rho[\IRvar{h} \mapsto \symbolic{v_h}, \IRvar{t} \mapsto \descriptor{d_l'}]}{\assertionSet{}'}{(\mathcal{S_{\text{list}}}',\mathcal{S_{\text{tuple}}}',\mathcal{S_{\text{tree}}}')}{\mathcal{C}'}}{\kappa}}
	\end{simulationrules}
	\caption{Execution rules for list operations.}
	\label{fig:execution-rules-lists}
\end{figure}
\paragraph{Tuples and Trees.}
The execution rules for tuples and trees follow a similar structure to lists but with fixed arity (for tuples) and branching structure (for trees).
The key distinction is that tuples have a predetermined number of components accessed via \texttt{unpack}, while trees support recursive pattern matching on node and leaf constructors.
\paragraph{Descriptors.}
Algebraic data types and tuples are handled through construction and destructuring rules, which are defined in Figure~\ref{fig:execution-rules-data-structures} and Figure~\ref{fig:execution-rules-tree-data-structures}.
{\IRLPackStep{} / \IRLPackVal{}.} Construction evaluates components left-to-right via \IRLPackStep: each component $\tau_i$ is evaluated to $\symbolic{v_i}$ with updated state, then replaced in the pack expression. Once all components are values, IRLPackVal produces $C(\symbolic{v_1}, \dots, \symbolic{v_n})$ where $C$ is either a tuple constructor or ADT variant tag.
	{\IRLUnpack{}.} Pattern matching on concrete constructed values: scrutineer $\tau_v$ evaluates to $C(\symbolic{v_1}, \dots, \symbolic{v_n})$, variables $\IRvar{x_1}, \dots, \IRvar{x_n}$ are bound to the components, and body $\tau_r$ executes.
	{\IRLUnpackTupleCached{} / \IRLUnpackTupleFresh{}.} When $\tau_v$ evaluates to tuple descriptor $\descriptiveTuple{p}{n}{C}{\vec{T}}$, if pointer $p$ is cached in $\mathcal{S}_{\text{tuple}}'$, retrieve cached values $C(\symbolic{v_1}, \dots, \symbolic{v_n})$ and bind to variables; otherwise, generate fresh symbolic values via $\buildValue{\argspecification{\text{concat}(p, i)}{T_i}{\phi_i}}$ for each component, cache the constructed tuple at $p$, and bind variables.
	{\IRLUnpackTreeLeafCached{} / \IRLUnpackTreeNodeCached{}.} When $\tau_v$ evaluates to tree descriptor $\descriptiveTree{s}{r}{m}{M}{L}{N}{\vec{T}}$ with pointer $p = \text{concat}(r, s)$ cached, retrieve cached value: if $\mathcal{S_{\text{tree}}}'(p) = L()$ execute leaf branch $\tau_L$; if $\mathcal{S_{\text{tree}}}'(p) = N(\symbolic{v_1}, \dots, \symbolic{v_n})$ bind node components and execute $\tau_N$.
	{\IRLUnpackTreeLeafFresh{} / \IRLUnpackTreeNodeFresh{}.} When pointer $p$ is not cached, explore both branches based on node count bounds: if minimum node count $m$ is satisfied, cache $L()$ and execute leaf branch; if maximum node count $M$ is not exceeded, generate fresh node values $\buildValue{\argspecification{\text{concat}(p, i)}{T_i}{\phi_i}}$, cache $N(\symbolic{v_1}, \dots, \symbolic{v_n})$, and execute node branch. This enables exploration of all tree structures within specified bounds.
\begin{figure}[t]
	\begin{simulationrules}
		\inferrule*[lab={\IRLPackStep}]
		{i < n \\
			\exframe{\tau_i}{\sigma}{\idcontinuation{}} \leadsto (\symbolic{v_i}, \assertionSet{}_i, \mathcal{S}_i, \mathcal{C}_i)}
		{\exframe{\IRpack{C}{[\symbolic{v_0}, \dots, \symbolic{v_{i-1}}, \tau_i, \dots, \tau_{n-1}]}}{\sigma}{\kappa} \leadsto
			\exframe{\IRpack{C}{[\symbolic{v_0}, \dots, \symbolic{v_i}, \tau_{i+1}, \dots, \tau_{n-1}]}}{\exstate{\rho}{\assertionSet{}_i}{\mathcal{S}_i}{\mathcal{C}_i}}{\kappa}}
		
    \inferrule*[lab={\IRLPackVal}]
		{\phantom{.}}
		{\exframe{\IRpack{C}{[\symbolic{v_1}, \dots, \symbolic{v_n}]}}{\sigma}{\kappa} \leadsto \\\\
			\kappa(C(\symbolic{v_1}, \dots, \symbolic{v_n}), \assertionSet{}, \mathcal{S}, \mathcal{C})}
		\inferrule*[lab={\IRLUnpack}]
		{\exframe{\tau_v}{\sigma}{\idcontinuation{}} \leadsto (C(\symbolic{v_1}, \dots, \symbolic{v_n}), \assertionSet{}', \mathcal{S}', \mathcal{C}')}
		{\exframe{\IRunpack{\tau_v}{[\dots,(C,[\IRvar{x_1}, \dots, \IRvar{x_n}],\tau_r),\dots]}}{\sigma}{\kappa} \leadsto \\\\
			\exframe{\tau_r}{\exstate{\rho[\IRvar{x_1} \mapsto \symbolic{v_1}, \dots, \IRvar{x_n} \mapsto \symbolic{v_n}]}{\assertionSet{}'}{\mathcal{S}'}{\mathcal{C}'}}{\kappa}}
		
      \inferrule*[lab={\IRLUnpackTupleCached}]
		{\exframe{\tau_v}{\sigma}{\idcontinuation{}} \leadsto (\descriptor{d_t}, \assertionSet{}', (\mathcal{S_{\text{list}}}',\mathcal{S_{\text{tuple}}}',\mathcal{S_{\text{tree}}}'), \mathcal{C}') \\
			\descriptor{d_t} = \descriptiveTuple{p}{n}{C}{\vec{T}} \\
			\mathcal{S_{\text{tuple}}}'(p) = C(\symbolic{v_1}, \dots, \symbolic{v_n}) }
		{\exframe{\IRunpack{\tau_v}{[\dots,(C,[\IRvar{x_1}, \dots, \IRvar{x_n}],\tau_r),\dots]}}{\sigma}{\kappa} \leadsto \\\\
			\exframe{\tau_r}{\exstate{\rho[\IRvar{x_1} \mapsto \symbolic{v_1}, \dots, \IRvar{x_n} \mapsto \symbolic{v_n}]}{\assertionSet{}'}{(\mathcal{S_{\text{list}}}',\mathcal{S_{\text{tuple}}}',\mathcal{S_{\text{tree}}}')}{\mathcal{C}'}}{\kappa}}
		
    \inferrule*[lab={\IRLUnpackTupleFresh}]
		{\exframe{\tau_v}{\sigma}{\idcontinuation{}} \leadsto (\descriptor{d_t}, \assertionSet{}', (\mathcal{S_{\text{list}}}',\mathcal{S_{\text{tuple}}}',\mathcal{S_{\text{tree}}}'), \mathcal{C}') \\
		\descriptor{d_t} = \descriptiveTuple{p}{n}{C}{[T_1, \dots, T_n]} \\
		p \notin dom(\mathcal{S_{\text{tuple}}}') \\
		\symbolic{v_i} = \buildValue{ \argspecification{\text{concat}(p, i)}{\theta_i}{\phi_i} }}
		{\exframe{\IRunpack{\tau_v}{[\dots,(C,[\IRvar{x_1}, \dots, \IRvar{x_n}],\tau_r),\dots]}}{\sigma}{\kappa} \leadsto \\
		\exframe{\tau_r}{\exstate{\rho[\IRvar{x_1} \mapsto \symbolic{v_1}, \dots, \IRvar{x_n} \mapsto \symbolic{v_n}]}{\assertionSet{}'}{(\mathcal{S_{\text{list}}}',\mathcal{S_{\text{tuple}}}'[p \mapsto C(\symbolic{v_1}, \dots, \symbolic{v_n})],\mathcal{S_{\text{tree}}}')}{\mathcal{C}'}}{\kappa}}
	\end{simulationrules}
	\caption{Execution rules for tuples and algebraic data types.}
	\label{fig:execution-rules-data-structures}
\end{figure}
\begin{figure}[t]
	\begin{simulationrules}
		\inferrule*[lab={\IRLUnpackTreeLeafCached}]
		{\exframe{\tau_v}{\sigma}{\idcontinuation{}} \leadsto (\descriptor{d_t}, \assertionSet{}', (\mathcal{S_{\text{list}}}',\mathcal{S_{\text{tuple}}}',\mathcal{S_{\text{tree}}}'), \mathcal{C}') \\
			\descriptor{d_t} = \descriptiveTree{s}{r}{m}{M}{L}{N}{[T_1, \dots, T_n]} \\
			p = \text{concat}(r, s) \\
			\mathcal{S_{\text{tree}}}'(p) = L()}
		{\exframe{\IRunpack{\tau_v}{[\dots,(L,[],\tau_L),\dots,(N,[\IRvar{x_1}, \dots, \IRvar{x_n}],\tau_N),\dots]}}{\sigma}{\kappa} \leadsto \\
			\exframe{\tau_L}{\exstate{\rho}{\assertionSet{}'}{(\mathcal{S_{\text{list}}}',\mathcal{S_{\text{tuple}}}',\mathcal{S_{\text{tree}}}')}{\mathcal{C}'}}{\kappa}}
		
    \inferrule*[lab={\IRLUnpackTreeNodeCached}]
		{\exframe{\tau_v}{\sigma}{\idcontinuation{}} \leadsto (\descriptor{d_t}, \assertionSet{}', (\mathcal{S_{\text{list}}}',\mathcal{S_{\text{tuple}}}',\mathcal{S_{\text{tree}}}'), \mathcal{C}') \\
			\descriptor{d_t} = \descriptiveTree{s}{r}{m}{M}{L}{N}{[T_1, \dots, T_n]} \\
			p = \text{concat}(r, s) \\
			\mathcal{S_{\text{tree}}}'(p) = N(\symbolic{v_1}, \dots, \symbolic{v_n})}
		{\exframe{\IRunpack{\tau_v}{[\dots,(L,[],\tau_L),\dots,(N,[\IRvar{x_1}, \dots, \IRvar{x_n}],\tau_N),\dots]}}{\sigma}{\kappa} \leadsto \\
			\exframe{\tau_N}{\exstate{\rho[\IRvar{x_1} \mapsto \symbolic{v_1}, \dots, \IRvar{x_n} \mapsto \symbolic{v_n}]}{\assertionSet{}'}{(\mathcal{S_{\text{list}}}',\mathcal{S_{\text{tuple}}}',\mathcal{S_{\text{tree}}}')}{\mathcal{C}'}}{\kappa}}
		
    \inferrule*[lab={\IRLUnpackTreeLeafFresh}]
		{\exframe{\tau_v}{\sigma}{\idcontinuation{}} \leadsto (\descriptor{d_t}, \assertionSet{}', (\mathcal{S_{\text{list}}}',\mathcal{S_{\text{tuple}}}',\mathcal{S_{\text{tree}}}'), \mathcal{C}') \\
		\descriptor{d_t} = \descriptiveTree{s}{r}{m}{M}{L}{N}{[T_1, \dots, T_n]} \\
		p = \text{concat}(r, s) \\
		p \notin dom(\mathcal{S_{\text{tree}}}') \\
		m \leq \#\{x | \mathcal{S_{\text{tree}}}'(\text{concat(r, x)}) = N(\dots)\}
		}
		{\exframe{\IRunpack{\tau_v}{[\dots,(L,[],\tau_L),\dots,(N,[\IRvar{x_1}, \dots, \IRvar{x_n}],\tau_N),\dots]}}{\sigma}{\kappa} \leadsto \\
		\exframe{\tau_L}{\exstate{\rho}{\assertionSet{}'}{(\mathcal{S_{\text{list}}}',\mathcal{S_{\text{tuple}}}',\mathcal{S_{\text{tree}}}'[p \mapsto L()])}{\mathcal{C}'}}{\kappa}}
		
    \inferrule*[lab={\IRLUnpackTreeNodeFresh}]
		{\exframe{\tau_v}{\sigma}{\idcontinuation{}} \leadsto (\descriptor{d_t}, \assertionSet{}', (\mathcal{S_{\text{list}}}',\mathcal{S_{\text{tuple}}}',\mathcal{S_{\text{tree}}}'), \mathcal{C}') \\
		\descriptor{d_t} = \descriptiveTree{s}{r}{m}{M}{L}{N}{[T_1, \dots, T_n]} \\
		p = \text{concat}(r, s) \\
		p \notin dom(\mathcal{S_{\text{tree}}}') \\
		\#\{x | \mathcal{S_{\text{tree}}}'(\text{concat(r, x)}) = N(\dots)\} \leq M \\
		\symbolic{v_i} = \buildValue{\argspecification{\text{concat}(p, i)}{\theta_i}{\phi_i}}
		}
		{\exframe{\IRunpack{\tau_v}{[\dots,(L,[],\tau_L),\dots,(N,[\IRvar{x_1}, \dots, \IRvar{x_n}],\tau_N),\dots]}}{\sigma}{\kappa} \leadsto \\
		\exframe{\tau_N}{\exstate{\rho[\IRvar{x_1} \mapsto \symbolic{v_1}, \dots, \IRvar{x_n} \mapsto \symbolic{v_n}]}{\assertionSet{}'}{(\mathcal{S_{\text{list}}}',\mathcal{S_{\text{tuple}}}',\mathcal{S_{\text{tree}}}'[p \mapsto N(\symbolic{v_1},\dots\,\symbolic{v_n})])}{\mathcal{C}'}}{\kappa}}
	\end{simulationrules}
	\caption{Execution rules for tree data structures.}
	\label{fig:execution-rules-tree-data-structures}
\end{figure}
\section{Implementation Details}
\label{appendix:implementation}
This section provides detailed implementation information for the three main components of our toolchain: (i) the OCaml-to-core-language translator, which converts source programs into our intermediate representation; (ii) the symbolic execution engine, which explores all feasible execution paths; and (iii) the statistical analysis engine, which uses mixed integer linear programming to infer polynomial complexity bounds from the collected observations.
\subsection{Implementation of the OCaml Translation}\label{section:tool:translation}
To automate the analysis of existing complexity example programs, we implemented a translation component from OCaml to our core language.
Figure~\ref{fig:IRtransformationInference} defines the transformation $\trcore{e} = \tau$ that maps an OCaml expression $e$ to its core language AST $\tau$. The definition is inductive on the structure of expressions, covering literals, variables, function definitions and applications, control flow constructs, data structures, and operations. Each OCaml construct is systematically translated into a corresponding core language representation, preserving the semantics while enabling uniform analysis in the core language.
Note that while the transformation focuses on a subset of OCaml constructs relevant to our examples, it can be extended to cover additional language features as needed for broader applicability.
\begin{figure}[t]
	\scriptsize
	\[
		\begin{array}{c}
			\inferrule
			{n \in \mathbb{Z}}
			{n \Rightarrow \IRconst{n}}
			\qquad
			\inferrule
			{b \in \{\texttt{true},\texttt{false}\}}
			{b \Rightarrow b}
			\qquad
			\inferrule
			{x \in \mathcal{V}}
			{x \Rightarrow \IRvar{x}}
			\qquad
			\inferrule
			{f \in \mathcal{F}}
			{f \Rightarrow \IRfunName{f}}
			\\[1.2em]
			\inferrule
			{\oplus_1 \in \{-, \sqrt{}, \sqrt[3]{}, \lnot\} \\
				e_1 \Rightarrow \tau_1}
			{\oplus_1 e_1
				\Rightarrow
				\IRUnaryOp{\oplus_1}{\tau_1}}
			\\[1.2em]
			\inferrule
			{\oplus_2 \in \{+, -, \dots, \texttt{mod}\}     \\
			e_1 \Rightarrow \tau_1                          \\
				e_2 \Rightarrow \tau_2}
			{e_1 \oplus_2 e_2
				\Rightarrow
				\IROp{\tau_1}{\oplus_2}{\tau_2}}
			\\[1.2em]
			\inferrule
			{c \in \mathbb{R}                               \\
				e \Rightarrow \tau}
			{\texttt{let \_ = tick}(c)\texttt{ in }e
				\Rightarrow
				\IRtick{c}{\tau}}
			\qquad
			\inferrule
			{m \in \texttt{String}}
			{\texttt{raise } m
				\Rightarrow
				\IRraise{m}}
			\qquad
			\inferrule
			{\trcore{e_1} \Rightarrow \tau_1                \\
				\trcore{e_2} \Rightarrow \tau_2}
			{\trcore{e_1; e_2} \Rightarrow
				\trcore{\text{let \_ = } e_1 \text{ in } e_2}}
			\\[1.2em]
			\inferrule
			{\trcore{e_c} \Rightarrow \tau_c                \\
			\trcore{e_1} \Rightarrow \tau_1                 \\
				\trcore{e_2} \Rightarrow \tau_2}
			{\trcore{\text{if } e_c \text{ then } e_1 \text{ else } e_2}
				\Rightarrow
				\IRifelse{\tau_c}{\tau_1}{\tau_2}}
			\\[1.2em]
			\inferrule
			{\trcore{e_1} \Rightarrow \tau_1                \\
				\trcore{e_2} \Rightarrow \tau_2}
			{\trcore{\text{let } x = e_1 \text{ in } e_2}
				\Rightarrow
				\IRlet{x}{\tau_1}{\tau_2}}
			\\[1.2em]
			\inferrule
			{\trcore{e_b} \Rightarrow \tau_b}
			{\trcore{\text{let [rec] } f\ x_1\dots x_n = e_b}
				\Rightarrow
				\IRfun{[x_1,\dots,x_n]}{\tau_b}}
			\\[1.2em]
			\inferrule
			{\trcore{e_b} \Rightarrow \tau_b                \\
				\trcore{e} \Rightarrow \tau}
			{\trcore{\text{let [rec] } f\ x_1\dots x_n = e_b \text{ in } e}
				\Rightarrow
				\IRlet{f}{\IRfun{[x_1,\dots,x_n]}{\tau_b}}{\tau}}
			\\[1.2em]
			\inferrule
			{\trcore{e_f} \Rightarrow \tau_f                \\
			\trcore{e_{a_1}} \Rightarrow \tau_1             \\
			\cdots                                          \\
				\trcore{e_{a_n}} \Rightarrow \tau_n}
			{\trcore{e_f\ e_{a_1}\dots e_{a_n}}
				\Rightarrow
				\IRcall{\tau_f}{[\tau_1,\dots,\tau_n]}}
			\\[1.2em]
			\inferrule
			{\trcore{e_1} \Rightarrow \tau_1                \\
				\trcore{e_2} \Rightarrow \tau_2}
			{\trcore{e_1 :: e_2}
				\Rightarrow
				\IRcons{\tau_1}{\tau_2}}
			\\[1.2em]
			\inferrule
			{\trcore{e_v} \Rightarrow \tau_v                \\
			\trcore{e_1} \Rightarrow \tau_1                 \\
				\trcore{e_2} \Rightarrow \tau_2}
			{\trcore{\text{match } e_v \text{ with }
					\mid [] \to e_1
					\mid h::t \to e_2}
				\Rightarrow
				\IRsplit{\tau_v}{\tau_1}{h}{t}{\tau_2}}
			\\[1.2em]
			\inferrule
			{\trcore{e_1} \Rightarrow \tau_1                \\
			\cdots                                          \\
				\trcore{e_n} \Rightarrow \tau_n}
			{\trcore{C(e_1,\dots,e_n)}
				\Rightarrow
				\IRpack{C}{[\tau_1,\dots,\tau_n]}}
			\\[1.2em]
			\inferrule
			{\trcore{e_v} \Rightarrow \tau_v                \\
			\trcore{e_1} \Rightarrow \tau_1                 \\
			\cdots                                          \\
				\trcore{e_n} \Rightarrow \tau_n}
			{\trcore{\text{match } e_v \text{ with }
					\mid C_1(\vec{x}_1)\to e_1
					\mid \dots
					\mid C_n(\vec{x}_n)\to e_n}
				\Rightarrow
				\IRunpack{\tau_v}
				{[(C_1,\vec{x}_1,\tau_1),\dots,(C_n,\vec{x}_n,\tau_n)]}}
			\\[1.2em]
			\inferrule
			{\trcore{e_1} \Rightarrow \tau_1}
			{\trcore{\oplus_1 e_1}
				\Rightarrow
				\IRUnaryOp{\oplus_1}{\tau_1}}
			\qquad
			\inferrule
			{\trcore{e_1} \Rightarrow \tau_1                \\
				\trcore{e_2} \Rightarrow \tau_2}
			{\trcore{e_1 \oplus_2 e_2}
				\Rightarrow
				\IROp{\tau_1}{\oplus_2}{\tau_2}}
			\\[1.2em]
			\inferrule
			{\trcore{e} \Rightarrow \tau}
			{\trcore{\text{let \_ = tick}(c)\text{ in }e}
				\Rightarrow
				\IRtick{c}{\tau}}
			\qquad
			\inferrule
			{}
			{\trcore{\text{raise } m}
				\Rightarrow
				\IRraise{m}}
		\end{array}
	\]
	\caption{Translation relation from OCaml expressions to the core IR.}
	\label{fig:IRtransformationInference}
\end{figure}
The translator is implemented in OCaml and leverages the language's native parser (the \texttt{Parsetree} module from the OCaml compiler) to obtain the abstract syntax tree (AST) of the input program. The translated core language AST is then serialised to JSON using the Yojson library, enabling interoperability with the JavaScript-based symbolic execution engine.
Core language terms ($\tau$) represent the static program structure as defined by the grammar in Figure~\ref{fig:IRsyntax}.
The translation proceeds by structural recursion over the OCaml AST. Most OCaml constructs have direct correspondences in our core language:
\paragraph{Operators.} Binary arithmetic operators (\code{+}, \code{-}, \code{*}, \code{/}) and comparison operators (\code{<}, \code{<=}, \code{=}, \code{!=}, \code{>=}, \code{>}) are translated directly to their core language equivalents. Some operators require renaming: \code{mod} becomes \code{\%}, \code{\&\&} becomes \code{\&}, and \code{||} becomes \code{|}. Unary operators are similarly handled, with \code{not} translated to \code{!} and \code{sqrt} preserved. The translator also recognises special function applications like \code{Raml.tick(n)} and extracts them as explicit cost annotations in the core language representation.
\paragraph{Functions.} Function definitions are handled by traversing \texttt{Pexp\_fun} chains to extract the complete parameter list and function body. Recursive functions (marked with the \texttt{rec} flag in OCaml's AST) are detected and appropriately represented. Functions may be defined at the top level or within \code{let} bindings, and both cases are handled uniformly. The translator produces a core language function definition with an explicit parameter list and body.
\paragraph{Let Bindings.} Let expressions (\texttt{Pexp\_let}) are translated to \code{let} operations in the core language. The translator handles both single and multiple sequential bindings, threading the environment through each binding. Pattern-based let bindings (e.g., \code{let (x, y) = ...}) are desugared into \code{unpack} operations. Anonymous let bindings (\code{let \_ = ...}) that contain tick operations are recognised and translated to explicit cost annotations.
\paragraph{Pattern Matching.} The OCaml \code{match} construct requires special handling due to its flexibility. The translator analyses the pattern structure of each match case and generates different core language operations depending on the pattern type:
\begin{itemize}
	\item \textbf{List patterns:} A match on list patterns (empty list \code{[]} vs. cons pattern \code{h::t}) is translated to a \code{split} operation. The \code{split} construct includes an \code{ifEmpty} branch for the empty list case, and an \code{else} branch for the non-empty case that binds the head element to variable \code{h} and the tail to variable \code{t}.
	\item \textbf{Tuple patterns:} A match on a tuple pattern is translated to an \code{unpack} operation. The tuple components are bound to the pattern variables specified in the match arm, and execution continues with the body expression.
	\item \textbf{Algebraic data type (ADT) patterns:} A match on variant constructors is also translated to \code{unpack}. Each constructor pattern becomes a separate case in the core language, with the variant's fields bound to the corresponding pattern variables. When multiple variant constructors are matched, the translator generates a multi-branch \code{unpack} that associates each constructor name with its corresponding binding pattern and body.
\end{itemize}
\paragraph{Conditionals and Sequences.} OCaml's \code{if-then-else} expressions are translated to \code{ifelse} operations in the core language. The translator ensures that both branches are present (OCaml allows omitting the else branch in certain contexts, which is not supported by our core language for symbolic execution). Sequential composition (\texttt{Pexp\_sequence}) is translated to nested \code{let} bindings, where the first expression is bound to an anonymous variable before evaluating the second. If the first expression is a tick operation, it is extracted and inserted as an explicit cost annotation.
\paragraph{Data Constructors.} OCaml's data constructors are translated to \code{pack} operations. The translator handles list constructors (\code{[]}, \code{::}), tuple constructors, unit \code{()}, booleans (\code{true}, \code{false}), and user-defined variant constructors. Each constructor is tagged with its name (for variants) or a structural marker (for built-in types), and the component values are provided as a list.
The translator is designed to be conservative: constructs that are not yet supported (such as polymorphic pattern matching that mixes different types in a single match, or advanced OCaml features like first-class modules) are detected and reported with appropriate error messages. The current implementation covers the subset of OCaml commonly used in algorithm implementations and complexity benchmarks. Future extensions to handle additional OCaml features can be incorporated incrementally as needed.
The transformation of the OCaml program \texttt{dyade} into the core language is illustrated in Figure~\ref{fig:dyade-translation}.
\begin{figure}[tb]
	\begin{inferencetree}[\scriptsize]
		\begin{prooftree}
			\AxiomC{\raml{n} $\in \mathbb{Z}$}
			\UnaryInfC{\IRinfer{n}{\IRconst{n}}}
			\AxiomC{\raml{x} $\in \mathcal{V}$}
			\UnaryInfC{\IRinfer{x}{\IRvar{x}}}
			\AxiomC{\raml{*} $\in \{+,*, \dots\}$}
			\TrinaryInfC{\IRinfer{n * x}{\IROp{\IRconst{n}}{*}{\IRvar{x}}}}
			\AxiomC{\raml{n} $\in \mathbb{Z}$}
			\UnaryInfC{\IRinfer{n}{\IRconst{n}}}
			\AxiomC{\raml{xs} $\in \mathcal{V}$}
			\UnaryInfC{\IRinfer{xs}{\IRvar{xs}}}
			\AxiomC{\raml{mult} $\in \mathcal{F}$}
			\UnaryInfC{\IRinfer{mult}{\IRfunName{mult}}}
			\TrinaryInfC{\IRinfer{mult\ n\ xs}{\IRcall{\IRfunName{mult}}{\IRlist{\IRconst{n}, \IRvar{xs}}}}}
			\BinaryInfC{\IRinfer{n * x\ ::\ mult\ n\ xs}{\IRcons{\IROp{\IRconst{n}}{*}{\IRvar{x}}}
					{\IRcall{\IRfunName{mult}}{\IRlist{\IRconst{n}, \IRvar{xs}}}}}}
			\UnaryInfC{\large $\diamond$}
		\end{prooftree}
		\begin{prooftree}
			\AxiomC{\raml{l} $\in \mathcal{V}$}
			\UnaryInfC{\IRinfer{l}{\IRvar{l}}}
			\AxiomC{}
			\UnaryInfC{\IRinfer{[]}{\IRlist{}}}
			\AxiomC{$1 \in \mathbb{R}$}
			\AxiomC{\large $\diamond$}
			\BinaryInfC{\IRinfer{let\ \_\ tick(1)\ in\ n * x\ ::\ mult\ n\ xs}{\tau_{2}}}
			\UnaryInfC{\IRinfer{tick(1);\ n * x\ ::\ mult\ n\ xs}{\tau_{2}}}
			\TrinaryInfC{\IRinfer{match\ l\ with\ |\ [] -> []\ |\ x :: xs\ ->\ tick(1);\ n * x\ ::\ mult\ n\ xs}{\tau_{\text{mult}}}}
			\UnaryInfC{\IRinfer{P_{\text{mult}}}{\IRfun{\IRlist{\IRconst{n}, \IRvar{l}}}{\tau_{\text{mult}}}}}
		\end{prooftree}
		\begin{align*}
			\tau_{2}           & =
			\IRtick{1}{\IRcons{\IROp{\IRconst{n}}{*}{\IRvar{x}}}{\IRcall{\IRfunName{mult}}{\IRlist{\IRconst{n}, \IRvar{xs}}}}} \\
			\tau_{\text{mult}} & = \IRsplit{\IRvar{l}}{\IRlist{}}{\IRvar{x}}{\IRvar{xs}}{\tau_{2}}                             \\
			P_{\text{mult}}    & = \begin{aligned}[t]
				                        & \raml{let rec mult n l = match l with}                \\
				                        & \quad \raml{| [] -> []}                               \\
				                        & \quad \raml{| x :: xs -> tick(1); n * x :: mult n xs}
			                       \end{aligned}
		\end{align*}
		\begin{prooftree}
			\AxiomC{\raml{x} $\in \mathcal{V}$}
			\UnaryInfC{\IRinfer{{x}}{\IRvar{x}}}
			\AxiomC{\raml{l2} $\in \mathcal{V}$}
			\UnaryInfC{\IRinfer{{l2}}{\IRvar{l_2}}}
			\AxiomC{\raml{mult} $\in \mathcal{F}$}
			\UnaryInfC{\IRinfer{mult}{\IRfunName{mult}}}
			\TrinaryInfC{\IRinfer{{mult\ x\ l2}}{\IRcall{\IRfunName{mult}}{\IRlist{\IRvar{x}, \IRvar{l_2}}}}}
			\AxiomC{\raml{x} $\in \mathcal{V}$}
			\UnaryInfC{\IRinfer{{x}}{\IRvar{x}}}
			\AxiomC{\raml{l2} $\in \mathcal{V}$}
			\UnaryInfC{\IRinfer{{l2}}{\IRvar{l_2}}}
			\AxiomC{\raml{dyade} $\in \mathcal{F}$}
			\UnaryInfC{\IRinfer{dyade}{\IRfunName{dyade}}}
			\TrinaryInfC{\IRinfer{{dyade\ xs\ l2}}{\IRcall{\IRfunName{dyade}}{\IRlist{\IRvar{xs}, \IRvar{l_2}}}}}
			\BinaryInfC{\IRinfer{{mult\ x\ l2\ ::\ dyade\ xs\ l2}}{\IRcons{\IRcall{\IRfunName{mult}}{\IRlist{\IRvar{x}, \IRvar{l_2}}}}{\IRcall{\IRfunName{dyade}}{\IRlist{\IRvar{xs}, \IRvar{l_2}}}}}}
			\UnaryInfC{\large $\diamond$}
		\end{prooftree}
		\begin{prooftree}
			\AxiomC{\raml{l1} $\in \mathcal{V}$}
			\UnaryInfC{\IRinfer{{l1}}{\IRvar{l_1}}}
			\AxiomC{}
			\UnaryInfC{\IRinfer{{[]}}{\IRlist{}}}
			\AxiomC{$1 \in \mathbb{R}$}
			\AxiomC{\large $\diamond$}
			\BinaryInfC{\IRinfer{let\ \_\ tick(1)\ in\ \ mult\ x\ l2\ ::\ dyade\ xs\ l2}{\tau_{1}}}
			\UnaryInfC{\IRinfer{tick(1);\ mult\ x\ l2\ ::\ dyade\ xs\ l2}{\tau_{1}}}
			\TrinaryInfC{\IRinfer{match\ l1\ with\ |\ [] -> []\ |\ x :: xs\ ->\ tick(1);\ mult\ x\ l2\ ::\ dyade\ xs\ l2}{\tau_{\text{dyade}}}}
			\UnaryInfC{\IRinfer{P_{\text{dyade}}}{\IRfun{\IRlist{\IRvar{l_1}, \IRvar{l_2}}}{\tau_{\text{dyade}}}}}
		\end{prooftree}
		\begin{align*}
			\tau_{1}            & = \IRtick{1}{\IRcons{\IROp{\IRconst{n}}{*}{\IRvar{x}}}{\IRcall{\IRfunName{mult}}{\IRlist{\IRconst{n}, \IRvar{xs}}}}} \\
			\tau_{\text{dyade}} & = \IRsplit{\IRvar{l_1}}{\IRlist{}}{\IRvar{x}}{\IRvar{xs}}{\tau_{1}}                                                  \\
			P_{\text{dyade}}    & = \begin{aligned}[t]
				                         & \raml{let rec dyade l1 l2 = match l1 with}                  \\
				                         & \quad \raml{| [] -> []}                                     \\
				                         & \quad \raml{| x :: xs -> tick(1); mult x l2 :: dyade xs l2}
			                        \end{aligned}
		\end{align*}
	\end{inferencetree}
	\caption{The transformation of the OCaml program \texttt{dyade} into the core language.}
	\label{fig:dyade-translation}
\end{figure}
\subsection{Implementation of the Symbolic Execution Engine}
\label{section:tool:symbolic}
Once the program has been translated to the intermediary language --- or been given directly in that format as a JSON file --- symbolic execution starts.
As mentioned before, this engine also requires information about the main function, its arguments and other additional assertions, also in the form of a JSON file.
The engine itself is written in JSDoc-annotated, TypeScript-checked JavaScript, and uses the WebAssembly version of Z3~\cite{DeMoura:2008} to solve those assertions, which is the reason why input assertions are given in an extended SMT-LIB format.
This format allows referencing input arguments and their properties, like \code{length} or \code{nodes}, and also invoking arbitrary JavaScript code.
For example, if the program input are two integers named \code{x} and \code{y}, the assertion \code{(>= \$\{\$('x')\} \$\{\$('y')\})} makes sure that only inputs where $x>=y$ are considered.
The outer \code{\$} is equal to string interpolation used in JavaScript's template literals, the inner \code{\$} is a predefined function that allows referencing input arguments.
Input arguments and their nested arguments can also be annotated with assertions.
In the implementation, the current input pointer is exposed to assertions through the special variable \code{path}.
Concretely, \code{path} is a JavaScript array containing all pointer segments up to the current argument.
For example, if a program takes as input a list of integers called \code{myList}, and those integers are annotated, \code{path} would be \code{['myList', 5]} for the fifth element of the list.
This allows to further constraint the input, e.g. to a sorted list, by annotating the list's integer items with the assertion
\[
	\code{(< \$\{\$(path)\} \$\{\$('myList',path.at(-1)+1)\})}.
\]
Every time a new item at offset $n$ gets referenced, it is ensured that $myList[n] < myList[n+1]$.
Using assertions, expert knowledge about the shape of the input data can be incorporated.
The output of the symbolic execution engine is another JSON file, containing an array of all possible executions that satisfy the assertions.
Each execution contains the cost, all gathered assertions, the resulting value and the model of the input arguments.
Those results are then grouped as explained in Section~\ref{section:symbolic-execution:grouping}, and then passed on to the statistical engine.
\paragraph{Work-List Algorithm and Frame-Based Execution.}
The symbolic execution engine implements a work-list algorithm using a frame-based execution model.
Each frame represents a pending computation and contains: (1) the \code{command} to execute (e.g., \code{ifelse}, \code{split}, \code{let}), (2) a \code{variables} map binding names to symbolic values, (3) a \code{state} object tracking assertions and the evolution of symbolic data structures, (4) a \code{continuation} callback to process the result, and (5) an optional \code{upperFrame} reference to the parent frame.
In the implementation, input pointers are represented by the type \code{Path}.
The main execution loop iterates over an array \code{pendingFrames}, processing each frame by calling \code{handleFrame}, which dispatches to the appropriate handler based on the command's operation type.
This approach avoids explicit stack management while maintaining the compositional structure of the semantics.
Initially, the entry frame for the main function is added to \code{pendingFrames}.
As execution proceeds, handlers push new frames onto this array, effectively exploring all feasible execution paths.
\paragraph{Symbolic Value Construction.}
Input arguments are transformed into symbolic values by the \code{buildValue} function.
For integer arguments without concrete values, Z3 integer constants are created using \code{Z3.Int.const(name)}, where \code{name} is derived from the argument's pointer (e.g., \code{"myList.3"} for the third element of list \code{myList}).
Similarly, boolean arguments become Z3 boolean constants via \code{Z3.Bool.const(name)}.
When minimum or maximum constraints are specified on integers, corresponding assertions (e.g., \code{constraint.ge(minValue)}) are added to the symbolic value's assertion set.
Lists are represented as one of three types: \code{Fixed} (concrete length with item templates), \code{Templated} (concrete length with uniform item template), or \code{Variable} (symbolic length with bounds).
Variable-length lists track their current offset and lazily instantiate elements as they are accessed.
Tuples and trees follow similar patterns, recursively building nested symbolic structures.
Custom assertions from the input specification are parsed and converted to Z3 expressions by \code{buildAssertions}, which evaluates assertion templates in an environment providing \code{\$} for selector-based argument references.
\paragraph{State Management and Path Constraints.}
The \code{state} object threads through execution and contains four components: \code{assertions} (a set of Z3 boolean expressions representing path constraints), \code{lists} (a map from list identifiers to their partially-decided values and completion status), \code{tuples} (a map from tuple identifiers to value arrays), and \code{trees} (a map from tree identifiers to node structures).
When a list is split for the first time at a symbolic position, an empty values array is created.
As execution proceeds and more elements are accessed, new symbolic values are instantiated from the item template and added to the array.
The \code{isComplete} flag indicates whether the list's length has been decided.
Similarly, trees track node and leaf counts, updating them as the structure is explored.
Assertions accumulate monotonically: each branching operation extends the assertion set with new constraints specific to that path.
\paragraph{Path Splitting in Conditionals.}
The \code{handleIfElse} function implements path splitting for conditional branches.
First, it evaluates the condition expression, yielding a symbolic boolean value containing a Z3 expression.
This expression is simplified using Z3's built-in simplification.
If the result is the constant \code{True} or \code{False}, only the corresponding branch is executed.
Otherwise, the engine performs a satisfiability check: a Z3 solver instance is created, the current assertions and the condition are added, and \code{check()} is called.
If the result is \code{sat}, a new frame for the \code{if} branch is pushed with the condition added to the assertions.
The same process is repeated for the negated condition and the \code{else} branch.
This results in exploring both paths when both are feasible, effectively doubling the number of execution paths at each symbolic branch.
\paragraph{Path Splitting in List Operations.}
Symbolic list splitting is handled by the \code{handleSplit} function.
After evaluating the list expression, the handler distinguishes between concrete and symbolic cases.
For concrete lists (\code{Nil} or \code{Cons}), the appropriate branch is chosen deterministically.
For symbolic lists of type \code{Variable}, the handler checks the current offset against the cached values.
If the offset is less than the number of cached values, the head is retrieved and the non-empty branch is executed.
If the offset equals the number of cached values and the list is not yet marked complete, a choice point arises.
The handler pushes two frames: one for the empty branch (updating the state to mark the list complete) and one for the non-empty branch (instantiating a new symbolic element from the item template and extending the values array).
This branching continues until minimum and maximum length constraints, if present, are satisfied.
\paragraph{Continuation Handling and Result Propagation.}
Each handler defines continuations that process results and propagate them upward.
For example, \code{handleLet} first pushes a frame to evaluate the bound expression with a continuation that receives the result, extends the variable environment with the binding, and pushes a new frame for the body.
The continuation for the body simply forwards the result with accumulated costs to the upper frame.
Costs are tracked explicitly: each command may specify a cost increment via \code{getCosts(command)}, and results carry a \code{costs} field that accumulates these increments.
The continuation pattern ensures that state, costs, and values flow correctly through nested computations, mirroring the semantic rules' premise-conclusion structure.
\paragraph{Termination and Output Generation.}
Execution terminates when all frames in \code{pendingFrames} have been processed.
For each complete execution path that reaches a return point, the engine invokes the entry frame's continuation with the final result.
This result includes the returned value, total cost, final state, and the complete set of path constraints (assertions).
The engine then calls Z3's solver to compute a satisfying model for the assertions, providing concrete values for all symbolic inputs consistent with that path.
The final output is a JSON array where each element represents one feasible execution, containing the cost, serialised assertions, return value, and a model showing example input values.
This output is subsequently grouped by cost and input characteristics before being passed to the statistical analysis component.
\subsection{Implementation of the Statistical Analysis Engine}
\label{section:tool:statistical}
We have implemented the linear programming approach outlined in Section~\ref{section:linear-programming} using the \texttt{PuLP} library~\cite{Mitchell:etal:2011}, which provides a convenient interface to define and solve linear programs in Python.
For solving the linear programs, we use the open-source solver \texttt{CBC}~\cite{Forrest:etal:2024}, which is included with \texttt{PuLP}.
The statistical analysis engine takes as input the JSON output from the symbolic execution engine (containing costs and input shapes) and produces upper bounds through a template-based MILP solving approach.
\paragraph{Data Preprocessing and Grouping.}
The engine first processes the symbolic execution results to extract data points $(\vec{x}_i, c_i)$ where $\vec{x}_i$ represents the input shape (e.g., list lengths, tree node counts) and $c_i$ is the cost.
When multiple executions produce the same input shape, the maximum cost is retained to ensure worst-case coverage.
The implementation supports both pre-grouped data (where results are already organised by input characteristics) and raw symbolic execution output, which is automatically converted by extracting variable names and values from the Z3 models.
The resulting data is organised into a DataFrame with columns for each input variable and a \code{cost} column, sorted by input variables in ascending order to facilitate monotonicity checking.
\paragraph{Basis Functions and Feature Construction.}
The engine supports a library of basis functions that can be composed to form features.
The available basis functions include: \code{id} (identity $x$), \code{square} ($x^2$), \code{cube} ($x^3$), \code{quartic} ($x^4$), \code{exp} ($2^x$), \code{log} ($\log_2(x)$), and \code{idlog} ($x \log_2(x)$).
Each basis function is implemented as a lambda function paired with a string formatter for generating human-readable representations.
Given input data matrix $X \in \mathbb{R}^{m \times n}$ (where $m$ is the number of observations and $n$ the number of input variables), the \code{make\_poly\_features} function constructs a feature matrix $\Phi$ according to a template specification.
The template specifies which basis functions to apply via the \code{base} field (univariate terms) and how to combine them via the \code{combine} field (interaction terms).
For example, a template with \code{base: ["id", "square"]} and \code{combine: [[["id"], ["id"]]]} produces features $[1, x, y, x^2, y^2, x \cdot y]$ for two variables.
The feature matrix construction begins with a bias column of ones, applies each base function to each variable column, and then computes products for the specified combinations.
The function returns both the feature matrix $\Phi$ and a parallel list of exponent descriptors tracking which basis functions contributed to each feature.
\paragraph{Template-Based Bound Search.}
Bound templates are defined in a separate JSON file, organised by number of variables and bound degree.
This design enables modularity: new forms can be added without modifying the solver code.
Each template entry specifies the \code{base} functions, \code{combine} interaction patterns, and a human-readable description.
A quadratic template for two variables, for example, includes linear and quadratic terms as well as the interaction term $x \cdot y$.
The engine iterates over degrees (typically $1$ through $4$) and within each degree tries all available templates.
This idea is given Algorithm~\ref{alg:best-bound}.
An optional degree estimation heuristic based on divided differences can be used to determine a minimum starting degree.
The \code{get\_min\_degree} function computes monotonic convex data points, builds a divided difference table, and applies a correlation-based trend analysis to estimate the degree from the data structure.
This heuristic reduces search time by skipping degrees that are unlikely to fit the observed cost pattern.
\begin{algorithm}[t]
	\caption{Find Best Bound}\label{alg:best-bound}
	\begin{algorithmic}[1]
		\Require Set of potential degrees $\mathcal{D}$, objective functions $\mathcal{O}$, costs $\mathcal{M} = (X, y)$
		\Ensure Best fit and corresponding cost
		\State $\textit{best\_fit} \gets \text{None}$
		\State $\textit{best\_cost} \gets \infty$
		\For{$d \in \mathcal{D}$}
		\For{\textbf{each} $\zeta \in \mathcal{O}[d]$}
		\State $\Phi \gets \textsc{MakePolyFeatures}(X, \zeta)$
		\State $\textit{fit} \gets \textsc{SolveProblem}(\Phi, y, \textit{best\_cost})$
		\If{$\textit{fit}$}
		\State $\textit{best\_fit} \gets \textit{fit.bound}$
		\State $\textit{best\_cost} \gets \textit{fit.cost}$
		\EndIf
		\EndFor
		\EndFor
		\State \Return $(\textit{best\_fit}, \textit{best\_cost})$
	\end{algorithmic}
\end{algorithm}
\paragraph{MILP Formulation for Sound Upper Bounds.}
The core bound fitting is performed by \code{solve\_problem}, which constructs and solves a Mixed Integer Linear Program.
For a feature matrix $\Phi \in \mathbb{R}^{m \times k}$ (where $k$ is the number of features) and cost vector $y \in \mathbb{R}^m$, the formulation creates:
\begin{itemize}
	\item \textbf{Coefficient variables}: One non-negative LP variable $w_j \geq 0$ for each feature $j$, named according to the term it represents (e.g., \code{x * log(x)}).
	\item \textbf{Binary soundness variables}: One binary variable $z_i \in \{0, 1\}$ for each data point $i$, used to encode the soundness requirement that residuals be non-positive.
\end{itemize}
The fundamental soundness constraint ensures that the upper-bounds all costs:
\[
	\sum_{j=1}^{k} w_j \Phi_{i,j} \geq c_i \quad \forall i \in \{1, \ldots, m\}
\]
This is equivalently written as $\sum_j -w_j \Phi_{i,j} + c_i \leq 0$, defining the residual $r_i = c_i - f(\vec{x}_i)$ which must be non-positive for soundness.
To enforce monotonicity and conservativeness across data points, the formulation employs binary variables and the Big-M technique.
For each pair of consecutive data points (when sorted by input size), constraints are added to ensure that residuals are either monotone non-increasing or exactly zero:
\[
	r_i - r_{i-1} \leq M \cdot z_i + (c_{i-1} - c_i)
\]
where $M$ is a sufficiently large constant (set to $100 \times \max(y)$).
This constraint is satisfied either when $z_i = 1$ (allowing the residual difference), or when $r_i \leq r_{i-1}$ adjusted for cost differences.
Additionally, if $r_i \neq 0$, then:
\[
	-M \cdot (1 - z_i) \leq r_i \leq M \cdot (1 - z_i)
\]
forcing $z_i = 0$ when the residual is non-zero, and allowing $z_i = 1$ only when the bound exactly matches the cost.
For univariate bounds (degree 1), only consecutive points are compared; for higher degrees, all pairwise comparisons are enforced to maintain global monotonicity.
The objective function minimizes the sum of coefficients weighted by the number of features:
\[
	\text{minimize} \quad \sum_{j=1}^{k} w_j \cdot k
\]
This encourages sparse, simple bounds while satisfying all soundness constraints.
When a previous feasible solution with residual sum $S$ exists, an additional constraint $\sum_i r_i \geq S$ is added to find strictly better solutions in subsequent iterations.
\paragraph{Solver Invocation and Result Selection.}
For each template, the MILP problem is passed to the CBC solver via PuLP's \code{PULP\_CBC\_CMD} interface with suppressed output.
If the solver returns an \code{Optimal} status, the coefficient values are extracted, the bound is evaluated on the input data, and residuals are computed.
The solution is marked as \emph{sound} if all residuals are non-positive (within a tolerance of $\epsilon = 10^{-9}$).
The engine tracks the best solution across all templates and degrees, selecting the one with the smallest residual sum.
After exploring all templates, the best bound is formatted into a human-readable string using the exponent descriptors and basis function string formatters.
\paragraph{Output and Visualization.}
The final bound, along with metadata (degree, residual sum, soundness flag, number of data points), is written to a JSON file for consumption by the web interface.
For single-variable bounds, 2D plots are generated showing costs, the fitted curve, and optionally a reference bound from RaML.
For two-variable bounds, 3D surface plots visualise the bound over a mesh grid with scatter points for costs.
Additionally, the MILP problem is exported to an LP file for inspection, and the data points are saved in both Markdown table format and a LaTeX-compatible coordinate format for inclusion in figures.
\begin{table}[t]
	\centering
	\caption{Overview of the objective functions $\mathcal{O}$ considered in this work, categorised by bound degree and number of variables. The coefficients $a_i$ denote real-valued parameters to be optimised.}
	\label{tab:objective-functions}
	\small
	\begin{tabularx}{\textwidth}{clX}
		\toprule
		\textbf{Degree} & \textbf{Variables} & \textbf{Objective Function}                                                                      \\
		\midrule
		1               & $x$                & $a_0 + a_1 x$                                                                                    \\
		                & $x$                & $a_0 + a_1 \log(x)$                                                                              \\
		                & $x$                & $a_0 + a_1 x + a_2 \log(x)$                                                                      \\
		                & $x, y$             & $a_0 + a_1 x + a_2 y$                                                                            \\
		                & $x, y$             & $a_0 + a_1 x + a_2 y + a_3 \log(x) + a_4 \log(y)$                                                \\
		                & $x, y, z$          & $a_0 + a_1 x + a_2 y + a_3 z$                                                                    \\
		\midrule
		2               & $x$                & $a_0 + a_1 x + a_2 x^2$                                                                          \\
		                & $x$                & $a_0 + a_1 \log(x) + a_2 x \log(x)$                                                              \\
		                & $x, y$             & $a_0 + a_1 x + a_2 y + a_3 x^2 + a_4 y^2 + a_5 x y$                                              \\
		                & $x, y$             & $a_0 + a_1 x + a_2 y + a_3 x y$                                                                  \\
		                & $x, y, z$          & $a_0 + a_1 x + a_2 y + a_3 z + a_4 x^2 + a_5 y^2 + a_6 z^2 + a_7 xy + a_8 xz + a_9 yz$           \\
		\midrule
		3               & $x$                & $a_0 + a_1 x + a_2 x^2 + a_3 x^3$                                                                \\
		                & $x, y$             & $a_0 + a_1 x + a_2 y + a_3 x^2 + a_4 y^2 + a_5 x^3 + a_6 y^3 + a_7 xy + a_8 x^2 y + a_9 x y^2$   \\
		                & $x, y, z$          & $a_0 + a_1 x + a_2 y + a_3 z + a_4 x^2 + a_5 y^2 + a_6 z^2 + a_7 x^3 + a_8 y^3 + a_9 z^3 +$      \\&& \quad $a_{10} x y^2 + a_{11} x^2 y + a_{12} x z^2 + a_{13} x^2 z + a_{14} y z^2 + a_{15} y^2 z$ \\
		\midrule
		4               & $x$                & $a_0 + a_1 x + a_2 x^2 + a_3 x^3 + a_4 x^4$                                                      \\
		                & $x, y$             & $a_0 + a_1 x + a_2 y + a_3 x^2 + a_4 y^2 + a_5 x^3 + a_6 y^3 + a_7 x^4 + a_8 y^4 + a_{13} x y^3$ \\
		                & $x, y, z$          & $a_0 + a_1 x + a_2 y + a_3 z + a_4 x^2 + a_5 y^2 + a_6 z^2 + a_7 x^3 + a_8 y^3 + a_9 z^3 +$      \\&& \quad $a_{10} x^4 + a_{11} y^4 + a_{12} z^4 + a_{13} x y^3 + a_{14} x^3 y + a_{15} x z^3 + a_{16} x^3 z + a_{17} y z^3 + a_{18} y^3 z$ \\
		\bottomrule
	\end{tabularx}
\end{table}
\section{Proofs}
\subsection{Correctness of Symbolic Execution}
\label{appendix:proofs}
\begin{proof}[Proof sketch for Lemma~\ref{lem:soundness}]
	We prove that any complete symbolic execution path with satisfiable constraints corresponds to a concrete execution with the same cost.
	\paragraph{Initial state.} The symbolic execution begins with initial environment $\rho_{\text{init}}$ constructed from the input specification $\argspecificationList{}$ by applying $\buildValue{\argspecification{x_i}{\theta_i}{\phi_i}}$ for each argument. Under a satisfying model $M \models \assertionSet{}$, concretisation $\gamma_M$ maps these symbolic values to concrete inputs that respect the specified type and constraint structure.
	\paragraph{Step-by-step correspondence.} We prove by structural induction on the symbolic execution derivation that each symbolic transition corresponds to a valid concrete transition under concretisation $\gamma_M$.
	Operations like \textsc{Let}, \textsc{Call}, \textsc{Fun}, \textsc{Cons}, \textsc{Tick}, and \textsc{Raise} behave identically in symbolic and concrete execution.
	Their concretisation is immediate: $\gamma_M$ commutes with these operations.
	For arithmetic operations like \textsc{BinOp}, \textsc{CmpOp}, \textsc{BoolOp}, \textsc{Neg}, etc., soundness relies on the correctness of Z3's expression evaluation. If $v_1 \oplus v_2$ yields a Z3 expression in symbolic execution, then $\gamma_M(v_1 \oplus v_2) = \gamma_M(v_1) \oplus_{\text{concrete}} \gamma_M(v_2)$ by Z3's semantic guarantees. This ensures that symbolic arithmetic expressions evaluate to the same concrete results as direct computation.
	For \textsc{IfElse}, three cases arise:
	\begin{itemize}
		\item \emph{Definite true/false}: If $\text{simplify}(v_c)$ yields $\top$ or $\bot$, the path is deterministic and concretisation trivially preserves the branch taken.
		\item \emph{Symbolic branching}: When $\text{simplify}(v_c) \notin \{\top, \bot\}$, the engine explores both branches, adding $v_c$ (respectively $\neg v_c$) to path constraints. For any model $M \models \assertionSet{} \cup \{v_c\}$, we have $\gamma_M(v_c) = \texttt{true}$, ensuring the concrete execution follows the same branch. The same holds for the else-branch with $\neg v_c$.
	\end{itemize}
	For \textsc{Split}, the analysis considers all structural possibilities:
	\begin{itemize}
		\item \emph{Concrete lists}: Rules \textsc{Split-Nil} and \textsc{Split-Cons} match the concrete list structure, and concretisation preserves the decomposition.
		\item \emph{Symbolic lists}: Rules \textsc{Split-Sym-Empty} and \textsc{Split-Sym-Nonempty} add length constraints $\text{len}(v_l) = 0$ or $\text{len}(v_l) > 0$ to $\assertionSet{}$. For any satisfying model $M$, the length constraint determines the concrete list structure, ensuring the correct branch is taken. Fresh symbolic variables $v_h, v_t$ introduced in the non-empty case are instantiated by $M$ to match the concrete head and tail.
	\end{itemize}
	The same reasoning applies to \textsc{Unpack} for algebraic data types.
	\paragraph{Cost preservation.} The cost accumulator $\mathcal{C}$ is only modified by the \textsc{Tick} rule, which directly corresponds to tick operations in the source program. Since each symbolic step preserves cost through concretisation, the final symbolic cost $\mathcal{C}$ equals the concrete execution cost.
	By induction on the complete execution trace from the initial call, every symbolic path with satisfiable constraints $M \models \assertionSet{}$ corresponds to a concrete execution under concretisation $\gamma_M$, producing the same final value and cost.
\end{proof}
\begin{proof}[Proof sketch for Lemma~\ref{lem:completeness}]
	The key insight is that structural descriptors induce \emph{exhaustive case splitting} within bounds.
	\paragraph{Initial environment construction.}
	Given concrete inputs $I_1, \ldots, I_k$ and the input specification $\argspecificationList{}$, the initial environment is constructed as $\rho_{\text{input}} = \{x_i \mapsto \buildValue{\argspecification{x_i}{\theta_i}{\phi_i}} \mid i = 1, \ldots, k\}$, where each template $T_{\theta_i}(I_i)$ precisely captures the structure of the corresponding concrete input $I_i$. This ensures that symbolic execution starts with symbolic values that directly represent the concrete inputs.
	\paragraph{Exhaustive path exploration.}
	When symbolic execution encounters a descriptor during pattern matching (e.g., via \textsc{Split-Var-Empty}, \textsc{Split-Var-Nonempty}), it systematically explores all structural possibilities:
	\begin{itemize}
		\item For a list descriptor with length bounds $[\texttt{minSize}, \texttt{maxSize}]$, the split rules create one path for each possible length $\ell \in \{\texttt{minSize}, \ldots, \texttt{maxSize}\}$.
		\item Each path adds appropriate length constraints to $\assertionSet{}$ (e.g., $\text{len}(v) = \ell$) and creates fresh symbolic variables for list elements.
		\item For any concrete input tuple $(I_1, \ldots, I_k)$ where each $I_i$ has structure within the specified bounds, there exists a path whose constraints are satisfiable, and the SMT solver produces a model $M$ that maps symbolic variables to the concrete values.
	\end{itemize}
	\paragraph{Completeness and cost preservation.}
	By construction, branching on descriptors partitions the space of all concrete input tuples within bounds into disjoint cases, one per symbolic path. Since every concrete input tuple satisfies exactly one case's constraints, every concrete execution from the initial call to $\IRfunName{main}$ is represented. The cost equality $\mathcal{C} = c$ follows from Lemma~\ref{lem:soundness}: once we identify the symbolic path representing the concrete inputs, concretisation shows that the symbolic and concrete costs match exactly.
\end{proof}
Together, Lemma~\ref{lem:soundness} and Lemma~\ref{lem:completeness} establish that symbolic execution with bounded descriptors provides a \emph{sound and complete} characterization of program behaviour within those bounds. For complexity analysis, this means the maximum cost across all symbolic paths is precisely the worst-case cost for all inputs within the specified size bounds.
\paragraph{Cost Soundness and Upper Bounds.}
The cost accumulator $\mathcal{C}$ is central to deriving complexity bounds. We establish that costs are correctly tracked through symbolic execution.
For any symbolic execution path with cost $\mathcal{C}$ and satisfiable constraints $\assertionSet{}$ with model $M \models \assertionSet{}$, the corresponding concrete execution (via concretisation $\gamma_M$) has cost exactly $\mathcal{C}$.
This follows directly from Lemma~\ref{lem:soundness} and the fact that the \textsc{Tick} rule is the only operation that modifies $\mathcal{C}$, and it corresponds exactly to tick operations in the source program. Path splitting does not affect tick accumulation---each path independently accumulates its own cost.
Consequently, when we report $\text{max}\{\mathcal{C} \mid \text{path with cost } \mathcal{C}\}$, we obtain the maximum cost over all concrete executions within bounds, yielding a \emph{sound upper bound} on worst-case complexity.
\subsection{Soundness of the Estimated Bound}
\label{section:soundness-bound}
\begin{proof}[Proof sketch for the Soundness]
	$f(\vec{x})$ is constructed as a linear combination of basis functions $\{f_1(\vec{x}), f_2(\vec{x}), \ldots, f_m(\vec{x})\}$:
	\[
		f(\vec{x}) = \sum_{j=1}^{m} a_j \cdot f_j(\vec{x})
	\]
	where the coefficients $a_j$ are the decision variables of the MILP.
	The MILP formulation imposes the following constraints on these coefficients:
	\paragraph{(1) Non-negativity constraints:} For each coefficient $a_j$, we have $a_j \geq 0$. This ensures that the bound is non-decreasing (assuming non-negative basis functions), which is appropriate for worst-case complexity analysis.
	\paragraph{(2) Upper-bounding constraints:} For each observation $(\vec{x}_i, c_i) \in \mathcal{M}$, the MILP includes the constraint:
	\[
		f(\vec{x}_i) \geq c_i
	\]
	This is the fundamental requirement that the bound must upper-bound every observed cost.
	In the standard minimization formulation used by LP solvers, this is equivalently expressed as:
	\[
		-f(\vec{x}_i) \leq -c_i
	\]
	\paragraph{(3) Residual monotonicity constraints (optional):} If residual constraints are enabled, for costs ordered lexicographically as $l_{\mathcal{M}} = [(\vec{x}_1, c_1), \ldots, (\vec{x}_N, c_N)]$ with $i < j$, we have:
	\[
		(f(\vec{x}_j) - c_j) \geq (f(\vec{x}_i) - c_i) \quad \text{or} \quad f(\vec{x}_j) = c_j
	\]
	encoded via binary variables $b_i \in \{0, 1\}$ using the big-M method. These constraints do not weaken the upper-bounding property; they only restrict \emph{how} the bound can deviate from the costs, promoting better generalisation.
	When the MILP solver finds a \emph{feasible solution}, it has determined values for all decision variables (coefficients $a_j$ and binary variables $b_i$) that simultaneously satisfy all constraints. In particular, constraint (2) is satisfied, which directly implies:
	\[
		f(\vec{x}_i) \geq c_i \quad \text{for all } i \in \{1, \ldots, N\}
	\]
	Therefore, if the MILP admits a feasible solution, the resulting bound $f(\vec{x})$ is guaranteed to be an upper bound on all costs in $\mathcal{M}$.
\end{proof}
Theorem~\ref{thm:bound-soundness} establishes soundness \emph{with respect to the costs} $\mathcal{M}$. To obtain soundness with respect to the \emph{actual program behaviour}, we must ensure that $\mathcal{M}$ accurately captures the worst-case costs for all relevant input shapes. This is addressed by combining the soundness of MILP with the correctness of the simulation (Lemma~\ref{lem:soundness}) in the empirical soundness result (Corollary~\ref{cor:overall-soundness}).
\section{Dyade Example} \label{appendix:dyade}
To illustrate the frame-based symbolic execution, consider evaluating \IRfunName{dyade}.
For the \texttt{dyade} example, the set of global function definitons is as follows:
{
\[
	\begin{aligned}
		\mathit{extractFuns}( & \raml{let}~mult~n~l~\raml{=}~e_m\raml{;;}~\raml{let}~dyade~l1~l2~\raml{=}~e_d)
		\\
		                      & = \{
		(\texttt{mult}, [n, l], \trcore{e_m}),\;
		(\texttt{dyade}, [l1, l2], \trcore{e_d})
		\}
	\end{aligned}
\]
}
We may execute the function for any input sizes and constraints; in this example, we illustrate it with symbolic lists of length $n=2$ and $m=1$.
The input argument specification for this example is formally given as $\mathcal{I} = (\mathcal{P}, \IRfunName{dyade}, [\iota_1, \iota_2], \emptyset)$, with:
\begin{align*}
	\iota_1
	 & = \argspecification{\texttt{l1}}{T_1}{\emptyset}
	\\
	 & \quad \text{where }
	T_1 =
	(\DTList,\{
	\texttt{minLength}=0,\,
	\texttt{maxLength}=2,\,
	\texttt{itemTemplate}=T_{\text{int}}
	\})
	\\[0.5ex]
	\iota_2
	 & = \argspecification{\texttt{l2}}{T_2}{\emptyset}
	\\
	 & \quad \text{where }
	T_2 =
	(\DTList,\{
	\texttt{minLength}=0,\,
	\texttt{maxLength}=1,\,
	\texttt{itemTemplate}=T_{\text{int}}
	\})
\end{align*}
and $T_{\text{int}} = (\DTInt, \emptyset)$ denotes the template for unbounded integer values.
This specification declares that the first argument \texttt{l1} is a list of integers with maximum length 2, and the second argument \texttt{l2} is a list of integers with maximum length 1.
All function definitions from the program are transformed into closures and placed in the initial environment. Based on the input specification (which specifies a main function name and arguments with their types and optional constraints), the tool constructs \emph{variable-length list descriptors}. For lists with $\texttt{maxLength}=2$ and $\texttt{maxLength}=1$, these are \emph{structural descriptors} (not symbolic Z3 values) that lazily generate elements during execution. For this example:
\begin{center}
	\[
		\rho_{\text{init}} = \begin{cases}
			\IRfunName{mult} \mapsto \closure{\IRlist{\IRvar{x}, \IRvar{l}}}{\tau_{\text{mult}}}{\rho_{\text{init}}}       \\
			\IRfunName{dyade} \mapsto \closure{\IRlist{\IRvar{l_1}, \IRvar{l_2}}}{\tau_{\text{dyade}}}{\rho_{\text{init}}} \\
		\end{cases}
	\]
	\[
		\rho_{\text{args}} = \begin{cases}
			\IRvar{l_1} \mapsto \descriptiveList{0}{\texttt{"l1"}}{0}{2}{T_{\text{int}}} \\
			\IRvar{l_2} \mapsto \descriptiveList{0}{\texttt{"l2"}}{0}{1}{T_{\text{int}}} \\
		\end{cases}
	\]
\end{center}
where $\tau_{\text{dyade}}$ is the body of \texttt{dyade} shown above, and $\rho_{\text{init}}$ is the initial environment at the time of function definition (containing references to the \raml{mult} and \raml{dyade} functions).
The initial environment $\rho_0$ is then constructed by combining $\rho_{\text{init}}$ with $\rho_{\text{args}}$, which includes the list descriptors for the input arguments.
The tool constructs a synthetic call to the main function:
\[
	\exframe{\IRcall{\IRfunName{dyade}}{[\IRvar{l_1}, \IRvar{l_2}]}}{\sigma_0}{\idcontinuation{}}
\]
where $\sigma_0 = \exstate{\rho_0}{\assertionSet{}_0}{\mathcal{S}_0}{0}$, $\assertionSet{}_0$ contains any initial assertions from the input specification, and $\idcontinuation{}$ is the top-level continuation that collects results.
This synthetic call frame is added to the work-list, and execution begins.
\begin{enumerate}
	\item \textbf{Function call (Rules \textsc{Call-Step} and \textsc{Call-Eval}).}
	      The arguments are evaluated left-to-right using \textsc{Call-Step}:
	      \begin{itemize}
		      \item First, rule \textsc{Var} evaluates $\IRvar{l_1}$ to $\rho_0(\texttt{l1}) = \descriptiveList{0}{\text{"l1"}}{0}{2}{T_{\text{int}}}$. Rule \textsc{Call-Step} then updates the call with this list descriptor as the first argument.
		      \item Next, rule \textsc{Var} evaluates $\IRvar{l_2}$ to $\rho_0(\texttt{l2}) = \descriptiveList{0}{\text{"l2"}}{0}{1}{T_{\text{int}}}$. Rule \textsc{Call-Step} updates the call with this list descriptor as the second argument.
		      \item Looking up \IRfunName{dyade} in $\rho_0$ yields: $\rho_0(\IRfunName{dyade}) = \closure{[\IRvar{l_1}, \IRvar{l_2}]}{\tau_{\text{dyade}}}{\rho_{\text{init}}}$ where $\tau_{\text{dyade}} = \IRsplit{l1}{[]}{x}{xs}{\tau_{\text{body}}}$ is the body of \IRfunName{dyade}.
	      \end{itemize}
	      \begin{figure}[t]
		      \scriptsize
		      \begin{inferencetree}[\scriptsize]
			      \begin{prooftree}
				      \AxiomC{}
				      \RightLabel{\textsc{Cons-Val}}
				      \UnaryInfC{
					      $\exframe{\IRcons{[\symbolic{v_1} \cdot \symbolic{w_1}]}{ [[\symbolic{v_2} \cdot \symbolic{w_1}]]}}{\exstate{\rho_2}{\emptyset}{\emptyCache{}}{3}}{\idcontinuation{}} \leadsto
						      ([\symbolic{v_1} \cdot \symbolic{w_1}] ::  [[\symbolic{v_2} \cdot \symbolic{w_1}]], \emptyset, 3)$
				      }
			      \end{prooftree}
			      \begin{prooftree}
				      \AxiomC{$\exframe{\IRcall{\IRfunName{dyade}}{[\IRvar{xs}, \IRvar{l_2}]}}{\exstate{\rho_2}{\emptyset}{\emptyCache{}}{2}}{\kappa_t} \leadsto \kappa_t(v_t, \emptyset, 3)$}
				      \RightLabel{\textsc{Cons-Step}}
				      \UnaryInfC{
					      $\exframe{\IRcons{[\symbolic{v_1} \cdot \symbolic{w_1}]}{\IRcall{\IRfunName{dyade}}{[\IRvar{xs}, \IRvar{l_2}]}}}{\exstate{\rho_2}{\emptyset}{\emptyCache{}}{2}}{\idcontinuation{}} \leadsto$
					      $\exframe{\IRcons{[\symbolic{v_1} \cdot \symbolic{w_1}]}{v_t}}{\exstate{\rho_2}{\emptyset}{\emptyCache{}}{3}}{\idcontinuation{}}$
				      }
			      \end{prooftree}
			      \begin{itemize}[label=]
				      \item $\rho_2(\IRvar{xs}) = \descriptor{d'_{l1}} = \descriptiveList{1}{0}{2}{T_{\text{int}}}{\emptyset}$ (tail descriptor at offset 1)
				      \item Recursive call to $\IRcall{\IRfunName{dyade}}{[\descriptor{d'_{l1}}, \descriptor{d_{l2}}]}$ splits $\descriptor{d'_{l1}}$ to create $\symbolic{v_2} = \texttt{Z3.Int.const("l1.2")}$
				      \item Result: $v_t = [[\symbolic{v_2} \cdot \symbolic{w_1}]]$ with cost $\mathcal{C} = 3$
			      \end{itemize}
			      \begin{prooftree}
				      \AxiomC{$\exframe{\IRcall{\IRfunName{mult}}{[\IRvar{x}, \IRvar{l_2}]}}{\exstate{\rho_2}{\emptyset}{\emptyCache{}}{1}}{\kappa_h} \leadsto \kappa_h(v_h, \emptyset, 2)$}
				      \RightLabel{\textsc{Cons-Step}}
				      \UnaryInfC{
					      $\exframe{\IRcons{\IRcall{\IRfunName{mult}}{[\IRvar{x}, \IRvar{l_2}]}}{\tau_t}}{\exstate{\rho_2}{\emptyset}{\emptyCache{}}{1}}{\idcontinuation{}} \leadsto$
					      $\exframe{\IRcons{v_h}{\tau_t}}{\exstate{\rho_2}{\emptyset}{\emptyCache{}}{2}}{\idcontinuation{}}$
				      }
			      \end{prooftree}
			      \begin{itemize}[label=]
				      \item $\rho_2(\IRvar{x}) = \symbolic{v_1}$, $\rho_2(\IRvar{l_2}) = \descriptor{d_{l2}} = \descriptiveList{0}{0}{1}{T_{\text{int}}}{\emptyset}$
				      \item Calling $\IRcall{\IRfunName{mult}}{[\symbolic{v_1}, \descriptor{d_{l2}}]}$ splits $\descriptor{d_{l2}}$ to create $\symbolic{w_1} = \texttt{Z3.Int.const("l2.1")}$
				      \item Result: $v_h = [\symbolic{v_1} \cdot \symbolic{w_1}]$ with cost $\mathcal{C} = 2$
				      \item $\tau_t = \IRcall{\IRfunName{dyade}}{[\IRvar{xs}, \IRvar{l_2}]}$ (tail to be evaluated next)
			      \end{itemize}
			      \begin{prooftree}
				      \AxiomC{$1 \in \mathbb{R}$}
				      \RightLabel{\textsc{Tick}}
				      \UnaryInfC{
					      $\exframe{\IRtick{1}{\tau_{\text{cons}}}}{\exstate{\rho_2}{\emptyset}{\emptyCache{}}{0}}{\idcontinuation{}} \leadsto$
					      $\exframe{\tau_{\text{cons}}}{\exstate{\rho_2}{\emptyset}{\emptyCache{}}{1}}{\idcontinuation{}}$
				      }
			      \end{prooftree}
			      \begin{itemize}[label=]
				      \item Rule \textsc{Tick} increments cost: $\mathcal{C} = 0 \leadsto \mathcal{C}' = 1$
				      \item $\tau_{\text{cons}} = \IRcons{\IRcall{mult}{[\IRvar{x}, \IRvar{l_2}]}}{\IRcall{dyade}{[\IRvar{xs}, \IRvar{l_2}]}}$
			      \end{itemize}
			      \begin{prooftree}
				      \AxiomC{$\rho_1 = \rho_{\text{init}}[\IRvar{l_1} \mapsto \descriptor{d_{l1}}, \IRvar{l_2} \mapsto \descriptor{d_{l2}}]$}
				      \AxiomC{$\tau_{\text{dyade}} = \IRsplit{\IRvar{l_1}}{\IRlist{}}{\IRvar{x}}{\IRvar{xs}}{\tau_{\text{body}}}$}
				      \RightLabel{\textsc{Split-Var-Nonempty}}
				      \BinaryInfC{
					      $\exframe{\tau_{\text{dyade}}}{\exstate{\rho_1}{\emptyset}{\emptyCache{}}{0}}{\idcontinuation{}} \leadsto$
					      $\exframe{\tau_{\text{body}}}{\exstate{\rho_2}{\emptyset}{\emptyCache{}}{0}}{\idcontinuation{}}$
				      }
			      \end{prooftree}
			      \begin{itemize}[label=]
				      \item $\descriptor{d'_{l1}} = \descriptiveList{1}{\text{"l1"}}{0}{2}{T_{\text{int}}}$ (updated descriptor at offset 1)
				      \item $\symbolic{v_1} = \texttt{Z3.Int.const("l1.1")}$ (fresh symbolic value for position 0)
				      \item $\rho_2 = \rho_1[\IRvar{x} \mapsto \symbolic{v_1}, \IRvar{xs} \mapsto \descriptor{d'_{l1}}]$
				      \item $\tau_{\text{body}} = \IRtick{1}{\IRcons{\IRcall{mult}{[\IRvar{x}, \IRvar{l_2}]}}{\IRcall{dyade}{[\IRvar{xs}, \IRvar{l_2}]}}}$
			      \end{itemize}
			      \begin{prooftree}
				      \AxiomC{$\rho_0(\IRfunName{dyade}) = \closure{[\IRvar{l_1}, \IRvar{l_2}]}{\tau_{\text{dyade}}}{\rho_{\text{init}}}$}
				      \RightLabel{\textsc{Call-Eval}}
				      \UnaryInfC{
					      $\exframe{\IRcall{dyade}{[\descriptor{d_{l1}}, \descriptor{d_{l2}}]}}{\exstate{\rho_0}{\emptyset}{\emptyCache{}}{0}}{\idcontinuation{}} \leadsto$
					      $\exframe{\tau_{\text{dyade}}}{\exstate{\rho_{\text{init}}[\IRvar{l_1} \mapsto \descriptor{d_{l1}}, \IRvar{l_2} \mapsto \descriptor{d_{l2}}]}{\emptyset}{\emptyCache{}}{0}}{\idcontinuation{}}$
				      }
			      \end{prooftree}
			      \begin{itemize}[label=]
				      \item $\tau_{\text{dyade}} = \IRsplit{\IRvar{l_1}}{\IRlist{}}{\IRvar{x}}{\IRvar{xs}}{\tau_{\text{body}}}$
			      \end{itemize}
			      \begin{prooftree}
				      \AxiomC{\IRvar{l_2} $\in dom(\rho_0)$}
				      \RightLabel{\textsc{Var}}
				      \UnaryInfC{$\exframe{\IRvar{l_2}}{\exstate{\rho_0}{\emptyset}{\emptyCache{}}{0}}{\idcontinuation{}} \leadsto (\descriptor{d_{l2}}, \emptyset, \emptyCache{}, 0)$}
				      \RightLabel{\textsc{Call-Step}}
				      \UnaryInfC{$\exframe{\IRcall{dyade}{[\descriptor{d_{l1}}, \IRvar{l_2}]}}{\exstate{\rho_0}{\emptyset}{\emptyCache{}}{0}}{\idcontinuation{}} \leadsto \exframe{\IRcall{dyade}{[\descriptor{d_{l1}}, \descriptor{d_{l2}}]}}{\exstate{\rho_0}{\emptyset}{\emptyCache{}}{0}}{\emptyCache{}}{\idcontinuation{}}$}
			      \end{prooftree}
			      \begin{itemize}[label=]
				      \item $\descriptor{d_{l2}} = \descriptiveList{0}{\text{"l2"}}{0}{1}{T_{\text{int}}}$ (list descriptor at offset 0)
			      \end{itemize}
			      \begin{prooftree}
				      \AxiomC{\IRvar{l_1} $\in dom(\rho_0)$}
				      \RightLabel{\textsc{Var}}
				      \UnaryInfC{$\exframe{\IRvar{l_1}}{\exstate{\rho_0}{\emptyset}{\emptyCache{}}{0}}{\idcontinuation{}} \leadsto (\descriptor{d_{l1}}, \emptyset, 0)$}
				      \RightLabel{\textsc{Call-Step}}
				      \UnaryInfC{$\exframe{\IRcall{dyade}{[\IRvar{l_1}, \IRvar{l_2}]}}{\exstate{\rho_0}{\emptyset}{\emptyCache{}}{0}}{\idcontinuation{}}
						      \leadsto
						      \exframe{\IRcall{dyade}{[\descriptor{d_{l1}}, \IRvar{l_2}]}}{\exstate{\rho_0}{\emptyset}{\emptyCache{}}{0}}{\idcontinuation{}}$}
			      \end{prooftree}
			      \begin{itemize}[label=]
				      \item $\descriptor{d_{l1}} = \descriptiveList{0}{\text{"l1"}}{0}{2}{T_{\text{int}}}$ (list descriptor at offset 0)
				      \item $S_{\emptyset} = ([],[],[])$ (empty cache)
			      \end{itemize}
		      \end{inferencetree}
		      \caption{An exemplary symbolic execution of \program{Dyade}.}
		      \label{fig:dyade-execution-call}
	      \end{figure}
	      Once all arguments are values (in this case, list descriptors), rule \textsc{Call-Eval} applies:
	      \[
		      \exframe{\IRcall{\IRfunName{dyade}}{[\descriptor{d_{l1}}, \descriptor{d_{l2}}]}}{\exstate{\rho_0}{\emptyset}{([],[],[])}{0}}{\idcontinuation{}} \leadsto
		      \exframe{\tau_{\text{dyade}}}{\exstate{\rho_1}{\emptyset}{([],[],[])}{0}}{\idcontinuation{}}
	      \]
	      where \descriptor{d_{l1}} and \descriptor{d_{l2}} are the list descriptors, and $\rho_1 = \rho_{\text{init}}[\IRvar{l_1} \mapsto \descriptor{d_{l1}}, \IRvar{l_2} \mapsto \descriptor{d_{l2}}]$.
	\item \textbf{Split operation (Rule \textsc{Split-Var-Nonempty}).}
	      \[
		      \tau_{\text{dyade}} = \exframe{\IRsplit{\IRvar{l_1}}{[]}{x}{xs}{\tau_{\text{body}}}}{\exstate{\rho_1}{\emptyset}{([],[],[])}{0}}{\idcontinuation{}}
	      \]
	      Evaluating $\tau_v = \IRvar{l_1}$ yields the list descriptor $\descriptor{d_{l1}} = \descriptiveList{0}{\text{"l1"}}{0}{2}{T_{\text{int}}}$ at offset 0. Since $0 < M = 2$, rule \textsc{Split-Var-Nonempty} is applicable. It creates a fresh symbolic Z3 value $\symbolic{v_1} = \texttt{Z3.Int.const("l1.1")}$ for position 0 and updates the descriptor to $\descriptor{d'_{l1}} = \descriptiveList{1}{\text{"l1"}}{0}{2}{T_{\text{int}}}$ with offset incremented to 1.
	      Applying rule \textsc{Split-Var-Nonempty}:
	      \[
		      \leadsto \exframe{\tau_{\text{body}}}{\exstate{\rho_2}{\emptyset}{([],[],[])}{0}}{\idcontinuation{}}
	      \]
	      where $\rho_2 = \rho_1[\IRvar{x} \mapsto \symbolic{v_1}, \IRvar{xs} \mapsto \descriptor{d'_{l1}}]$ and $\tau_{\text{body}} = \IRtick{1}{\tau_{\text{cons}}}$. The pattern variable $x$ is bound to the symbolic value $\symbolic{v_1}$, and $xs$ is bound to the tail descriptor at offset 1.
	\item \textbf{Tick (Rule \textsc{Tick}).}
	      \[
		      \exframe{\IRtick{1}{\tau_{\text{cons}}}}{\exstate{\rho_2}{\emptyset}{([],[],[])}{0}}{\idcontinuation{}}
	      \]
	      Rule \textsc{Tick} increments the cost counter from $\mathcal{C} = 0$ to $\mathcal{C}' = 1$ and returns $\tau_{\text{cons}}$. The path constraints remain unchanged: $\assertionSet{}' = \emptyset$.
	\item \textbf{Cons operation (Rule \textsc{Cons}).}
	      \[
		      \exframe{\IRcons{\IRcall{\IRfunName{mult}}{[\IRvar{x}, \IRvar{l_2}]}}{\IRcall{\IRfunName{dyade}}{[\IRvar{xs}, \IRvar{l_2}]}}}{\exstate{\rho_2}{\emptyset}{([],[],[])}{1}}{\idcontinuation{}}
	      \]
	      Rule \textsc{Cons} evaluates the head and tail expressions sequentially:
	      \begin{itemize}\small
		      \item First, evaluate $\tau_h = \IRcall{\IRfunName{mult}}{[\IRvar{x}, \IRvar{l_2}]}$ with state $\exstate{\rho_2}{\emptyset}{([],[],[])}{1}$.
		      \item The arguments evaluate to $\rho_2(\IRvar{x}) = \symbolic{v_1}$ and $\rho_2(\IRvar{l_2}) = \descriptor{d_{l2}} = \descriptiveList{0}{\text{"l2"}}{0}{1}{T_{\text{int}}}$.
		      \item Calling \IRfunName{mult} with these arguments splits $\descriptor{d_{l2}}$, creating $\symbolic{w_1} = \texttt{Z3.Int.const("l2.1")}$, and computes $[\symbolic{v_1} \cdot \symbolic{w_1}]$ with accumulated cost $\mathcal{C}_h = 2$.
	      \end{itemize}
	\item \textbf{Tail evaluation (recursive call).}
	      Continuing the \textsc{Cons} rule, evaluate the tail term:
	      \[
		      \langle \IRcall{\IRfunName{dyade}}{[\IRvar{xs}, \IRvar{l_2}]}, \exstate{\rho_2}{\emptyset}{([],[],[])}{2}, \kappa_t \rangle
	      \]
	      The arguments evaluate to:
	      \begin{itemize}
		      \item $\rho_2(\IRvar{xs}) = \descriptor{d'_{l1}} = \descriptiveList{1}{\text{"l1"}}{0}{2}{T_{\text{int}}}$ (tail descriptor at offset 1)
		      \item $\rho_2(\IRvar{l_2}) = \descriptor{d_{l2}} = \descriptiveList{0}{\text{"l2"}}{0}{1}{T_{\text{int}}}$
	      \end{itemize}
	      This recursively applies steps 2--4: split $\descriptor{d'_{l1}}$ creates $\symbolic{v_2} = \texttt{Z3.Int.const("l1.2")}$, tick increments cost to 3, call \texttt{mult} produces $[\symbolic{v_2} \cdot \symbolic{w_1}]$, then call \texttt{dyade} on the empty list (offset 2 $\geq$ maxLength 2) returns $[]$. The final tail value is $v_t = [[\symbolic{v_2} \cdot \symbolic{w_1}]]$ with cost $\mathcal{C}_t = 3$.
	\item \textbf{Final result.}
	      Completing the \textsc{Cons} rule application:
	      \begin{itemize}
		      \item Head value: $v_h = [\symbolic{v_1} \cdot \symbolic{w_1}]$ with cost $\mathcal{C}_h = 2$
		      \item Tail value: $v_t = [[\symbolic{v_2} \cdot \symbolic{w_1}]]$ with cost $\mathcal{C}_t = 3$
		      \item Path constraints: $\assertionSet{} = \emptyset$ (no constraints added)
	      \end{itemize}
	      The \textsc{Cons} rule constructs the final list $v_h :: v_t = [[\symbolic{v_1} \cdot \symbolic{w_1}], [\symbolic{v_2} \cdot \symbolic{w_1}]]$ and invokes the continuation
	      $\idcontinuation{}([[\symbolic{v_1} \cdot \symbolic{w_1}], [\symbolic{v_2} \cdot \symbolic{w_1}]], \emptyset, 3)$.
\end{enumerate}
\section{Experiments}
\label{appendix:experiments}
This section provides detailed analyses of selected benchmark programs from our experimental evaluation (Section~\ref{section:experimental-evaluation}).
Recall that we evaluated our approach on $46$ benchmarks from the \texttt{RaML} suite and Pham et al.~\cite{Pham:etal:2025},
comparing our inferred bounds against \texttt{RaML}'s static bounds and the known asymptotic complexities.
Here we present visualizations and detailed case studies for representative examples,
particularly focusing on programs where \texttt{RaML} reports infeasibility or where our method achieves notable results.
We have included plots for selected programs from Table~\ref{tab:experiments2} in Figures~\ref{fig:plots-examples-1}, Figure~\ref{fig:plots-examples-1} and~\ref{fig:plots-examples-3}. Each plot shows the costs (black dots), the static analysis prediction (blue surface), and the dynamic analysis prediction (red surface).
\paragraph{fib\_memo.}
The \program{FibMemo} program
implements memoised Fibonacci computation using a state monad to maintain a cache of previously computed values. The naive recursive Fibonacci has exponential complexity $O(2^n)$ due to redundant recomputation. Memoization eliminates this redundancy: when computing $\raml{fib n}$, the function \texttt{memoM} first checks whether the result is already in the cache (via \texttt{find}, which performs a linear search through the cache list), and if not, recursively computes it and inserts it into the cache. For input $n$, at most $n$ distinct Fibonacci values need to be computed (from 0 to $n$). Each computation requires checking the cache, which grows linearly with the number of cached values, resulting in $\sum_{i=0}^{n-1} i = \frac{n(n-1)}{2}$ cache lookups, yielding $\Theta(n^2)$ complexity. Our approach successfully infers the bound $2 + 11.95n + 1.35n^2$, correctly capturing the quadratic behaviour. The linear coefficient (11.95) accounts for the base cost of monadic operations (\raml{bind}, \raml{evalState}, \raml{fibM} itself), while the quadratic coefficient (1.35) captures the cache lookup overhead. \texttt{RaML} fails on this example because the state monad pattern, involving higher-order functions (\raml{bind}, \raml{liftM2}), closures capturing mutable state (functionally via state passing), and the complex interplay between memoization logic and recursive calls, exceeds the expressive power of its type-based resource analysis.
\begin{figure}[b]
	\centering
	\begin{subfigure}[b]{0.3\textwidth}
		\centering
		\includegraphics[width=\textwidth]{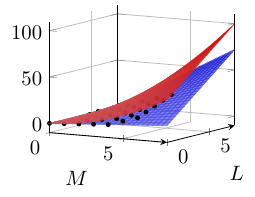}
		\subcaption{\program{Concat}.}\label{fig:concat-example}
	\end{subfigure}
	\hfill
	\begin{subfigure}[b]{0.3\textwidth}
		\centering
		\includegraphics[width=\textwidth]{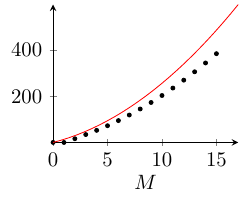}
		\subcaption{\program{FibMemo}.}\label{fig:fib_memo-example}
	\end{subfigure}
	\hfill
	\begin{subfigure}[b]{0.3\textwidth}
		\centering
		\includegraphics[width=\textwidth]{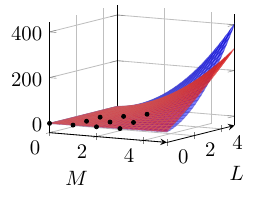}
		\subcaption{\program{Flatten}.}\label{fig:flatten-example}
	\end{subfigure}
	\vspace{1em}
	\begin{subfigure}[b]{0.3\textwidth}
		\centering
		\includegraphics[width=\textwidth]{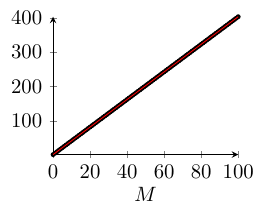}
		\subcaption{\program{FoldSum}.}\label{fig:foldsum-example}
	\end{subfigure}
	\hfill
	\begin{subfigure}[b]{0.3\textwidth}
		\centering
		\includegraphics[width=\textwidth]{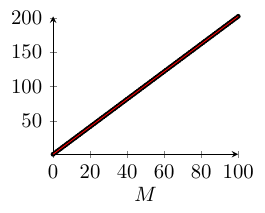}
		\subcaption{\program{Id}.}\label{fig:id-example}
	\end{subfigure}
	\hfill
	\begin{subfigure}[b]{0.3\textwidth}
		\centering
		\includegraphics[width=\textwidth]{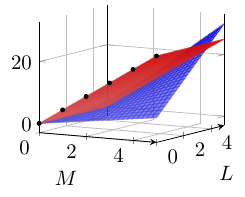}
		\subcaption{\program{MapAppend}.}\label{fig:map_append-example}
	\end{subfigure}
	\vspace{1em}
	\begin{subfigure}[b]{0.3\textwidth}
		\centering
		\includegraphics[width=\textwidth]{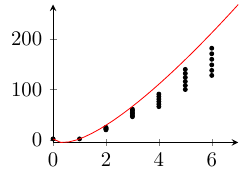}
		\subcaption{\program{MergesortDC}.}\label{fig:mergesort_dc-example}
	\end{subfigure}
	\hfill
	\begin{subfigure}[b]{0.3\textwidth}
		\centering
		\includegraphics[width=\textwidth]{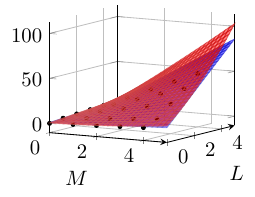}
		\subcaption{\program{LCS}.}\label{fig:lcs-example}
	\end{subfigure}
	\hfill
	\begin{subfigure}[b]{0.3\textwidth}
		\centering
		\includegraphics[width=\textwidth]{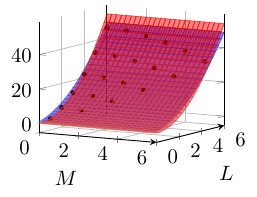}
		\subcaption{\program{QuickSelect}.}\label{fig:quickselect-example}
	\end{subfigure}
	\hfill
	\begin{subfigure}[b]{0.3\textwidth}
		\centering
		\includegraphics[width=\textwidth]{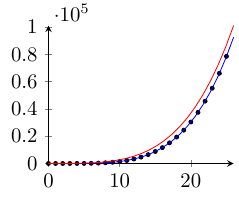}
		\subcaption{\program{Tuples}.}\label{fig:tuples-example}
	\end{subfigure}
	\vspace{1em}
	\begin{subfigure}[b]{0.3\textwidth}
		\centering
		\includegraphics[width=\textwidth]{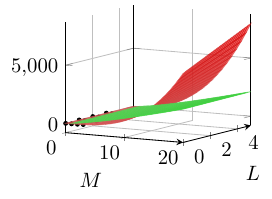}
		\subcaption{\program{AVLTree}.}\label{fig:avl-example}
	\end{subfigure}
	\hfill
	\begin{subfigure}[b]{0.3\textwidth}
		\centering
		\includegraphics[width=\textwidth]{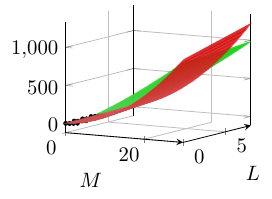}
		\subcaption{\program{RedBlackTree}.}\label{fig:redblacktree-example}
	\end{subfigure}
	\caption{Plots for selected programs from Table~\ref{tab:experiments2}, for which soundness can be validated through visual inspection. The black dots represent costs, the blue surface the static analysis prediction (where available), the red surface our dynamic analysis prediction, which upper-bounds all observations, and the green surface the ground truth bound taken from the artifact of \cite{Pham:etal:2025}.}
	\label{fig:plots-examples-1}
\end{figure}
\paragraph{flatten.}
The \program{Flatten} program flattens a binary tree (where each node contains a list of integers) into a single list, then sorts it using insertion sort. The tree is recursively traversed: for each \raml{Node(l, t1, t2)}, the node's list \raml{l} is appended to the concatenation of the flattened subtrees. The \raml{append} function executes 1 tick per element in its first argument. For a tree with $M$ nodes, each containing a list of maximum length $M_1$, the flattening produces a list of total length at most $M M_1$. The insertion sort then processes this list, which in the worst case requires $\sum_{i=1}^{M M_1-1} i = \frac{(M M_1)(M M_1-1)}{2}$ comparisons, yielding $\Theta(M^2 M_1^2)$ complexity. \texttt{RaML} correctly infers the bound $M_1 M + 0.5M_1 M^2 + 0.5M_1^2 M^2 + M$, where the dominant $0.5M_1^2 M^2$ term captures the insertion sort cost and the $M_1 M$ term represents the flattening overhead. Our approach infers $0.94M + 0.06M^2 + 0.33M M_1^2 + 2M^2 M_1$, which exhibits a different term ordering. The dominant term $2M^2 M_1$ suggests that within the explored input bounds, the worst-case quadratic sorting behaviour in the list length dimension ($M_1^2$) was not as pronounced as the quadratic behaviour in the tree size dimension ($M^2$). The $0.06M^2$ and $0.33M M_1^2$ terms still capture higher-order interactions. This structural difference arises from the particular distribution of tree shapes and list lengths in the symbolic execution, where the relationship between $M$ and the total flattened list length may have varied in ways that emphasised different complexity factors compared to the theoretical worst case.
\paragraph{foldsum.}
The \program{FoldSum} program with the main function \raml{foldsum l}, computes the sum of a list of integers using higher-order function composition. The function first maps \raml{plus} over the input list \raml{l}, creating a list of partially applied addition functions. It then uses \raml{foldr} with the composition combinator \raml{comp} to compose these functions into a single function, which is finally applied to 0. The execution proceeds in three phases: (i) \raml{map plus l} traverses the list, executing $M+1$ ticks for a list of length $M$ (including the base case), (ii) \raml{foldr comp id} traverses the resulting list of functions, executing $M+1$ ticks from the recursion plus $M$ ticks from the \raml{comp} calls, and (iii) applying the composed function to 0 triggers \raml{id} (1 tick) plus $M$ calls to the partially applied \raml{plus} functions, contributing $M$ additional ticks. Summing these contributions yields $(M+1) + (2M+1) + (M+1) = 4M + 3$ ticks. Our approach correctly infers the bound $3 + 4M$, matching the observed linear complexity. \texttt{RaML} fails on this example because the pattern of mapping a binary function to create partially applied closures, composing them via higher-order fold, and then triggering the accumulated computation by applying the result to an argument involves multiple levels of higher-order abstraction that exceed the expressive power of its type-based resource analysis. The analysis would need to track potential costs through unapplied function values and reason about the eventual composition structure, which is beyond the capabilities of the amortised analysis framework.
\paragraph{id.} The \program{Id} with the main function \raml{iterid n} takes a natural number $n$ (encoded as \raml{Zero} or \raml{S(n)}) and iteratively applies the successor function $n$ times to zero, computing the identity function. The auxiliary function \raml{iter f g x} executes 1 tick per recursive call, invoking \raml{f} on the result of \raml{iter f g x'} until reaching the base case \raml{Zero}, where it returns \raml{g}.
For \raml{iterid n}, the execution proceeds as follows: \raml{iter} is called $n+1$ times (including the base case), contributing $n+1$ ticks. Each of the $n$ recursive calls triggers \raml{compS}, which executes 1 tick per invocation, contributing $n$ ticks. Finally, the identity function \raml{id} is called once, contributing 1 tick. The exact cost is thus:
$
	\text{cost}(\texttt{iterid}, n) = (n+1) + n + 1 = 2n + 2.
$
Our approach infers the bound $2 + 2n$, matching the exact cost.
\texttt{RaML} cannot analyse this example, as the complex interplay of higher-order functions 
and function composition exceeds the expressive power of its type system.
\paragraph{map\_append.}
The \program{MapAppend} program takes a list of pairs (tuples) and applies \raml{append} to each pair, concatenating the two lists in each tuple. The \raml{append} function executes 1 tick for each element in the first list plus 1 tick for the initial pattern match. For a dataset containing $M$ pairs, where each first list has maximum length $M_1$, the total cost is $M(M_1+1) = M + MM_1$ ticks: each of the $M$ pairs incurs $(M_1+1)$ ticks during the append operation. \texttt{RaML} correctly infers this general bound as $M_1 M + M$. Our approach, however, infers the bound $5M$, which corresponds to the worst-case scenario within the bounded symbolic execution where $M_1$ was constrained to a maximum of 4. Substituting $M_1 = 4$ into the general formula yields $M(4+1) = 5M$, which precisely matches our prediction. This example illustrates a key difference between static analysis and our empirical approach: while \texttt{RaML} produces a parametric bound valid for any $M_1$, our method produces a concrete bound for the specific input size bounds explored during symbolic execution. For the analysed input range ($M_1 \leq 4$), our bound is exact; for larger inputs, the bound would need to be re-inferred with appropriately increased size bounds.
\paragraph{mergesort\_dc.}
The \program{MergesortDC} program implements with \raml{mergesort zs} merge sort using a generic higher-order \raml{divideAndConquer} combinator. Unlike the direct recursive mergesort implementation discussed earlier, this version abstracts the divide-and-conquer pattern: it takes as parameters (i) a predicate \raml{isDivisible} to determine when to subdivide a problem, (ii) a \raml{solve} function for base cases, (iii) a \raml{divide} function to split the problem, and (iv) a \raml{combine} function to merge sub-solutions. For sorting, \raml{divide} splits the input list into two halves by computing \raml{halve (length ys)}, then using \raml{take} and \raml{drop} to partition the list. At each recursion level, the cost includes: (i) computing the list length ($M$ ticks), (ii) halving the length ($\approx M/2$ ticks), (iii) taking the first half ($\approx 3M/2$ ticks, as each \raml{take} call invokes both \raml{head} and \raml{tail}), (iv) dropping the first half ($\approx M$ ticks), and (v) merging the sorted halves ($\approx M$ ticks). Summing these contributions yields approximately $6M$ ticks per level. Since the recursion has $\log_2(M)$ levels, the total cost is $\Theta(M \log M)$. Our approach infers the bound $2 + 13.57M \log(M)$, where the constant 13.57 captures the combined overhead of all operations across recursion levels, and the additive constant 2 accounts for the initial setup ticks in \raml{mergesort} and \raml{divideAndConquer}. \texttt{RaML} fails on this example because the generic higher-order divide-and-conquer pattern, parameterised by four function arguments and involving complex control flow through \raml{map} over the recursively generated sub-problems, exceeds the capabilities of its type-based amortised analysis framework.
\begin{figure}[t]
	\centering
	\begin{subfigure}[b]{0.3\textwidth}
		\centering
		\includegraphics[width=\textwidth]{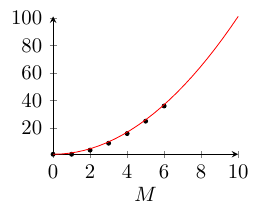}
		\subcaption{\program{BubbleSort}.}\label{fig:bubblesort-example}
	\end{subfigure}
	\hfill
	\begin{subfigure}[b]{0.3\textwidth}
		\centering
		\includegraphics[width=\textwidth]{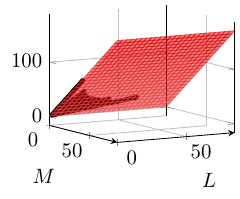}
		\subcaption{\program{DivBySub}.}\label{fig:div_by_sub-example}
	\end{subfigure}
	\hfill
	\begin{subfigure}[b]{0.3\textwidth}
		\centering
		\includegraphics[width=\textwidth]{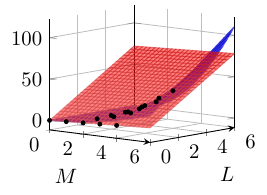}
		\subcaption{\program{Duplicates}.}\label{fig:duplicates-example}
	\end{subfigure}
	\vspace{1em}
	\begin{subfigure}[b]{0.3\textwidth}
		\centering
		\includegraphics[width=\textwidth]{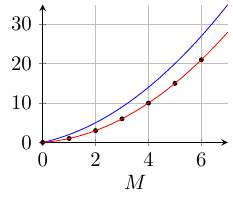}
		\subcaption{\program{MinSort}.}\label{fig:minsort-example}
	\end{subfigure}
	\hfill
	\begin{subfigure}[b]{0.3\textwidth}
		\centering
		\includegraphics[width=\textwidth]{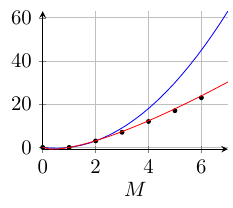}
		\subcaption{\program{MergeSort}.}\label{fig:mergesort-example}
	\end{subfigure}
	\hfill
	\begin{subfigure}[b]{0.3\textwidth}
		\centering
		\includegraphics[width=\textwidth]{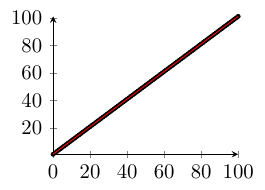}
		\subcaption{\program{RevDL}.}\label{fig:rev-dl-example}
	\end{subfigure}
	\vspace{1em}
	\begin{subfigure}[b]{0.3\textwidth}
		\centering
		\includegraphics[width=\textwidth]{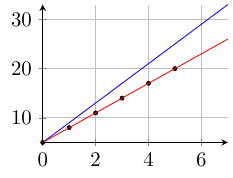}
		\subcaption{\program{MSS}.}\label{fig:mss-example}
	\end{subfigure}
	\hfill
	\begin{subfigure}[b]{0.3\textwidth}
		\centering
		\includegraphics[width=\textwidth]{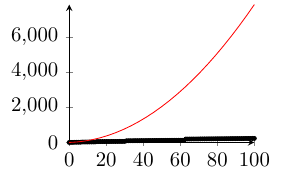}
		\subcaption{\program{Round}.}\label{fig:round-example}
	\end{subfigure}
	\vspace{1em}
	\begin{subfigure}[b]{0.3\textwidth}
		\centering
		\includegraphics[width=\textwidth]{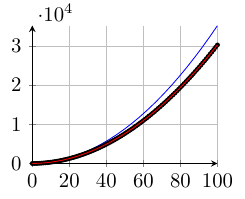}
		\subcaption{\program{Queue}.}\label{fig:queue-example}
	\end{subfigure}
	\hfill
	\begin{subfigure}[b]{0.3\textwidth}
		\centering
		\includegraphics[width=\textwidth]{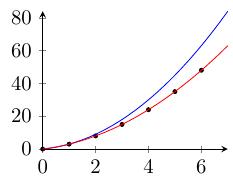}
		\subcaption{\program{SplitAndSort}.}\label{fig:splitandsort-example}
	\end{subfigure}
	\caption{Measured vs.\ predicted costs for selected programs from Table~\ref{tab:experiments2}. The black dots represent costs, the blue surface the static analysis prediction, and the red surface the dynamic analysis prediction. For these programs, we either have tighter ground truth bounds or \texttt{RaML} reported infeasible.}
	\label{fig:plots-examples-3}
\end{figure}
\paragraph{div\_by\_sub.}
The main function \raml{div x y} computes the integer division \(x \mathbin{//} y\) by repeated subtraction of \(y\) from \(x\). The function only terminates for inputs where $x \bmod y = 0$, otherwise raising an exception.
To determine the exact cost, observe that \raml{sub x y} executes 1 tick initially, then recursively decrements both arguments $y$ times, resulting in a cost of $y + 1$ ticks. The main function \raml{div x y} executes 1 tick initially, then calls \raml{sub (x-1) (y-1)} (costing $y$ ticks), and recursively calls \raml{div} on the result. For $x = k \cdot y$, this process repeats $k = x \mathbin{//} y$ times, yielding:
$
	\text{cost}(\raml{div}, x, y) = 1 + \sum_{i=0}^{k-1}(1 + y) = 1 + k(1 + y) = 1 + \frac{x}{y} + x
$
Since $y \geq 1$ implies $x \mathbin{//} y \leq x$, we have $1 + x + (x \mathbin{//} y) \leq 1 + 2x$. Our tool correctly infers the upper bound $1 + 2x$, which is tight for $y = 1$.
\texttt{RaML} fails on this example, likely due to the combination of nested recursion and exception handling, which challenges type-based resource analysis.
\paragraph{rev-dl.} The \program{RevDL} function reverses a list using continuation-passing style to achieve linear time complexity without explicit accumulator. The helper function \raml{walk} recursively traverses the input list: for each \raml{Cons(x, xs')}, it composes the result of \texttt{walk(xs')} with a closure that prepends \raml{x}. The \raml{comp} function, which performs function composition, executes 1 tick per call.
For a list of length $n$, \raml{walk} is invoked $n+1$ times (once per element plus the base case \raml{Nil}), but only the \raml{comp} calls contribute to the cost, occurring $n$ times (once per \raml{Cons} node). The main function \raml{rev} executes 1 tick initially. The exact cost $\text{cost}(\raml{rev-dl}, n)$ is $1 + n$.
Our approach correctly infers the bound $1 + n$.
\texttt{RaML} fails on this example because the continuation-passing style with nested closures and higher-order composition (\raml{walk} returns a function, not a list) makes it difficult for the type-based analysis to track resource usage through the composed continuations.
\paragraph{round.}
The \program{Round} function exhibits a recursive structure reminiscent of divide-and-conquer algorithms. It alternately halves the list tail (via \raml{half}, which extracts every other element) and doubles the result (via \raml{double}, which duplicates each element), creating a tree-like recursion pattern. For an input list of length $n$, \raml{round} recursively processes a list of approximately $\lfloor(n-1)/2\rfloor$ elements, leading to $\Theta(\log n)$ recursion depth. At each recursive level, \raml{half} and \raml{double} contribute linear work proportional to the list length at that level.
The exact cost depends on the binary representation of $n$: each recursion step halves the remaining length, and the costs accumulate based on the path through the recursion tree. The total cost can be expressed as $4n - 3 \cdot \text{popcount}(n+1) + 4 \cdot 2^{\lfloor \log_2(n+1) \rfloor}$, where $\text{popcount}$ counts the number of 1-bits in the binary representation of $n+1$.
Our tool predicts $3 + 3.25n + 0.75n^2$ (with an increased penalty constant to handle the irregular growth pattern), capturing the dominant linear and quadratic components. The quadratic term conservatively upper-bounds the logarithmic and bit-pattern-dependent fluctuations.
\texttt{RaML} cannot analyse this example, as the irregular recursion pattern and the interplay between halving and doubling operations challenge its type-based resource analysis framework.
\paragraph{splitandsort.}
The \program{SplitAndSort} program partitions a list of key-value pairs into sublists by key, then sorts each sublist using quicksort. The function \raml{split} recursively inserts each element into the appropriate sublist (calling \raml{insert}, which may traverse the entire accumulated result), while \raml{sortAll} applies quicksort to each sublist. In the worst case where all elements share the same key, \raml{split} creates a single sublist, and the cost combines: (i) the insertion cost of $\sum_{i=0}^{n-1} i = \frac{n(n-1)}{2}$ ticks from building the list, and (ii) quicksort's worst-case cost of $\Theta(n^2)$ on the resulting single sublist of $n$ elements.
Our tool infers the bound $4 + 25.78n + 5n^2 + 0.22n^3$ which is sound, but not tight.

\bibliographystyle{splncs04}
\bibliography{references}
\end{document}

%% file: include/dyade/data.tex
\addplot3[
    only marks,
    mark=*,
    mark size=1.5pt,
    color=red,
] coordinates {
(0.0,0.0,0.0)
(1.0,0.0,1.0)
(1.0,1.0,2.0)
(1.0,2.0,3.0)
(1.0,3.0,4.0)
(1.0,4.0,5.0)
(1.0,5.0,6.0)
(1.0,6.0,7.0)
(1.0,7.0,8.0)
(1.0,8.0,9.0)
(1.0,9.0,10.0)
(1.0,10.0,11.0)
(1.0,11.0,12.0)
(1.0,12.0,13.0)
(1.0,13.0,14.0)
(1.0,14.0,15.0)
(1.0,15.0,16.0)
(2.0,0.0,2.0)
(2.0,1.0,4.0)
(2.0,2.0,6.0)
(2.0,3.0,8.0)
(2.0,4.0,10.0)
(2.0,5.0,12.0)
(2.0,6.0,14.0)
(2.0,7.0,16.0)
(2.0,8.0,18.0)
(2.0,9.0,20.0)
(2.0,10.0,22.0)
(2.0,11.0,24.0)
(2.0,12.0,26.0)
(2.0,13.0,28.0)
(2.0,14.0,30.0)
(2.0,15.0,32.0)
(3.0,0.0,3.0)
(3.0,1.0,6.0)
(3.0,2.0,9.0)
(3.0,3.0,12.0)
(3.0,4.0,15.0)
(3.0,5.0,18.0)
(3.0,6.0,21.0)
(3.0,7.0,24.0)
(3.0,8.0,27.0)
(3.0,9.0,30.0)
(3.0,10.0,33.0)
(3.0,11.0,36.0)
(3.0,12.0,39.0)
(3.0,13.0,42.0)
(3.0,14.0,45.0)
(3.0,15.0,48.0)
(4.0,0.0,4.0)
(4.0,1.0,8.0)
(4.0,2.0,12.0)
(4.0,3.0,16.0)
(4.0,4.0,20.0)
(4.0,5.0,24.0)
(4.0,6.0,28.0)
(4.0,7.0,32.0)
(4.0,8.0,36.0)
(4.0,9.0,40.0)
(4.0,10.0,44.0)
(4.0,11.0,48.0)
(4.0,12.0,52.0)
(4.0,13.0,56.0)
(4.0,14.0,60.0)
(4.0,15.0,64.0)
(5.0,0.0,5.0)
(5.0,1.0,10.0)
(5.0,2.0,15.0)
(5.0,3.0,20.0)
(5.0,4.0,25.0)
(5.0,5.0,30.0)
(5.0,6.0,35.0)
(5.0,7.0,40.0)
(5.0,8.0,45.0)
(5.0,9.0,50.0)
(5.0,10.0,55.0)
(5.0,11.0,60.0)
(5.0,12.0,65.0)
(5.0,13.0,70.0)
(5.0,14.0,75.0)
(5.0,15.0,80.0)
(6.0,0.0,6.0)
(6.0,1.0,12.0)
(6.0,2.0,18.0)
(6.0,3.0,24.0)
(6.0,4.0,30.0)
(6.0,5.0,36.0)
(6.0,6.0,42.0)
(6.0,7.0,48.0)
(6.0,8.0,54.0)
(6.0,9.0,60.0)
(6.0,10.0,66.0)
(6.0,11.0,72.0)
(6.0,12.0,78.0)
(6.0,13.0,84.0)
(6.0,14.0,90.0)
(6.0,15.0,96.0)
(7.0,0.0,7.0)
(7.0,1.0,14.0)
(7.0,2.0,21.0)
(7.0,3.0,28.0)
(7.0,4.0,35.0)
(7.0,5.0,42.0)
(7.0,6.0,49.0)
(7.0,7.0,56.0)
(7.0,8.0,63.0)
(7.0,9.0,70.0)
(7.0,10.0,77.0)
(7.0,11.0,84.0)
(7.0,12.0,91.0)
(7.0,13.0,98.0)
(7.0,14.0,105.0)
(7.0,15.0,112.0)
(8.0,0.0,8.0)
(8.0,1.0,16.0)
(8.0,2.0,24.0)
(8.0,3.0,32.0)
(8.0,4.0,40.0)
(8.0,5.0,48.0)
(8.0,6.0,56.0)
(8.0,7.0,64.0)
(8.0,8.0,72.0)
(8.0,9.0,80.0)
(8.0,10.0,88.0)
(8.0,11.0,96.0)
(8.0,12.0,104.0)
(8.0,13.0,112.0)
(8.0,14.0,120.0)
(8.0,15.0,128.0)
(9.0,0.0,9.0)
(9.0,1.0,18.0)
(9.0,2.0,27.0)
(9.0,3.0,36.0)
(9.0,4.0,45.0)
(9.0,5.0,54.0)
(9.0,6.0,63.0)
(9.0,7.0,72.0)
(9.0,8.0,81.0)
(9.0,9.0,90.0)
(9.0,10.0,99.0)
(9.0,11.0,108.0)
(9.0,12.0,117.0)
(9.0,13.0,126.0)
(9.0,14.0,135.0)
(9.0,15.0,144.0)
(10.0,0.0,10.0)
(10.0,1.0,20.0)
(10.0,2.0,30.0)
(10.0,3.0,40.0)
(10.0,4.0,50.0)
(10.0,5.0,60.0)
(10.0,6.0,70.0)
(10.0,7.0,80.0)
(10.0,8.0,90.0)
(10.0,9.0,100.0)
(10.0,10.0,110.0)
(10.0,11.0,120.0)
(10.0,12.0,130.0)
(10.0,13.0,140.0)
(10.0,14.0,150.0)
(10.0,15.0,160.0)
(11.0,0.0,11.0)
(11.0,1.0,22.0)
(11.0,2.0,33.0)
(11.0,3.0,44.0)
(11.0,4.0,55.0)
(11.0,5.0,66.0)
(11.0,6.0,77.0)
(11.0,7.0,88.0)
(11.0,8.0,99.0)
(11.0,9.0,110.0)
(11.0,10.0,121.0)
(11.0,11.0,132.0)
(11.0,12.0,143.0)
(11.0,13.0,154.0)
(11.0,14.0,165.0)
(11.0,15.0,176.0)
(12.0,0.0,12.0)
(12.0,1.0,24.0)
(12.0,2.0,36.0)
(12.0,3.0,48.0)
(12.0,4.0,60.0)
(12.0,5.0,72.0)
(12.0,6.0,84.0)
(12.0,7.0,96.0)
(12.0,8.0,108.0)
(12.0,9.0,120.0)
(12.0,10.0,132.0)
(12.0,11.0,144.0)
(12.0,12.0,156.0)
(12.0,13.0,168.0)
(12.0,14.0,180.0)
(12.0,15.0,192.0)
(13.0,0.0,13.0)
(13.0,1.0,26.0)
(13.0,2.0,39.0)
(13.0,3.0,52.0)
(13.0,4.0,65.0)
(13.0,5.0,78.0)
(13.0,6.0,91.0)
(13.0,7.0,104.0)
(13.0,8.0,117.0)
(13.0,9.0,130.0)
(13.0,10.0,143.0)
(13.0,11.0,156.0)
(13.0,12.0,169.0)
(13.0,13.0,182.0)
(13.0,14.0,195.0)
(13.0,15.0,208.0)
(14.0,0.0,14.0)
(14.0,1.0,28.0)
(14.0,2.0,42.0)
(14.0,3.0,56.0)
(14.0,4.0,70.0)
(14.0,5.0,84.0)
(14.0,6.0,98.0)
(14.0,7.0,112.0)
(14.0,8.0,126.0)
(14.0,9.0,140.0)
(14.0,10.0,154.0)
(14.0,11.0,168.0)
(14.0,12.0,182.0)
(14.0,13.0,196.0)
(14.0,14.0,210.0)
(14.0,15.0,224.0)
(15.0,0.0,15.0)
(15.0,1.0,30.0)
(15.0,2.0,45.0)
(15.0,3.0,60.0)
(15.0,4.0,75.0)
(15.0,5.0,90.0)
(15.0,6.0,105.0)
(15.0,7.0,120.0)
(15.0,8.0,135.0)
(15.0,9.0,150.0)
(15.0,10.0,165.0)
(15.0,11.0,180.0)
(15.0,12.0,195.0)
(15.0,13.0,210.0)
(15.0,14.0,225.0)
(15.0,15.0,240.0)
};

%% file: include/residuals/1.tex
\begin{tikzpicture}[scale=0.8]
    \begin{axis}[
        view={20}{22},          
        xlabel={input size},
        ylabel={cost},
        width=6cm,
        height=4.5cm,
        axis x line=bottom,
        axis y line=left,
    ]

    \addplot[
        only marks,
        mark=*,
        mark size=1.2pt,
        color=black
    ] coordinates {
    (0.0,1.0)
    (1.0,1.0)
    (2.0,4.0)
    (3.0,9.0)
    (4.0,16.0)
    (5.0,25.0)
    (6.0,36.0)
    };

    \addplot[
        red,
        domain=0:8,
        domain y=0:8
    ]
    {1 + 6*x};
    \end{axis}
\end{tikzpicture}

%% file: include/residuals/2.tex
\begin{tikzpicture}[scale=0.8]
    \begin{axis}[
        view={20}{22},          
        xlabel={input size},
        ylabel={cost},
        width=6cm,
        height=4.5cm,
        axis x line=bottom,
        axis y line=left,
    ]

    \addplot[
        only marks,
        mark=*,
        mark size=1.2pt,
        color=black
    ] coordinates {
    (0.0,1.0)
    (1.0,1.0)
    (2.0,4.0)
    (3.0,9.0)
    (4.0,16.0)
    (5.0,25.0)
    (6.0,36.0)
    };

    \addplot[
        red,
        domain=0:7,
        domain y=0:7
    ]
    {1 + 1.2*x^2};
    \end{axis}
\end{tikzpicture}